\documentclass[11pt]{article}
\usepackage{amsmath,amsthm,amssymb,hyperref, color,fullpage}
\newcommand{\remove}[1]{}
\hypersetup{
	colorlinks=true,
	linkcolor=blue,%
	citecolor=blue,
	linktoc=page
}

\newtheorem{thm}{Theorem}[section]
\newtheorem{claim}[thm]{Claim}
\newtheorem{lem}[thm]{Lemma}
\newtheorem{define}[thm]{Definition}
\newtheorem{cor}[thm]{Corollary}

\newtheorem{example}[thm]{Example}

\newtheorem{construction}[thm]{Construction}


\def\F{{\mathbb{F}}}

\def\R{{\mathbb{R}}}

\def\cS{{\cal S}}

\def\cC{{\cal C}}
\def\cF{{\cal F}}
\def\cO{{\cal O}}
\def\cM{\mathcal{M}}

\def\cS{{\mathcal S}}
\def\cE{{\mathcal E}}

\def\E{{\mathbb E}}

\def\_{\,\,\,\,\,}
\def\prob{{\mathbf{Pr}}}

\def\reach{\textsf{Rea}}

\def\rank{\textsf{rank}}
\def\poly{\textsf{poly}}

\def\dim{\textsf{dim}}

\def\spn{\textsf{span}}
\def\idim{\textsf{idim}}
\def\lt{\ltimes}
\def\inj{\hookrightarrow}
\def\es{\emptyset}
\def\frank{{\textsf{f-rank}}}
\def\LPcover{{\textsf{LP}^{cover}}}
\def\LPentropy{{\textsf{LP}^{entropy}}}

\def\vc{\textsf{VC-dim}}

\def\DG{{\cal D}}

\def\spans{{\, \models \, }}

\newcommand{\eps}{\epsilon}

\usepackage{pgf,tikz,pgfplots}
\usepackage{mathrsfs}
\usetikzlibrary{arrows}

\begin{document}

\title{{\em Spanoids} - an abstraction of spanning structures, and a barrier for LCCs}

\date{}

\author{Zeev Dvir\thanks{Department of Computer Science and Department of Mathematics,
Princeton University.
Email: \texttt{zeev.dvir@gmail.com}. Research supported by NSF CAREER award DMS-1451191 and NSF grant CCF-1523816.} \and
Sivakanth Gopi\thanks{Microsoft Research.
Email: \texttt{sigopi@microsoft.com}. Research supported by NSF CAREER award DMS-1451191 and NSF grant CCF-1523816.} \and
Yuzhou Gu\thanks{MIT.
Email: \texttt{yuzhougu@mit.edu}. Research supported by Jacobs Family
Presidential Fellowship.}\and
Avi Wigderson\thanks{Institute for Advanced Study.
Email: \texttt{avi@math.ias.edu}. Research was partially supported by NSF grant CCF-1412958.}
}

\maketitle

\begin{abstract}

We introduce a simple logical inference structure we call a {\em spanoid} (generalizing  the notion of a matroid), which captures well-studied problems in several areas. These include combinatorial geometry (point-line incidences), algebra (arrangements of hypersurfaces and ideals), statistical physics  (bootstrap percolation), network theory (gossip / infection processes) and coding theory.  We initiate a thorough investigation of spanoids, from computational and structural viewpoints, focusing on parameters relevant to the applications areas above and, in particular, to questions regarding Locally Correctable Codes (LCCs). 

One central parameter we study is the {\em rank} of a spanoid, extending the rank of a matroid and related to the dimension of codes. This leads to one main application of our work, establishing the first known barrier to improving the nearly 20-year old bound of Katz-Trevisan (KT) on the dimension of LCCs. On the one hand, we prove that the KT bound (and its more recent refinements) holds for the much more general setting of spanoid rank. On the other hand we show that there exist (random) spanoids whose rank matches these bounds. Thus, to significantly improve the known bounds one must step out of the spanoid framework.

Another parameter we explore is the {\em functional rank} of a spanoid, which captures the possibility of turning a given spanoid into an actual code. The question of the relationship between rank and functional rank is one of the main questions we raise  as it may reveal new avenues for constructing new LCCs (perhaps even matching the KT bound). As a first step, we develop an entropy relaxation of functional rank to create a small constant gap and amplify it by tensoring to construct a spanoid whose functional rank is smaller than rank by a polynomial factor. This is evidence that the entropy method we develop can prove polynomially better bounds than KT-type methods on the dimension of LCCs.

To facilitate the above results we also develop some basic structural results on spanoids including an equivalent formulation of spanoids as set systems and properties of spanoid products. We  feel that given these initial findings and their motivations, the abstract study of spanoids merits further investigation. We leave plenty of concrete open problems and directions.
\end{abstract}
\thispagestyle{empty}
\newpage
\tableofcontents
\thispagestyle{empty}
\newpage
\setcounter{page}{1}
 \section{Introduction}

This (somewhat long) introduction will be organized as follows. We begin by discussing Locally Correctable Codes (LCCs) and the main challenges they present as this was the primary motivation for this work. We proceed to define spanoids as an abstraction of LCCs, and state some results about their rank which hopefully illuminate the difficulties with LCCs in a new light. We continue by describing other natural settings in which the spanoid structure arises in the hope of motivating the questions raised in the context of LCCs and demonstrating their potential to contribute to research in other areas. We then turn to the investigation of functional rank of spanoids, which aims to convert them to actual LCCs. We conclude with describing some of the structural results about spanoids obtained here.

\subsection{Locally Correctable Codes} \label{sec-intro-lccs}
 
 The introduction of {\em locality} to coding theory has created a large body of research with wide-ranging applications and connections, from probabilistically checkable proofs, private information retrieval, program testing, fault-tolerant storage systems, and many others in computer science and mathematics. We will not survey these, and the reader may consult the surveys \cite{Yekhanin12,Dvir-incidences}. Despite much progress, many  basic questions regarding local testing, decoding and correcting of codes remain open. Here we focus on the efficiency of  locally {\em correctable} codes, that we now define. Note that the related, locally {\em decodable} codes (LDCs), will not be discussed in this paper, as our framework is not relevant to them (LCCs can be converted to LDCs with a small loss in parameters).

\begin{define}[$q$-LCCs]
A {\em code} $C \subseteq \Sigma^n$ is called a $q$-query locally correctable code with error-tolerance $\delta >0$, if for every $i \in [n]$ there is a family (called a {\em $q$-matching}) $M_i$, of at least $\delta n$ {\em disjoint} $q$-subsets of $[n]$, with the following {\em decodability} property. For every codeword $c\in C$, and for every $i \in [n]$, the value of $c_i$  is {\em determined}\footnote{Through some function that does not depend on the codeword $c$.} by the values of $c$ in coordinates $S$, for every $q$-subset $S$ in $M_i$.\footnote{ Our definition is a `zero-error' version of the standard definition. By `zero-error' we mean that for any codeword $c$, the value of $c_i$ can be determined correctly (without error) from the coordinates of $c$ at any $q$-subset $S$ in the matching $M_i$. A more general definition would say that $c_i$ can be computed from $c|_S$ with high probability, or even just slightly better than a random guess. Our definition is equivalent to the more general definition for linear codes, which comprise all of the interesting examples. We still allow `global' error in the sense that a large (constant) fraction of the coordinates can be corrupted (this global error tolerance is captured by the parameter $\delta$).} 
\end{define}

Intuitively, given a vector $c' \in \Sigma^n$ which results from corrupting less than (say) $\epsilon \delta n$ coordinates of a codeword $c\in C$, recovering $c_i$ for any given $i\in [n]$ is simple. Picking a random $q$-subset from $M_i$ and decoding $c_i$ according to it will succeed with probability at least $1-\eps$, as only an $\eps$-fraction of these $q$-subsets can be corrupted.

We focus in this paper on the most well-studied and well-motivated regime where both the ``query-complexity'' $q$ and the error-tolerance $\delta$ are constants. It is not hard to see that there are no LCCs with $q=1$  (unless the dimension is constant) and so we will start with the first interesting case of $q=2$.
A canonical example of  a 2-query LCC, which will serve us several times below, is the Hadamard code. Here $\Sigma = \F_2$. Let $k$ be any integer and set $n=2^k-1$. Let $A$ be the $k\times n$ matrix whose columns are all non-zero $k$-bit vectors. The Hadamard code $C_H \in \F_2^n$ is {\em generated} by $A$, namely $H$ consists of all linear combinations of rows of $A$. Since every column of $A$ can be written as a sum of (namely, {\em spanned} by) pairs of other columns in $(n-1)/2$ different ways, the matching $M_i$ suggest themselves, and so is the {\em linear} correcting procedure:  add the values in coordinates of the random pair $S$  from $M_i$ to determine the $i$th coordinate.

A central parameter of codes is their {\em rate}, capturing the redundancy between the {\em dimension}, namely the number of information bits encoded (here $k$), and the length of the codeword (here $n$). As in this paper this $k$ will be a tiny function of $n$, we will focus on the {\em dimension} itself. Note that in the example above, as in every {\em linear} code, this dimension is also the {\em rank} of the generating matrix.
In general codes, dimension may be fractional, and is defined as follows. All logarithms are in base 2 unless otherwise noted.

\begin{define}[Dimension and rate of a code]
For a general, possibly non-linear code $C\subseteq \Sigma^n$, we define the dimension of $C$  to be $ \dim(C) = \log |C| / \log |\Sigma|$. Note that this coincides with the linear algebraic definition of dimension when $C$ is a subspace. We refer to the ratio $\dim(C)/n$ as the `rate' of the code.
\end{define}

Note that while the Hadamard code ($C_H$) has fantastic local correction (only 2 queries), its  dimension is only  $k \sim \log n$, which is pathetic from a coding theory perspective. However, no better 2-query LCC can exist, regardless of the alphabet. 


\begin{thm}[2-LCCs]\label{2-LCCs}
For all large enough $n$ and over any alphabet:
\begin{itemize}
\item There exists a 2-query LCC of dimension $\Omega (\log n)$ and constant $\delta$ (Folklore: Hadamard code). 
\item Every 2-query LCC must have dimension at most $O(\log n)$ (for any constant $\delta$) \cite{BhattacharyyaGT17}. 
\end{itemize}
\end{thm}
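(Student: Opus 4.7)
For the existence (lower bound) half, the plan is to verify the Hadamard construction in full detail. Setting $n = 2^k - 1$, I would enumerate $\F_2^k \setminus \{0\} = \{v_1, \dots, v_n\}$ and define $C_H = \{(\langle x, v_i\rangle)_{i=1}^n : x \in \F_2^k\} \subseteq \F_2^n$. Since the $v_i$ span $\F_2^k$, distinct messages $x$ produce distinct codewords, giving $|C_H| = 2^k$ and $\dim(C_H) = k = \log(n+1) = \Omega(\log n)$. To build $M_i$, I would observe that the translation $u \mapsto u + v_i$ is a fixed-point-free involution on the $(n-1)$-element set $\F_2^k \setminus \{0, v_i\}$, partitioning it into $(n-1)/2$ disjoint pairs $\{v_j, v_\ell\}$ with $v_j + v_\ell = v_i$. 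The linearity of the inner product then supplies the universal decoding rule $c_i = c_j + c_\ell$, and $|M_i| = (n-1)/2 \geq \delta n$ for any constant $\delta < 1/2$, completing the bound.

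The upper bound is substantially deeper, and my plan is to follow the approach of \cite{BhattacharyyaGT17}. Given a 2-LCC $C \subseteq \Sigma^n$, the first move is to recast the matchings $\{M_i\}$ as the inference structure the paper later formalizes as a spanoid: each pair $\{j,\ell\} \in M_i$ is read as a rule $\{j,\ell\} \vdash i$, and $\dim(C) \leq |T|$ for any seed set $T \subseteq [n]$ from which every coordinate is eventually derivable by iterating these rules, since then the values at $T$ determine every codeword. The goal reduces to producing such a $T$ of size $O_\delta(\log n)$.

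A one-shot probabilistic argument -- take $T$ random of density $p$, bound the failure probability at each $i$ by $(1-p^2)^{\delta n}$ using disjointness of the pairs in $M_i$ -- only delivers $|T| = \tilde O(\sqrt n)$, because it exploits merely a single round of decoding. To reach $O(\log n)$ the inference closure must \emph{amplify}: if $D$ is the currently determined set of size $d$, then in the Hadamard-like regime each of the $\binom{d}{2}$ pairs inside $D$ can trigger a fresh derived coordinate, so one round of closure produces a set of size $\Omega(d^2)$, and iterating $O(\log \log n)$ rounds starting from a random seed of size $O(\log n)$ suffices to sweep all of $[n]$. The main obstacle, and the core technical content of BGT17, will be proving this quadratic amplification for \emph{arbitrary} 2-LCCs rather than just the Hadamard-like case. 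I would attack this via a structural dichotomy -- either many pairs in $D$ each determine a distinct new coordinate, or else the matchings are so concentrated on few pairs that one can shatter the structure by a direct double-counting argument -- and the disjointness of the $\delta n$ pairs within each $M_i$ is the essential lever (and precisely the feature that breaks at $q \geq 3$). The argument must treat the decoding maps $f_{i,j,\ell}: \Sigma^2 \to \Sigma$ as black boxes, since we make no assumption on the alphabet or the linearity of the code.
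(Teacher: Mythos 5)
Your first half is correct and is exactly the paper's (folklore) argument: the fixed-point-free involution $u\mapsto u+v_i$ on $\F_2^k\setminus\{0,v_i\}$ partitions it into $(n-1)/2$ disjoint decoding pairs, linearity gives the universal rule $c_i=c_j+c_\ell$, and the dimension is $k=\log(n+1)=\Omega(\log n)$. The reduction you use for the converse direction --- that $\dim(C)\le|T|$ for any seed set $T$ whose inference closure is all of $[n]$ --- is also the right one and is the same observation the paper makes when passing from LCCs to spanoids.

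The second half, however, has a genuine gap. You reduce the theorem to the claim that one round of inference closure applied to a determined set $D$ of size $d$ yields $\Omega(d^2)$ coordinates, and you explicitly flag that proving this ``quadratic amplification'' for arbitrary 2-LCCs is the core technical content --- but you then only gesture at a ``structural dichotomy'' without supplying the argument. That missing step \emph{is} the theorem: for a general matching structure nothing forces distinct pairs inside $D$ to decode distinct new coordinates (all $\binom{d}{2}$ pairs could point at the same handful of indices), and the disjointness of the $\delta n$ pairs within a single $M_i$ gives no control across pairs drawn from $D$ that belong to different matchings. So as written the proposal does not close.

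The paper's route is different and avoids amplification altogether; you may prefer to substitute it. Pass to the set representation (Lemma~\ref{lem-represent}), so that $\{j,\ell\}\in M_i$ means $S_j\cap S_\ell\subseteq S_i$. Repeatedly restrict by a single uniformly random coordinate $\ell$: after restricting to $S_\ell$, each rule $\{j,\ell\}\in M_i$ becomes the containment $S_j\cap S_\ell\subseteq S_i\cap S_\ell$, recorded as a directed edge $j\to i$. This edge distribution is $(2\delta,1)$-spread, and Lemma~\ref{lem-directed} shows that after $t=O(\frac{1}{\delta}\log n)$ independent rounds the accumulated digraph has only $O(1/\delta)$ source components with positive probability; the $t$ restricted coordinates plus one representative per source then span $[n]$, giving rank, and hence dimension, $O(\frac{1}{\delta}\log n)$ (Theorem~\ref{thm-2LCSs}). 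The geometric decay of the number of sources comes from a union-bound and expectation argument (Lemma~\ref{lem-alphabetaspread}), not from quadratic growth, which is precisely why it works for arbitrary matchings. To complete your writeup you must either import the full argument of the cited work or adopt this restriction/contraction proof; the amplification lemma you postulate is not available for free.
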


While we know precisely the optimal dimension for 2 queries, for $q\geq 3$ the gap between known upper and lower bounds is huge. The best lower bounds (constructions) are polylogarithmic: they come from Reed-Muller codes (using polynomials over finite fields), and yield  dimension $\Omega((\log n)^{q-1})$.

The best LCC upper bounds are only slightly sub-linear, giving  $\dim(C) \leq \widetilde{O}(n^{1-\frac{1}{q-1}})$ (up to logarithmic factors). This bound, which we will refer to as the Katz-Trevisan (KT) bound, is actually a slight refinement/improvement over the bound originally appearing in \cite{KatzT00} (which gave $n^{1 - 1/q}$). This improvement was implicit in several works (e.g. \cite{DinurK11,Woodruff07}) and is explicitly stated in \cite{IcelandS18}. We should also note that, over constant-size alphabets, Kerenidis and De-Wolf proved an even stronger bound using quantum information theory~\cite{KerenidisW04}. This exponential gap between the upper and lower bounds, which we formally state below, has not been narrowed in over two decades.\footnote{For LDCs better constructions than Reed-Muller codes are known, through the seminal works of \cite{Yekhanin08,Efremenko09}, but as mentioned we will not discuss them here. Still, the upper bounds for LDCs are the same as for LCCs, and obtained by the same KT-type argument, so the results in this paper may serve to better understand the (smaller, but still quite large) gap between upper and lower bounds in LDCs as well.} Explaining  this gap  (in the hope of finding ways to close it) is one major motivation of this work.

\begin{thm}[$q$-LCCs, $q\geq 3$]\label{3-LCCs}
For every fixed $q\geq 3$ and all large enough $n$:
\begin{itemize}
\item There exists a $q$-query LCC of dimension $\Omega((\log n)^{q-1})$ (with constant $\delta$ and alphabet of size $q+1$) (Reed-Muller codes, see e.g. the survey \cite{Yekhanin12}).
\item Every $q$-query LCC must have dimension at most $\widetilde{O}(n^{1-\frac{1}{q-1}})$ (for any constant $\delta$ and any alphabet) \cite{IcelandS18}.
\end{itemize}
\end{thm}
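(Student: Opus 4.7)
The plan is to prove the two parts separately: an explicit Reed-Muller construction for the existence statement, and a random-restriction argument (Katz-Trevisan style) for the upper bound, followed by a refinement that saves one in the exponent.

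For the existence, I would take $C$ to be the Reed-Muller code $\mathrm{RM}_p(m, q-1)$ consisting of the evaluations of $m$-variate polynomials of total degree at most $q-1$ over a field $\F_p$ with $p \geq q+1$ (so $|\Sigma| = O(q)$ is constant). The length is $n = p^m$, hence $m = \Theta(\log n)$, and the normalized dimension equals $\binom{m+q-1}{q-1} = \Theta(m^{q-1}) = \Theta((\log n)^{q-1})$. For the local correction at a coordinate $x \in \F_p^m$, I would use line interpolation: for each affine line $\ell$ through $x$, the restriction of a codeword to $\ell$ is univariate of degree $\leq q-1$, hence any $q$ values on $\ell \setminus \{x\}$ determine the value at $x$. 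Distinct lines through $x$ intersect only at $x$, so picking one $q$-subset per line gives $\Omega(p^{m-1}) = \Omega(n)$ disjoint $q$-subsets per coordinate, yielding constant error-tolerance $\delta$.

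For the upper bound I would start from the standard Katz-Trevisan sampling. Let $R \subseteq [n]$ be a random subset with each index included independently with probability $\rho$. Call $i \notin R$ \emph{recoverable} if some $q$-subset $S \in M_i$ lies entirely inside $R$; since $M_i$ contains $\delta n$ disjoint $q$-subsets, the probability that $i$ is not recoverable is at most $(1-\rho^q)^{\delta n} \leq \exp(-\delta n \rho^q)$. Choosing $\rho \gtrsim ((\log n)/(\delta n))^{1/q}$ and union-bounding, with positive probability every $i \notin R$ is recoverable, so the values on $R$ determine the full codeword and $\dim(C) \leq |R| = \widetilde{O}(n^{1-1/q})$ — this is the original Katz-Trevisan bound.

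To upgrade this to the stated $\widetilde{O}(n^{1-1/(q-1)})$ I would exploit that the correction relation on the $(q+1)$-set $\{i\} \cup S$ is a symmetric functional dependence: any $q$ of the $q+1$ values determine the last. Viewing the matchings as a $(q+1)$-uniform hypergraph (equivalently, as a spanoid), the random subset $R$ needs only to contain $q-1$ out of $q+1$ points of some hyperedge through $i$ to make $i$ recoverable after an iterative elimination that uses earlier recovered coordinates as additional ``evidence.'' The critical sampling density then drops from $\rho \sim n^{-1/q}$ to $\rho \sim n^{-1/(q-1)}$, giving the stated exponent. The main obstacle I anticipate is making this iterative elimination precise: one must order the recovery steps acyclically, bound the expected number of coordinates that fail to enter the elimination chain, and avoid double-counting when several hyperedges through the same $i$ get partially sampled. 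This careful book-keeping is implicit in \cite{DinurK11,Woodruff07} and made explicit in \cite{IcelandS18}, and I would import those combinatorial arguments together with standard Chernoff concentration on $|R|$ and on the number of non-recoverable indices to finish.
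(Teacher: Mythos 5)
Your first item is fine: the degree-$(q-1)$ Reed--Muller construction with line interpolation is exactly the standard argument behind the citation to \cite{Yekhanin12}, and your parameter accounting ($n=p^m$, dimension $\binom{m+q-1}{q-1}=\Theta((\log n)^{q-1})$, $\Omega(p^{m-1})$ pairwise disjoint $q$-subsets per coordinate) is correct. Your rendition of the basic Katz--Trevisan sampling argument giving $\widetilde O(n^{1-1/q})$ is also correct. Note, though, that this theorem is stated in the paper as background with citations; the paper's own route to the upper bound is indirect: a $q$-LCC of dimension $d$ induces a $q$-LCS of rank at least $\lceil d\rceil$, and Theorem~\ref{thm-qLCSs} bounds the rank of \emph{every} $q$-LCS by $\widetilde O(n^{1-1/(q-1)})$, which is strictly more general than the coding-theoretic statement.

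The gap is in your mechanism for improving the exponent from $1/q$ to $1/(q-1)$. You assert that the correction relation on $\{i\}\cup S$ is a \emph{symmetric} functional dependence (``any $q$ of the $q+1$ values determine the last''). That does not follow from the definition of an LCC: the decodability property only guarantees that $c|_S$ determines $c_i$, and for a general (non-linear) code over an arbitrary alphabet --- which is exactly the generality the theorem claims --- knowing $c_i$ and $q-1$ values of $S$ need not determine the remaining value of $S$. Fortunately the refinement does not need this symmetry, and the paper's proof never uses it. The correct mechanism is: sample $J\subset[n]$ at density $\approx (\delta n)^{-1/(q-1)}$, so that with constant probability $J$ contains a distinguished $(q-1)$-subset of some $S=\{j_1,\dots,j_q\}\in M_i$; \emph{conditioned on the coordinates in $J$}, the single remaining value $c_{j_q}$ then determines $c_i$ via the original one-directional rule $S\Rightarrow i$. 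This yields a directed edge $j_q\to i$, and the ``iterative elimination'' you worry about is controlled not by ordering the recovery steps acyclically but by a source-counting argument on the accumulated digraph: Lemma~\ref{lem-directed} shows the number of source components of the union of $t$ independent samples drops like $n(1-\alpha/4)^t+O(\beta/\alpha)$, so after $O(\log n)$ rounds only $O(1/\delta)$ sources remain, and the determining set is $\bigcup_t J_t$ plus one representative per surviving source, of total size $\widetilde O(n^{1-1/(q-1)})$. As written, your proposal both relies on a false structural claim and defers the entire remaining content of the improvement to the cited references, so the second item is not established.
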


\subsection{Spanoids} 
  
We shall now abstract the notion of inference used in LCCs. There, for a collection of pairs $(S,i)$ with $S\subseteq [n]$ and $i\in [n]$, the values of codewords in coordinate positions $S$, determine the value of of some other coordinate $i$. We shall forget (for now) the underlying code altogether, and abstract this relation by the formal ``inference'' symbol  $S\rightarrow i$, to be read ``$S$ {\em spans} $i$''. 

\begin{define}[Spanoid]
\label{def-spanoid}
A {\em spanoid} $\cS$ over $[n]$ is a family of pairs $(S,i)$ with $S\subseteq [n]$ and $i\in [n]$. The pair $(S,i)$ will sometimes be written as $S \rightarrow i$ and read as $S$ spans $i$ in the spanoid $\cS$.  
\end{define}


One natural way to view a spanoid is as a logical inference system, with the pairs indicating all inference rules. The elements of $[n]$ indicate some $n$ formal statements, and an inference $S \rightarrow i$ of the spanoid means that if we know the truth of the statements in $S$, we can infer the truth of the $i$th statement. With this intuition, we shall adopt the convention that the inferences $i \rightarrow i$ are implicit in any spanoid, and that {\em monotonicity} holds: if $S \rightarrow i$ then also $S' \rightarrow i$ for every $S' \supseteq S$. These conventions will be formally stated below when we define general  derivations, which  sequentially combine these implicit rules and the stated rules (pairs) of the spanoid.

A key concept of spanoids is, naturally, the {\em span}.
Given a subset $T\subset [n]$ (which we can think of as ``axioms''), we can explore everything they can span by a sequence of applications of the inference rules of the spanoid $\cS$. 

\begin{define}[Derivation, Span]
A {\em derivation} in $\cS$ of $i\in [n]$ from $T\subseteq [n]$, written $T \models_{\cS} i$, is a sequence of sets $T=T_0, T_1,\dots ,T_r$ with $i \in T_r$ such that for each $j\in [r]$,  $T_{j} = T_{j-1}\cup {i_j}$ for some $i_j \in [n]$ and there exists $S \subset T_{j-1}$ such that $(S,i_j) \in \cS$ is one of the spanoid rules.

 The {\em span} (or {\em closure}) of $T$, denoted $\spn_{\cS} (T)$, is the set of all $i$ for which $T \models_{\cS} i$. We shall remove the subscript $\cS$ from these notations when no confusion about the underlying spanoid can arise, and write $T \models  i$ and $\spn(T)$ for short.
\end{define}

Despite being highly abstract, we will see that spanoids can lead to a rich family of questions and definitions. The first, and perhaps one of the most central definitions  is that of the {\em rank} of a spanoid. We shall see other notions of spanoid rank later on (and will discuss the relation between them).

\begin{define}[Rank]
The {\em rank} of a spanoid $\cS$, denoted $\rank(\cS)$, is the size of the {\em smallest} subset $T\subseteq [n]$ such that $\spn(T)=[n]$. Note that by the definition of span we always have $\rank(\cS) \leq n$. 
\end{define}

We  note that the ``rank'' of a logical inference system does appear (under different names) in {\em proof complexity}. It is the starting point for {\em expansion-based} lower bounds on   a variety of proof systems, as introduced for Resolution proofs in \cite{BenSassonW01}, and used for many others e.g. in   \cite{AlekhnovichBRW04} and  \cite{AlekhnovichR01}). We shall return to this connection presently.

We can now define the spanoid analog of $q$-LCCs as spanoids which only specify the correction structure (the matchings $M_i$) without requiring any codewords or alphabet.

\begin{define}[$q$-LCS, Locally correctable spanoid]
A spanoid $\cS$ over $[n]$ is a $q$-LCS with error-tolerance $\delta$ if	for every $i \in [n]$ there exists a family  $M_i$ of at least $\delta n$ disjoint $q$-subsets of $[n]$ such that for each $S \in M_i$ we have $(S,i) \in \cS$. Namely, each $i \in [n]$ is spanned (in $\cS$) by at least $\delta n$ disjoint  subsets of $q$-elements.
\end{define}

One can now ask about the highest possible rank of a $q$-LCS. It is not hard to see that the existence of a $q$-LCC (over any alphabet) $C \subset \Sigma^n$ with dimension $\dim(C)=d$ automatically implies that there exists a $q$-LCS (namely, the one given by the same matchings used in $C$) with rank at least $\lceil d \rceil$. Indeed, otherwise  there would be $r < d$ coordinates  in $[n]$ that determine any codeword $c \in C$ and this would limit the number of codewords to $\Sigma^r$.

One of our main observations is that, remarkably, in locally correctable spanoids there is {\em no} gap between the upper and lower bounds: we know  the precise answer up to logarithmic factors, and it matches the {\em upper bounds} for LCCs!  Observe the analogies to the theorems in the previous subsection, for $q=2$ and $q\geq 3$.

\begin{thm}[$2$-LCSs]\label{thm-2-spanoids}
For all large enough $n$:
\begin{itemize}
\item There exists a 2-LCS over $[n]$ with error-tolerance $\delta$ of rank $\Omega(\frac{1}{\delta}\log (\delta n))$.
\item Every  2-LCS  over $[n]$ with error-tolerance $\delta$ must have rank at most $O\left(\frac{1}{\delta}\log(n)\right).$\footnote{The results of \cite{BhattacharyyaGT17} can be interpreted as an upper bound of $O\left(\poly(1/\delta)\log(n)\right)$ on the rank of 2-LCS with error-tolerance $\delta$.}
\end{itemize}
\end{thm}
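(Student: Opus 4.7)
My proof plan begins with the lower bound. I would take $t = \lfloor 1/(2\delta) \rfloor$ pairwise disjoint copies $U_1, \ldots, U_t \subseteq [n]$ of size $m \approx 2\delta n$, and place a Hadamard spanoid on each: identify $U_j$ with $\F_2^k \setminus \{0\}$ where $k = \lceil \log_2(m+1) \rceil$, and for each $i \in U_j$ let $M_i$ consist of all pairs $\{a,b\} \subseteq U_j$ with $a + b = i$. Each $M_i$ has $(m-1)/2 \approx \delta n$ disjoint pairs, delivering error-tolerance $\delta$. Since the matchings live entirely within a single copy, $\spn$ decomposes across the copies, so $\rank(\cS) = \sum_j \rank(\cS|_{U_j}) = tk = \Theta(\tfrac{1}{\delta} \log(\delta n))$; each Hadamard piece contributes its subspace dimension $k$.

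For the upper bound, the plan is a greedy iterative construction of a spanning set. Start with $T_0 = \emptyset$; given current $T$ with $S = \spn(T)$, $u = |S|$, and $U = [n] \setminus S$, pick $x \in U$ that maximizes $|\spn(T \cup \{x\})|$. For each $i \in U$, every pair of $M_i$ must intersect $U$ (otherwise $i \in S$), so partition each $M_i$ into $B_i$ \emph{one-in one-out} pairs and $C_i$ \emph{two-out} pairs, with $B_i + C_i = \delta n$. Let $B := \sum_{i \in U} B_i$. Then $\sum_{x \in U} \delta'(x) = B$, where $\delta'(x) := |\{i \in U : \exists y \in S,\, (x, y) \in M_i\}|$ equals the minimum number of newly spanned elements produced by adding $x$. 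In the easy case $B \geq \tfrac{1}{2} \delta n (n - u)$, averaging yields an $x$ with $\delta'(x) \geq \tfrac{1}{2} \delta n$, so $n - |\spn(T \cup \{x\})| \leq (1 - \delta/2)(n - u)$; iterating this geometric decrease terminates in $O(\log n / \delta)$ steps.

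The main obstacle is the complementary regime $C := \sum_i C_i \geq \tfrac{1}{2} \delta n (n - u)$, where a single element's addition need not span many new elements. I would handle it by passing to the sub-spanoid $\cS'$ on $U$ whose matchings are the restricted two-out pairs: the element $x \in U$ of maximal ``two-out degree'' has degree $\geq 2C/(n - u) \geq \delta n$, and on a constant fraction of elements $\cS'$ has error-tolerance $\Omega(\delta)$ relative to $|U|$. Either a recursive application of the upper bound to $\cS'$, or an amortized potential-function argument that tracks how many two-out pairs get converted to one-in-one-out pairs per round (thereby eventually forcing the easy case), should close the gap. The delicate point is avoiding a $\poly(1/\delta)$ blow-up across recursive levels---the same reason \cite{BhattacharyyaGT17} only achieves $\poly(1/\delta)\log n$---so a careful single potential combining $|S|$ and $B$ is needed to shave the dependence on $\delta$ down to the optimal $1/\delta$.
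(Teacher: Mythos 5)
Your lower bound is correct and is essentially the paper's: the Hadamard spanoid on each of $\lfloor 1/(2\delta)\rfloor$ disjoint blocks of size $\approx 2\delta n$, with rank decomposing additively across blocks since every inference rule lives inside a single block. No issues there.

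The upper bound, however, has a genuine gap exactly where you flag it. Your greedy argument is sound in the regime $B \ge \tfrac12\delta n(n-u)$, but the complementary regime $C \ge \tfrac12\delta n(n-u)$ is the actual crux of the theorem, and neither of your two suggested fixes is a proof. The recursive route does not type-check: the sub-spanoid $\cS'$ on $U$ built from two-out pairs is only guaranteed to have error-tolerance $\Omega(\delta)$ (relative to $|U|$) for a \emph{constant fraction} of its elements, not all of them, so it is not a 2-LCS and the induction hypothesis does not apply; moreover the greedy state can alternate between the two regimes, so even a valid recursive bound threatens a multiplicative loss across levels. The amortized route is where the real work lies, and "a careful single potential combining $|S|$ and $B$" is not exhibited: converting $\ge \delta n$ two-out pairs per wasted round against a budget of up to $\delta n^2$ such pairs only bounds the number of wasted rounds by $n$. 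The paper's resolution (Theorem~\ref{thm-2LCSs}) is different in kind: it passes to the set representation of Lemma~\ref{lem-represent}, restricts to a uniformly random $S_\ell$ at each round (inducing the $(2\delta,1)$-spread digraph where $(j,i)$ is an edge iff $\{j,\ell\}\in M_i$), and invokes Lemma~\ref{lem-directed} on the resulting graph process. The key idea you are missing is the source-counting argument there: a ``heavy'' source, one from which more than $\frac{\alpha n}{2\beta}=\Theta(\delta n)$ vertices are reachable exclusively, can occur at most $O(1/\delta)$ times because those exclusive reachability sets are disjoint, while every light source receives a fresh incoming edge from outside its exclusive reach with probability $\Omega(\delta)$ per round and hence the light sources shrink geometrically. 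This handles your two-out regime without any recursion, and the final spanning set is the $O(\frac1\delta\log n)$ restriction indices plus one representative per surviving source. Without an argument of this strength (or a correct potential function playing the same role), your proof does not establish the $O(\frac{1}{\delta}\log n)$ bound.
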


  
Here, of course, the inference structure of the Hadamard code proves the first item. To get the required dependence on $\delta$, one can take $\frac{1}{\delta}$ disjoint copies of such spanoids. The second item requires a new proof we discuss below, which generalizes (and implies) the one in Theorem~\ref{2-LCCs}. It is quite surprising that, even in this abstract setting, with no need for codewords or alphabet,  one cannot do better than the Hadamard code!

We now state our results for $q \geq 3$. 

\begin{thm}[$q$-LCSs with $q\geq 3$]\label{thm-q-spanoids}
For every fixed $q\geq 3$ and all large enough $n$:
\begin{itemize}
\item There exist a  $q$-LCS of rank $ \tilde\Omega(n^{1 - \frac{1}{q-1}})$ (with constant $\delta$).
\item Every $q$-LCS over $[n]$  has rank at most  $\tilde O\left(n^{1 - \frac{1}{q-1}} \right)$ (for any constant $\delta$).
\end{itemize}
\end{thm}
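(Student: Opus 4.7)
The theorem has two parts: an upper bound on the rank of any $q$-LCS, and a matching (up to polylogarithmic factors) lower bound via an explicit random construction. For the upper bound, my plan is to transport the Katz--Trevisan argument, as refined in \cite{IcelandS18} for LCCs, into the spanoid setting; crucially, these arguments only use the matching structure and not the codewords themselves, so the reduction should be clean. Sample $T \subseteq [n]$ by including each element independently with probability $p = \Theta((\log n/(\delta n))^{1/(q-1)})$, so $|T| = \tilde O(n^{1-1/(q-1)})$ with high probability. For each $i$, call a $q$-set $S \in M_i$ \emph{almost-captured} by $T$ if at least $q-1$ of its elements lie in $T$. Then $S$ is almost-captured with probability $\gtrsim q p^{q-1}$, and because the $\delta n$ $q$-sets in $M_i$ are disjoint their capture events are independent, so the probability that no $S \in M_i$ is almost-captured is at most $(1-\Omega(p^{q-1}))^{\delta n} \leq e^{-\Omega(\log n)}$. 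A union bound yields w.h.p.\ that every $i$ has $\Omega(\log n)$ almost-captured $q$-sets, each missing only one element. Finally, compute $\spn(T)$ iteratively: whenever the missing element of some almost-captured $q$-set for $i$ is already in $\spn(T)$, add $i$; one argues this cascade terminates at $\spn(T) = [n]$.

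For the lower bound, the plan is a random construction: for each $i \in [n]$, pick $M_i$ uniformly among collections of $\delta n$ disjoint $q$-subsets of $[n]\setminus\{i\}$, independently across $i$. I would then show that with positive probability every $T$ of size $s = \tilde\Omega(n^{1-1/(q-1)})$ fails to span $[n]$. Model the closure as a growth process $A_0 = T$, $A_{k+1} = A_k \cup \{i : \exists S \in M_i,\ S \subseteq A_k\}$. For $|A_k| = \sigma$, the expected size of $A_{k+1}\setminus A_k$ is at most $n \cdot \delta n \cdot (\sigma/n)^q = \delta \sigma^q/n^{q-2}$, which is much smaller than $\sigma$ exactly when $\sigma \ll n^{1-1/(q-1)}$. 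A concentration argument, exploiting the independence of the $M_i$ across $i$, then shows that starting from $|T| = s$ a polylogarithmic factor below the threshold the process almost surely stalls far below $n$; a union bound over the $\binom{n}{s}$ possible seeds $T$ completes the proof.

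The main obstacle in both directions is turning ``expected growth'' into almost-sure statements. For the upper bound, the delicate step is closing $\spn(T)$ to all of $[n]$: the almost-captured argument only guarantees each $i$ is ``one element away'' from being spanned, and a naive iteration need not percolate everywhere. One may need to exploit the $\Omega(\log n)$ distinct missing elements per $i$ (guaranteed by disjointness of $M_i$) to route around cyclic obstructions, or add a small auxiliary random set to break deadlocks. For the lower bound, the challenge is obtaining uniform concentration over all $\binom{n}{s}$ seed sets $T$; this forces the per-seed failure probability to be $\exp(-\tilde\Omega(s \log n))$, which likely requires Janson- or martingale-style inequalities together with careful decoupling via the disjointness within each $M_i$ to beat the union bound.
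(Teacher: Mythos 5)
Both halves of your plan identify the right thresholds, but each has a genuine gap at precisely the step you flag as ``delicate,'' and in both cases the paper's resolution is a concrete mechanism that your sketch does not supply.

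For the upper bound, the percolation gap is real and is not a technicality. After your single random restriction, the ``almost-captured'' sets give a directed graph on $[n]$ in which each $i$ has in-edges from the missing elements of its almost-captured $q$-sets; the elements you can derive are exactly those reachable from $T$ in this graph. High in-degree does not imply reachability from $T$: the graph can decompose into many source strongly-connected components disjoint from $T$ (e.g.\ a union of short cycles), and since the matchings are adversarial and the in-edges are correlated with $T$, there is no obvious argument that the number of such components is small after one round. The paper (Theorem~\ref{thm-qLCSs}) does not try to percolate from $T$ alone. It passes to the set-representation of Lemma~\ref{lem-represent}, iterates the random restriction $t=O(\log n)$ times, and uses a potential argument on the derived digraph (Lemma~\ref{lem-directed}: each round kills a constant fraction of the ``light'' sources, and there are at most $O(1/\delta)$ ``heavy'' ones) to conclude that after $t$ rounds only $O(1/\delta)$ source components remain; it then explicitly adds one representative per surviving source to the spanning set. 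Your suggestion to ``add a small auxiliary random set to break deadlocks'' is the right instinct, but the entire content of the proof is the quantitative bound on how many deadlocks survive, and that is missing from your argument.

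For the lower bound, your first-moment heuristic and the threshold $\sigma \ll n^{(q-2)/(q-1)}$ are correct, but the route through ``concentration of the growth process, then union bound over the $\binom{n}{s}$ seeds'' is harder than necessary and, as you note, forces per-seed failure probabilities of $\exp(-\tilde\Omega(s\log n))$ for a complicated event. The paper (Theorem~\ref{thm-randomspanoid}) avoids concentration entirely by enlarging the union bound: it bounds the probability that there exist a seed $A$ of size $r$, indices $j_1,\dots,j_m$, and $q$-sets $T_{j_i}\in M_{j_i}$ forming $m=r\log_2 n$ valid consecutive derivation steps, all landing inside the set $\hat A = A\cup\{j_1,\dots,j_m\}$ of size $r+m$. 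Each such containment has probability at most $(4m/n)^q$ (sampling the matchings greedily), the events multiply, and the number of tuples is at most $(6n^2/m)^m$, giving $\bigl(6\cdot 4^q m^{q-1}/n^{q-2}\bigr)^m$. No Janson or martingale inequalities are needed; the decoupling you are looking for is achieved by enumerating derivation sequences rather than seeds. If you flesh out your plan you will almost certainly be forced into this same enumeration, so I would recommend restructuring the argument around it from the start.
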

  
Both parts of this theorem demand discussion. 
The possibly surprising (and tight) lower bound follows from a simple probabilistic argument  (indeed, one which is repeatedly used to prove expansion in the proof complexity references cited above), where the matchings $M_i$ are simply chosen uniformly at random.  It seems to reveal how significant a relaxation spanoids are of LCCs (where probabilistic arguments fail completely). 
However, the best known  LCC upper bound (Theorem~\ref{3-LCCs}) does not rule  out the possibility that, at least for large alphabets, the two (LCC's dimension and LCS's rank) have the same behavior!  From a more pessimistic  (and perhaps more realistic) perspective, our lower bound shows the limitations of any (upper bound) proof technique which, in effect, applies also for spanoids. These are proofs in which the LCC structure is used to show that a small subset spans all the others. We note that there are several LCC upper bounds which `beat' the $n^{1 - \frac{1}{q-1}}$ bound for certain very special cases by using additional structure not present in the corresponding abstract spanoid. One example is the bound of \cite{KerenidisW04}, which uses arguments from quantum information theory to roughly {\em halve} the number of queries, over binary (or small) alphabets. Another example is the paper \cite{DvirSW14}, which gives an improved upper bound on the dimension for linear $3$-LCCs defined over the real numbers, using specific properties of the Reals such as distance and volume arguments.

Our proof of the upper bound, is again more general than for LCCs, and  interesting in its own right. We use a simple technique which performs random restrictions and contractions of graphs and hypergraphs (and originates in~\cite{DvirSW14}). It will be described in Section~\ref{sec-upperbounds}, after we have formulated an equivalent, set-theoretic formulation of spanoids in Section~\ref{sec-sets}.

\subsubsection{Functional rank: bridging the gap between LCCs and LCSs}

We conclude this section of the introduction with an attempt to understand (and possibly bridge) the gap between LCCs and their spanoid abstraction. The idea is to start with an LCS of high rank (which we know is possible), and convert it to an LCC without losing too much in the parameters. More generally, for  a given spanoid $\cS$, we would like to investigate the  code $C$ with largest dimension (over any alphabet $\Sigma$) which would be consistent with the inferences of $\cS$. This is captured in the notion of  {\em functional rank} which we now define.

\begin{define}[Functional rank]
Let  $\cS$ be a spanoid over $[n]$.  A code $C\subset \Sigma^n$ is {\em consistent} with $\cS$ if for every inference $(S,i)$ in $\cS$, and for every codeword $c\in C$, its values of coordinates $S$ determine  its value in coordinate $i$ (by some fixed function, $f_{S,i}$ not depending on $c$).\footnote{One can think of a code consistent with $\cS$ also as a `representation' of $\cS$ in the spirit of matroid theory.} 

Define the {\em functional rank} of $\cS$, denoted $\frank(\cS)$, to be equal to the supremum of the dimension $\dim(C)$, over all possible finite alphabets $\Sigma$ and codes  $C \subset \Sigma^n$ which are consistent with $\cS$.
\end{define}

Of course, the strategy of constructing LCCs in two stages as above can only work if we can bound the gap between $\rank(\cS)$ and $\frank(\cS)$. This question, of bounding this gap or proving it can be large, is perhaps the most interesting one we raise (and leave mostly open for now). For now, we are able to show an example in which the two are different. The example providing a gap is depicted in Figure~\ref{fig-pentagon}, arranging the coordinates as the vertices of a pentagon, the pair of vertices of each edge span the vertex opposite to it. That is, $\{x_1,x_2\} \rightarrow  x_4, \{x_2,x_3\} \rightarrow x_5$ etc.

\begin{figure}[!h]
\caption{The pentagon spanoid $\Pi_5$ where each coordinate is spanned by the coordinates of the opposite edge.}
\centering
\definecolor{ffqqqq}{rgb}{1,0,0}
\definecolor{rvwvcq}{rgb}{0.08235294117647059,0.396078431372549,0.7529411764705882}
\begin{tikzpicture}
\label{fig-pentagon}
\draw [line width=2pt,color=rvwvcq] (4.16,-2.8)-- (6.22,-2.78);
\draw [line width=2pt,color=rvwvcq] (6.22,-2.78)-- (6.837553878086487,-0.8146432365444852);
\draw [line width=2pt,color=rvwvcq] (6.837553878086487,-0.8146432365444852)-- (5.159223164628246,0.38001404329050925);
\draw [line width=2pt,color=rvwvcq] (5.159223164628246,0.38001404329050925)-- (3.5044038612617054,-0.8470039163194834);
\draw [line width=2pt,color=rvwvcq] (3.5044038612617054,-0.8470039163194834)-- (4.16,-2.8);
\draw [->,line width=1pt,dash pattern=on 2pt off 2pt,color=ffqqqq] (5.26783713739075,-2.3236488686072088) -- (5.16,0);
\draw [->,line width=1pt,dash pattern=on 2pt off 2pt,color=ffqqqq] (4.348459793149962,-1.6691768608425783) -- (6.4,-0.96);
\draw [->,line width=1pt,dash pattern=on 2pt off 2pt,color=ffqqqq] (4.551034462219967,-0.5939728480863998) -- (6,-2.32);
\draw [->,line width=1pt,dash pattern=on 2pt off 2pt,color=ffqqqq] (5.797647810343071,-0.5160595138287057) -- (4.54,-2.36);
\draw [->,line width=1pt,dash pattern=on 2pt off 2pt,color=ffqqqq] (6.109301147373848,-1.6535941939910395) -- (4.02,-1.02);
\begin{scriptsize}
\draw [fill=black] (4.16,-2.8) circle (2.5pt);
\draw[color=black] (3.832877126298422,-2.924936203142211) node {$x_1$};
\draw [fill=black] (6.22,-2.78) circle (2.5pt);
\draw[color=black] (6.589789150701544,-2.924936203142211) node {$x_2$};
\draw [fill=black] (6.837553878086487,-0.8146432365444852) circle (2.5pt);
\draw[color=black] (7.153095824763097,-0.6303548461452422) node {$x_3$};
\draw [fill=black] (5.159223164628246,0.38001404329050925) circle (2.5pt);
\draw[color=black] (5.130167804796906,0.7695105014232468) node {$x_4$};
\draw [fill=black] (3.5044038612617054,-0.8470039163194834) circle (2.5pt);
\draw[color=black] (3.1484051185337923,-0.66152017984831994) node {$x_5$};
\end{scriptsize}
\end{tikzpicture}
\end{figure}
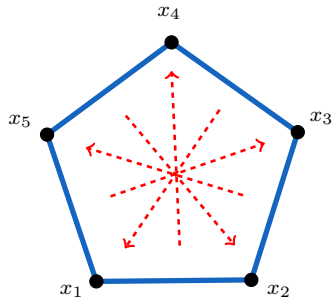

\begin{thm}[Constant gap between rank and functional rank]
The pentagon spanoid $\Pi_5$ depicted in Figure~\ref{fig-pentagon} has $\rank(\Pi_5)=3$ but $\frank(\Pi_5)=2.5$.
\end{thm}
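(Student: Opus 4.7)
The plan is to handle the two equalities separately. The rank is a short case check, while the functional rank bound splits into an entropy upper bound (via a pair of submodularity inequalities summed cyclically) and a matching explicit code construction; the latter will be the main obstacle.

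\textbf{Rank equals 3.} The upper bound $\rank(\Pi_5)\le 3$ is witnessed by $T=\{x_1,x_2,x_3\}$: the two rules $\{x_1,x_2\}\to x_4$ and $\{x_2,x_3\}\to x_5$ together span $[5]$. For the lower bound, observe that $\Pi_5$ is invariant under the dihedral group $D_5$ acting on $[5]$ by pentagon symmetries; up to relabeling, every 2-subset of $[5]$ is either an edge like $\{x_1,x_2\}$ or a chord like $\{x_1,x_3\}$. A direct check shows $\spn(\{x_1,x_2\})=\{x_1,x_2,x_4\}$ (only the single rule $\{x_1,x_2\}\to x_4$ fires, after which every remaining rule requires $x_3$ or $x_5$) and $\spn(\{x_1,x_3\})=\{x_1,x_3\}$ (no rule's premise is contained in $\{x_1,x_3\}$); neither equals $[5]$.

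\textbf{Functional rank $\le 5/2$.} Let $C\subseteq\Sigma^5$ be any consistent code and let $X$ be uniform on $C$, so $\dim(C)=H(X)/\log|\Sigma|$. Each inference $\{x_i,x_{i+1}\}\to x_{i+3}$ translates to the entropy identity $H(X_i,X_{i+1},X_{i+3})=H(X_i,X_{i+1})$; iterating, $H(X_1,\ldots,X_5)=H(X_1,X_2,X_3)=:p$ (three consecutive coordinates span $[5]$). Apply submodularity with $A=\{i,i+2\}$, $B=\{i+1,i+2\}$ and sum cyclically over $i\in\{1,\ldots,5\}$ to obtain
\[
\sum_{\text{chords}} H + \sum_{\text{edges}} H \;\ge\; 5p + \sum_{i=1}^5 H(X_i). \tag{I}
\]
Apply submodularity again with $A=\{i,i+1,i+3\}$, $B=\{i,i+2,i+3\}$; the inferences $\{x_i,x_{i+1}\}\to x_{i+3}$ and $\{x_{i+2},x_{i+3}\}\to x_i$ rewrite both triple-entropies as edge-entropies, and $H(X_i,\ldots,X_{i+3})=p$. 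Summing cyclically,
\[
2\sum_{\text{edges}} H \;\ge\; 5p + \sum_{\text{chords}} H. \tag{II}
\]
Adding (I)+(II) gives $3\sum_{\text{edges}} H \ge 10p+\sum_i H(X_i)$. Using the subadditivity $H(X_i,X_j)\le H(X_i)+H(X_j)$ on each edge and the fact that every pentagon vertex has degree $2$, we get $\sum_{\text{edges}} H \le 2\sum_i H(X_i)$, hence $10 p \le 5\sum_i H(X_i) \le 25\log|\Sigma|$, so $\dim(C)\le 5/2$.

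\textbf{Functional rank $\ge 5/2$.} Exhibit an explicit consistent code over $\Sigma=\mathbb{F}_q^2$ with $|C|=q^5$, i.e.\ $\dim(C)=5/2$. Write each symbol as $X_i=(a_i,b_i)$. The $a$-layer is the standard 2-dimensional $\mathbb{F}_q$-linear representation of $\Pi_5$: pick five pairwise linearly independent $\alpha_i\in\mathbb{F}_q^2$ and set $a_i=\langle\alpha_i,u\rangle$ for $u\in\mathbb{F}_q^2$, so the pentagon inference on the $a$-side is realized linearly. The $b$-layer contributes three additional degrees of freedom $(b_1,b_2,b_3)\in\mathbb{F}_q^3$, coupled nonlinearly to the $a$-layer, with forward rules $b_{i+3}=\phi_i(a_i,a_{i+1},b_i,b_{i+1})$ designed so that substituting the forward rules into the three cyclic backward consistency equations for $(b_1,b_2,b_3)$ collapses them to a single nontrivial $\mathbb{F}_q$-scalar relation. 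This trims the $q^6$ candidates $(u,b_1,b_2,b_3)\in\mathbb{F}_q^6$ by one factor of $q$, giving $|C|=q^5$.

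\textbf{Main obstacle.} The construction is the delicate step. Purely linear codes cannot exceed dimension~$2$, because a short calculation shows that any five vectors in $\mathbb{F}_q^3$ satisfying the pentagon's linear dependencies all lie in a common 2-dimensional subspace. One must therefore couple the $b$-coordinates to the $a$-coordinates nonlinearly, and calibrated finely enough that the over-determined cyclic backward system on the $b$-side drops from its generic rank $3$ down to rank exactly $1$. The entropy argument above at least guarantees that such a construction is information-theoretically feasible and pins down its statistical profile: singletons uniform on $\Sigma$, edge-pairs uniform on $\Sigma^2$, chord-pairs supported on $|\Sigma|^{3/2}$ outcomes.
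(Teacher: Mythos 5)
Your rank computation and your entropy upper bound are both correct, and the latter is essentially the paper's own argument: your inequalities (I) and (II), summed cyclically, reproduce exactly the paper's intermediate bound $3\sum_i H(X_i,X_{i+1}) \ge \sum_i H(X_i) + 10\,H(X_{[5]})$, which the paper derives from the same two submodularity instances per rotation before finishing with subadditivity on the edges.

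The genuine gap is the lower bound $\frank(\Pi_5)\ge 5/2$, which you do not prove: you posit a two-layer code with unspecified nonlinear coupling maps $\phi_i$ "designed so that" an over-determined cyclic system collapses to rank one, you exhibit no such maps, verify no consistency, and explicitly flag this step as the main obstacle. As written, half of the theorem is missing. Moreover the difficulty is illusory. The paper's construction is the elementary ``share your neighbours'' code: take $\Sigma=\{0,1\}^2$ and, for a message $x\in\{0,1\}^5$, set $w_i=(x_{i-1},x_{i+1})$ with indices mod $5$. For the rule $\{1,2\}\to 4$ one must recover $w_4=(x_3,x_5)$ from $w_1=(x_5,x_2)$ and $w_2=(x_1,x_3)$, which is immediate since $x_5$ is a component of $w_1$ and $x_3$ a component of $w_2$; by symmetry all five rules hold, and the code has $2^5$ codewords over an alphabet of size $4$, hence dimension $5/2$. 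This is an instance of Construction~\ref{const-lprank} with $U=[5]$, $S_i=\{i-1,i+1\}$, $\ell=2$. Your observation that scalar-linear codes (one field element per coordinate) are capped at dimension $2$ is correct but is a red herring: once the alphabet is $\F_q^2$ a linear, indeed coordinate-copying, code already attains $5/2$ and no nonlinear coupling is needed. Notably, the statistical profile you extracted from the tightness analysis (singletons uniform on $\Sigma$, edge-pairs uniform on $\Sigma^2$, chord-pairs supported on $|\Sigma|^{3/2}$ outcomes) is exactly realized by this simple code, which should have suggested it.
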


Seeing that $\rank(\Pi_5)=3$ is easy by inspection. The lower bound of 2.5 on functional rank comes from a set-theoretic construction of consistent codes we will describe (in more generality) in Section~\ref{sec-LPcover}, where we develop an linear programming (LP) relaxation for $\rank(\cS)$ called $\LPcover(\cS)$, but surprisingly this LP captures the best set-theoretic construction of consistent codes. But even in this small example, the upper bound on functional rank is nontrivial to determine, as we allow all possible alphabets and consistent codes. Not surprisingly, Shannon entropy is the key to proving such a bound. In Section~\ref{sec-LPentropy}, we develop a linear programming relaxation, based on entropy whose optimum $\LPentropy(\cS)$ {\em upper bounds} $\frank(\cS)$. In this example, it proves 2.5 to be the optimum.

One natural way of amplifying gaps as in the example above, which may also be useful in creating codes of high functional rank, is the idea of {\em tensoring}. In Section~\ref{sec-spanoidproduct}, we develop different notions of tensoring {\em spanoids} inspired by tensoring of codes. In particular, we define a product of spanoids called the {\em semi-direct product} under which $\rank$ is multiplicative and $\frank$ is sub-multiplicative. By repeatedly applying this product to $\Pi_5$, we get a spanoid with polynomial gap between $\frank$ and $\rank$.

\begin{thm}[Polynomial gap between rank and function rank]\label{thm-FrkvsRk}
  There exists a spanoid $\cS$ on $n$ elements with $\rank(\cS) \ge n^c \frank(\cS)$
  where $c = \log_5 3 - \log_5 2.5 \ge 0.113$.
  \end{thm}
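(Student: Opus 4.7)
The plan is to amplify the constant gap established for the pentagon spanoid $\Pi_5$ by iterating the semi-direct product of spanoids introduced in Section~\ref{sec-spanoidproduct}. Fix an integer $k\ge 1$ and let $\cS_k$ denote the $k$-fold semi-direct product $\Pi_5 \ltimes \Pi_5 \ltimes \cdots \ltimes \Pi_5$. Since $|\Pi_5|=5$, the ground set of $\cS_k$ has size $n=5^k$, so $k=\log_5 n$.

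I then invoke the two structural properties of the semi-direct product to be established in Section~\ref{sec-spanoidproduct}: $\rank$ is multiplicative and $\frank$ is sub-multiplicative. Together with the pentagon values $\rank(\Pi_5)=3$ and $\frank(\Pi_5)=5/2$, this yields
\[
\rank(\cS_k) \;=\; 3^{k}, \qquad \frank(\cS_k) \;\le\; (5/2)^{k}.
\]
Taking the ratio and re-expressing in terms of $n$,
\[
\frac{\rank(\cS_k)}{\frank(\cS_k)} \;\ge\; \left(\frac{3}{5/2}\right)^{k} \;=\; \left(\frac{6}{5}\right)^{k} \;=\; 5^{\,k(\log_5 3 \,-\, \log_5 2.5)} \;=\; n^{\,c},
\]
so $\rank(\cS_k)\ge n^c\cdot\frank(\cS_k)$ with $c=\log_5 3-\log_5 2.5\ge 0.113$, as claimed.

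The amplification itself is a short calculation, so the real work lies in the two supporting ingredients. First, the pentagon gap: $\rank(\Pi_5)=3$ is a finite check, but $\frank(\Pi_5)\le 5/2$ is nontrivial because it must hold over \emph{arbitrary} alphabets. The natural route is through the entropy linear-programming relaxation $\LPentropy$ developed in Section~\ref{sec-LPentropy}, which upper bounds $\frank$, with the matching lower bound of $5/2$ coming from an $\LPcover$-based set-theoretic construction in Section~\ref{sec-LPcover}. Second, the properties of the semi-direct product: rank multiplicativity should follow by analyzing how derivations factor along the product coordinates, while sub-multiplicativity of $\frank$ is the more subtle direction --- the cleanest approach is to prove multiplicativity at the level of $\LPentropy$ (where linear-programming duals behave well under tensor-like products) and then transfer back via $\frank\le\LPentropy$. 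I expect sub-multiplicativity of $\frank$ under the semi-direct product to be the main technical obstacle, since it is where the entropic machinery has to do genuine work; everything else reduces to bookkeeping once that inequality is in hand.
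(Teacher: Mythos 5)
Your proposal matches the paper's proof of Theorem~\ref{thm-FrkvsRk} essentially verbatim: iterate the semi-direct product of $\Pi_5$ with itself to get a spanoid on $n=5^k$ elements, then combine multiplicativity of $\rank$ (Lemma~\ref{lem-RkLbd}) and sub-multiplicativity of $\frank$ (Lemma~\ref{lem-FrkUbd}) with $\rank(\Pi_5)=3$ and $\frank(\Pi_5)=2.5$. Your suggested detour through $\LPentropy$ for the sub-multiplicativity step is also in the paper (Lemma~\ref{lem-LPeUbd} and Corollary~\ref{cor-LPevsRk}) and works because $\LPentropy(\Pi_5)=2.5$, though the paper's direct proof of Lemma~\ref{lem-FrkUbd} is an elementary re-encoding of row restrictions over a larger alphabet rather than an entropic argument.
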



Summarizing, we have the following obvious inequalities between the measures we described so far for every spanoid $\cS$. We feel  that understanding the exact relationships better is worthy of further study 
\begin{equation}
\label{eqn-ranks-order}
\LPcover(\cS) \leq \frank(\cS) \leq \LPentropy(\cS) \leq \rank(\cS).
\end{equation}

\subsection{Other motivations and incarnations of spanoids} 

We return to discuss other structures, combinatorial, geometric and algebraic, in which the same notions of span and inference naturally occur, leading to a set-theoretic one that elegantly captures spanoids precisely. These raise further issues, some of which we study in this paper and some are left for future work. These serves to illustrate the breadth of the spanoid framework.

\subsubsection{Bootstrap percolation and gossip processes}

The following general set-up occurs in statistical physics, network theory and probability theory. 
Fix an undirected graph $G([n],E)$. In a gossip or infection process, or equivalently bootstrap percolation, we are give a set of ``rules'' specifying, for every vertex $v\in [n]$,  a family of subsets of its neighbors. The intended meaning of such a rule is  that if {\em every} member of one such subset is ``infected'' at a certain time step, then the vertex $v$ becomes infected in the next time step. Given a  set of initial infected vertices, this defines a process in which infection spreads, and eventually stabilizes. A well studied special case is  the (uniform) {\em $r$-bond percolation}~\cite{Bollobas68}, where the family for each vertex is  all $r$-subsets of its neighbors. Many variants exist, e.g. one can have a similar process on the edges, rather than vertices of the graph. An important parameter of such a process is the following: what is the size of the smallest set of vertices which, if infected, will eventually infect all other vertices.\footnote{This turns out to be crucial for understanding, at least for certain structured graphs like lattices studied by physicists, the threshold probability for percolation when initial infections are random.}

A moment's thought will convince the reader that this structure is precisely a spanoid (where inferring sets are restricted by the graph structure). The infection process is precisely the inference process defining span in spanoids. Furthermore, the smallest size of an infecting set is {\em precisely} the rank of that spanoid! Much work has been invested to determine that rank even in very special cases, e.g. for the $r$-bond percolation above, in e.g. Boolean hypercubes, where it is known precisely. Interestingly, the paper~\cite{HambardzumyanHY17} uses the so-called ``polynomial method'' to reprove that bound, which fits even deeper with our framework. In our language, their method determines the {\em functional rank} of this spanoid, and one direction is through constructing an explicit code that is consistent with the spanoid! The reader is encouraged to work out the details. 

\subsubsection{Independence systems and Matroids}
An {\em independence system} over $[n]$ is a family $\cF$ of subsets of $[n]$ which is downwards-closed (if a set is in $\cF$, so are all its subsets). The members of $\cF$ are called {\em independent}. While much of what we say below generalizes to all independence systems, we specify them for the important special systems called {\em matroids}.

A {\em matroid} is an independence system in which the independent sets satisfy the so-called ``exchange axiom'' (which we will not define here). Matroids abstract {\em linear} independence in subsets of a vector space over a field\footnote{Matroids are in fact more general than linear independent sets of vectors over a field, for example the V\'amos matroid on eight elements is not representable over any field.}, and capture algorithmic problems in which optimization is possible through the greedy algorithm. Matroids thus come with natural notions of span and rank, extending the ones in the linear algebraic setting. The rank of a set is the size of the largest independent set it contains. The span of a set is the maximal superset of it of the same rank.
 A matroid can thus be naturally viewed as a spanoid, with the inference rules $F \rightarrow i$ for every independent  $F \in \cF$ and every $i$ for which $F \cup \{i\}$ is {\em not} independent (such minimal dependent sets as $F  \cup \{i\}$ are called {\em cycles}). It is easy to verify that the notions of span and rank of the matroid and the spanoid it defines coincide. This also raises the natural question of bounding the gap between $\frank$ and $\rank$ for the special case of spanoids arising from matroids.

Note that a spanoid resulting from a matroid this way is {\em symmetric}: by the exchange property of matroids, if $E\subset [n]$ is a cycle of $\cF$, then for {\em every} $i\in E$ it contains the inference $E\setminus \{i\} \rightarrow i$. Symmetric spanoids are  interesting, and we note that  the pentagon example witnessing the gap between rank and functional rank is {\em not} symmetric, and we do not know such a gap for symmetric spanoids. We also don't know if symmetric spanoids can achieve the lower bound in Theorem~\ref{thm-q-spanoids}.

\subsubsection{Point-line incidences} 
Sylvester-Gallai theorem is a celebrated result in combinatorial geometry conjectured by Sylvester and proved independently by Melchior and Gallai. It states that for any  set of $n$ points in Euclidean space $\R^d$, if the line through any two points passes through a third point, then they must all be collinear (namely, they span  a 1-dimensional affine space). Over the complex numbers, one can prove a similar theorem but with the conclusion that the points span a 2-dimensional affine space (and there are in fact two dimensional examples known) \cite{Kelly86}. Over finite fields the conclusion is even weaker, saying that the span has dimension at most $O(\log(n))$ and this is tight as the example of all points in $\F_p^k$ with $n=p^k$ shows.  It is not a coincidence that this example reminds one of the Hadamard code described before as an example of a $2$-query LCC. It is in fact true that there is a tight connection between configurations of points with many collinear triples and linear $2$-query LCCs. This  was first noticed in \cite{BarakDWY11,BDWY12} and was used to prove that $2$-LCCs do not exist over the characteristic zero fields (for $q\geq 3$ these questions are wide open with even larger gaps than in the finite field case). LCCs with more than 2 queries naturally correspond to point configurations with many $(q-2)$-dimensional affine spaces containing at least $q$ points.

Given the connections between Sylvester-Gallai type incidence structures and LCCs, and the insights offered by spanoids for studying LCCs, it is natural that the study of the spanoid structures can help us get new insights on incidence geometry problems. 
A {\em dual}  way to view these incidences, which we shall presently generalize, is to consider each point $p_i \,:\,i\in [n]$ as representing a hyperplane $F_i$ through the origin (in the appropriate vector space) vanishing on the linear function defined by $p_i$. A point $p_i$ is spanned by a collection of points $\{p_j: j\in S\}$ iff $F_i\supset \cap_{j\in S}F_j$. Therefore the spanning structure of the points $p_1,p_2,\dots,p_n$ is captured by the spanoid where we would add the inference $S\rightarrow i$ iff $F_i$  {\em contains} the common intersection of all $F_j \,: \,j\in S$. 

\subsubsection{Systems of polynomial equations}

Given the above example, there is no reason to stop at the linear setting. Instead of lines we can consider $n$  (multivariate) polynomials $f_i$ over a field, and again consider $F_i$ to be the zero set of $f_i$. The spanoid above, having an inference $S \rightarrow i$ whenever the set $F_i$  {\em contains} the common intersection of all $F_j \,: \,j\in S$, is capturing another natural algebraic notion. Namely, it says that the polynomial $f_i$ vanishes on the all the common roots of the polynomials $\{ f_j \,: \,j\in S\}$. By the celebrated Hilbert's Nullstellensatz theorem, over algebraically closed fields, this implies that $f_i$ belongs to the radical of the ideal generated by the $f_j$'s. Here the rank function is far from being that of a matroid; the complex spanoid which arises (and in general is far from understood) plays a role in arithmetic complexity (a beautiful example is the recent~\cite{Shpilka18} dealing with degree-2 polynomials).

\subsubsection{Intersecting set systems}
Let us remove all restrictions from the origin or nature of the $n$ sets $F_i$ discussed in the previous discussion. Assume we are given any such family $\cF$ of sets (from an arbitrary universe, say $U$). As above, a natural spanoid $\cS_{\cF}$ will have the inference $S \rightarrow i$ whenever the set $F_i$  {\em contains} the common intersection of all $F_j \,: \,j\in S$. Such situations (and hence, spanoids) arise in many questions of extremal set theory, for example the study of (weak) sunflowers, or families with certain forbidden intersection (or union) patterns, 
e.g.~\cite{FoxLS12,ErdosFF85,Furedi96}. 

What is interesting in this more general framework, where the initial family of sets $\cF$ is arbitrary, is that it becomes {\em equivalent} to spanoids! In other words, every spanoid $\cS$ arises as $\cS_{\cF}$ of {\em some} family of sets $\cF$.  This possibly surprising fact is not much more than an observation, but it turns out to be an extremely useful formulation for proving some of the results in this paper. Let us state it formally (it will be proved in Section~\ref{sec-sets}).

\begin{thm}[Spanoids and intersecting sets]
Let $\cS$ be any spanoid on $[n]$. Then, there exists a universe $U$ and a family $\cF$ of $n$ sets of $U$,  $\cF =\{F_1, F_2,\dots ,F_n\}$, such that $\cS = \cS_{\cF}$.
\end{thm}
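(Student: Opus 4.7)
The plan is to build the set family $\cF$ directly out of the span operator of $\cS$. Take $U = 2^{[n]}$ and, for each $i \in [n]$, set
\[
F_i \;=\; \{T \subseteq [n] : i \in \spn_{\cS}(T)\}.
\]
The goal is then to verify that the defining relation of $\cS_{\cF}$, namely $F_i \supseteq \bigcap_{j \in S} F_j$, holds precisely when $i \in \spn_{\cS}(S)$. Since $\cS_{\cF}$ is by construction closed under derivation (it is specified by an intersection condition that is clearly closed under the monotonicity and transitivity of span), this identification then gives $\cS = \cS_{\cF}$ when spanoids are identified with their closure under derivations.

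First I would record the structural fact that $\spn_{\cS}$ is a \emph{closure operator}: extensive ($T \subseteq \spn_{\cS}(T)$ by the implicit rules $i \to i$), monotone (obvious from the definition of derivation), and idempotent. Idempotence is the only non-trivial point, and it follows by concatenation of derivations: if $T \models_{\cS} j'$ for every $j' \in \spn_{\cS}(T)$ and $\spn_{\cS}(T) \models_{\cS} j$, then splicing the two derivation sequences yields $T \models_{\cS} j$, so $\spn_{\cS}(\spn_{\cS}(T)) = \spn_{\cS}(T)$.

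Next I would prove the two directions of the equivalence. For $(\Leftarrow)$, suppose $F_i \supseteq \bigcap_{j \in S} F_j$. By extensivity, $S \in F_j$ for every $j \in S$, so $S \in \bigcap_{j \in S} F_j \subseteq F_i$, which by definition of $F_i$ means $i \in \spn_{\cS}(S)$. For $(\Rightarrow)$, suppose $i \in \spn_{\cS}(S)$ and take any $T \in \bigcap_{j \in S} F_j$; then $S \subseteq \spn_{\cS}(T)$, and monotonicity together with idempotence give $\spn_{\cS}(S) \subseteq \spn_{\cS}(\spn_{\cS}(T)) = \spn_{\cS}(T)$, so $i \in \spn_{\cS}(T)$, i.e. $T \in F_i$. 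This proves that the derivable inferences of $\cS$ and the inferences of $\cS_{\cF}$ coincide, which is the desired $\cS = \cS_{\cF}$.

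The construction is essentially forced once one recognises $\spn_{\cS}$ as a closure operator, and the only thing that needs care is the interpretation of the equality $\cS = \cS_{\cF}$: a spanoid need not literally list all of its derivable pairs, so the statement is naturally read up to derivation-closure. Once this is agreed upon, there is no real obstacle — the hardest step is simply checking idempotence of $\spn_{\cS}$, and the rest of the argument is bookkeeping against the definitions.
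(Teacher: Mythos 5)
Your proof is correct and follows essentially the same route as the paper's (Lemma~\ref{lem-represent}): the paper takes the universe to be the family $\cC_\cS$ of closed sets and lets $S_i$ be the closed sets containing $i$, which is exactly the restriction of your $F_i$ to closed $T$, and both verifications reduce to the fact that $\spn_{\cS}$ is a closure operator (the paper packages this as Claim~\ref{cla-flatsinter}(3), while you prove idempotence directly by concatenating derivations). The one caveat you already flag yourself --- that $\cS = \cS_{\cF}$ must be read at the level of derivable inferences --- is also how the paper states and uses the result.
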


It is convenient to assume that the sets in $\cF$ have no element in common to all\footnote{Indeed, otherwise we can  remove the common intersection (if non-empty) of {\em all} members of $\cF$ from each of them, as it does not change the underlying spanoid.}.

The notions of rank and span are extremely simple in this set-theoretic setting, and do not require the sequential ``derivation'' and the implicit ordering which we require to define these in spanoids. For a family $\cF$ of $n$ sets and  a subset $S\subset [n]$, let us denote by $\cap S$  
the subset of $U$ which is the intersection of all $\{F_j\,:\, j\in S\}$. Then the rank of $S$ is the size of the smallest subset $S'\subseteq S$ for which $\cap S' = \cap S$. Similarly, the span of $S$ is the largest superset $S'' \supseteq S$ for which $\cap S'' = \cap S$.

These static definitions of rank and span make many things transparent. For example, the expected fact that testing if  the rank of a spanoid (namely the rank of the set $[n]$) is at most some given integer $k$ is $NP-complete$ (Claim~\ref{cla-spanoid-rank-NPcomplete}). Complementing the  sets $F_i$ in $\cF$, and replacing intersection with union, this is precisely the Set Cover problem. This connection also underlies the cover-based linear program discussed earlier, as well as proofs of the main quantitative results Theorem~\ref{thm-2-spanoids} and Theorem~\ref{thm-q-spanoids}.

\subsubsection{Union-closed families}
Spanoids over $[n]$ are equivalent to union-closed families of subsets of $[n]$ i.e. a family of subsets of $[n]$ such that the union of any two members is again in the family. A closed set of a spanoid $\cS$ is a subset $A\subset [n]$ such that $\spn(A)=A$. The family of all closed sets of a spanoid $\cS$ is denoted by $\cC_\cS$ which is an intersection-closed family. Thus the family of all open sets which are complements of closed sets is a union-closed family and is denoted by $\cO_\cS$. One can construct all the derivations of the spanoid given its family of open or closed sets. Conversely, given any union-closed family of subsets of $[n]$, one can define a spanoid whose open sets are precisely the given family. Thus spanoids on $[n]$ are equivalent to union-closed families of subsets of $[n]$. The rank of a spanoid has a very simple interpretation in terms of its open sets, $\rank(\cS)$ is equal to the size of the smallest hitting set for its family of open sets $\cO_\cS$. Moreover, $\rank(\cS)$ is at most $\log|\cO_S|$. These connections are discussed in Section~\ref{sec-closed-open}.

Union-closed families are interesting combinatorial objects with a rich structure. The widely open Frankl's union-closed conjecture states that in every union-closed family of $N$ sets, there exists an element which is contained in at least $N/2$ sets. Though this was proved for various special classes (see survey~\cite{BruhnS15}), the best general bound is $\Omega(N/\log N)$ due to~\cite{Knill94,Wojcik99}. When seen in the framework of spanoids, this follows immediately from Claim~\ref{cla-rankspanoid} which says that there is a $\log N$ sized hitting set for every union-closed family of $N$ sets. Thus there is an element which should hit at least $N/\log_2(N)$ sets. We hope that viewing union-closed families as spanoids could be of use in understanding them.

\subsection{Organization}

In Section~\ref{sec-spanoid-prelim}, we will present two alternative ways to represent spanoids that will be very useful. In Section~\ref{sec-upperbounds}, we show upper bounds on the rank of $q$-LCSs for $q\ge 2$ thus proving the upper bounds in Theorems~\ref{thm-2LCSs} and \ref{thm-qLCSs}. In Section~\ref{sec-construction-qLCS}, we construct $q$-LCSs thus proving the lower bound in Theorem~\ref{thm-qLCSs}. In Section~\ref{sec-frank-vs-rank}, we explore the relation between $\rank$ and $\frank$ of a spanoid. Specifically, we define the linear programs for $\LPentropy$ and $\LPcover$ which provide upper and lower bounds on the $\frank$ respectively. We will also calculate the $\rank$ and $\frank$ of the Pentagon spanoid defined in the introduction (Figure~\ref{fig-pentagon}) and thus proving that $\frank$ can be strictly smaller. In Section~\ref{sec-spanoidproduct}, we study products of spanoids which we use to amplify the gap between $\frank$ and $\rank$ from constant to polynomial.


\section{Preliminaries on spanoids}\label{sec-spanoid-prelim}
We will now describe two equivalent ways to define spanoids which will turn out to be very useful.
\subsection{Spanoids as union-closed or intersection-closed families}
\label{sec-closed-open}
In this subsection, we will define an equivalent and more canonical way of describing spanoids in terms of intersection closed or union closed families. We have defined spanoids in Definition~\ref{def-spanoid} by specifying the initial set of derivation rules such as $A\rightarrow i$. But two different initial set of rules can lead to the same  set of derivations and we should consider two spanoids to be equivalent if they lead to the same set of derivations. We will present an alternative way to describe spanoids which makes them equivalent to union-closed families of sets or alternatively intersection-closed families of sets. Moreover this new representation is a more canonical way to represent spanoids since it will be based only on the set of derivations. For this, the main new notions we need to define are that of a `closed set' and an `open set'.

\begin{define}(Closed and open sets)
\label{def-closed-open-set}
 Let $\cS$ be an spanoid on $[n]$. A {\em closed set}\footnote{Closed sets are analogous to `flats' or `subspaces' in matroids.} is a subset $B \subset [n]$  for which $\spn(B) = B$. A subset $B\subset [n]$ is called an \emph{open set} if its complement is a closed set. The family of all closed sets of $\cS$ is denoted by $\cC_\cS$ and the family of all open sets of $\cS$ by $\cO_\cS$ (when it is clear from the context, we will drop the subscript $\cS$).
\end{define}


\begin{claim}\label{cla-flatsinter}
In any spanoid $\cS$ on $[n]$, 
\begin{enumerate}
 \item the intersection of any number of closed sets is a closed set i.e. $\cC_\cS$ is an intersection-closed family,
 \item the union of any number of open sets is an open set i.e. $\cO_\cS$ is a union-closed family and
 \item for any set $A \subset [n]$, $\spn(A)$ is equal to the intersection of all closed sets containing $A$ i.e. $\spn(A)=\bigcap_{B\supset A, B\in \cC_\cS} B$.
\end{enumerate}
\end{claim}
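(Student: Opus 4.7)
The plan rests on two elementary properties of $\spn$ that I would establish as lemmas before attacking any of the three parts: \emph{monotonicity} (if $A \subseteq A'$ then $\spn(A) \subseteq \spn(A')$) and \emph{idempotence} ($\spn(\spn(A)) = \spn(A)$). For monotonicity, given a derivation $A = T_0, T_1, \dots, T_r$ witnessing $A \models i$, I would observe that $A' \cup T_0, A' \cup T_1, \dots, A' \cup T_r$ is still a valid derivation from $A'$, since each inference rule $S \to i_j$ invoked along the way only requires $S \subseteq T_{j-1} \subseteq A' \cup T_{j-1}$. For idempotence, I would concatenate derivations: an element $i \in \spn(\spn(A))$ is produced by a finite derivation whose steps invoke rules supported on some elements $i_1, \dots, i_k \in \spn(A)$; since each $i_m$ has its own derivation from $A$, splicing these sequences in order yields a single derivation $A \models i$.

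With these two properties in hand, part (1) is immediate. Letting $B = \bigcap_\alpha B_\alpha$ with each $B_\alpha$ closed, monotonicity gives $\spn(B) \subseteq \spn(B_\alpha) = B_\alpha$ for every $\alpha$, so $\spn(B) \subseteq B$; the reverse inclusion $B \subseteq \spn(B)$ holds trivially (taking $r=0$ in the derivation definition shows $T \subseteq \spn(T)$ for every $T$). Part (2) is then a one-line consequence of (1) via De Morgan's law: the union of a family of open sets is the complement of the intersection of the corresponding closed sets, which is closed by (1).

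For part (3), I would set $C := \bigcap_{B \supseteq A,\, B \in \cC_\cS} B$ and show both inclusions. By idempotence, $\spn(A)$ is itself a closed set, and it contains $A$, so it participates in the intersection defining $C$; hence $C \subseteq \spn(A)$. Conversely, $A \subseteq C$ by construction, and $C$ is closed by part (1), so monotonicity yields $\spn(A) \subseteq \spn(C) = C$.

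The only genuinely technical step is verifying monotonicity and idempotence from the sequential derivation definition; both amount to routine bookkeeping on the sequences $T_0, T_1, \dots, T_r$, but they are the crux of the argument, since without them closedness does not interact well with set-theoretic operations. Once these two facts are in hand, the three parts of the claim reduce to short set-theoretic manipulations.
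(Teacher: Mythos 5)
Your proposal is correct and follows essentially the same route as the paper: both arguments reduce all three parts to monotonicity of $\spn$ (for part (1) and the second inclusion of part (3)) and to the fact that $\spn(A)$ is itself closed (for the first inclusion of part (3)), with part (2) obtained by complementation. The only difference is that you explicitly verify monotonicity and idempotence from the sequential derivation definition, whereas the paper treats these as immediate consequences of its stated conventions; this is a welcome extra level of rigor but not a different proof.
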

\begin{proof}
(1) Let $F = F_1 \cap F_2$ be the intersection of two closed sets. Suppose in contradiction that $F$ spans some element $x \in [n] \setminus F$ then, by monotonicity, both $F_1$ and $F_2$ have to span $x$. Hence, $ x \in \spn(F_1) \cap \spn(F_2) = F_1 \cap F_2 = F$ in contradiction.

(2) This just follows from (1) by taking complements.

(3) Let $F(A)$ be the intersection of all closed sets containing $A$. Since $\spn(A)$ is a closed set we clearly have $F(A) \subset \spn(A)$. To see the other direction, suppose $x\in \spn(A)$ and let $F$ be any closed set containing $A$. Then, by monotonicity, $F$ must also span $x$ and so we must have $x \in F$.
\end{proof}


\begin{claim}
\label{cla-spanoid-closedsets}
A spanoid is uniquely determined by the set of all its closed (open) sets which is an intersection-closed (union-closed) family of subsets of $[n]$. Conversely, every intersection-closed (union-closed) family of subsets of $[n]$ defines a spanoid whose closed (open) sets are the given family.
\end{claim}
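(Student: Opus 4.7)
The plan is to prove both directions for the closed / intersection-closed formulation, and then deduce the open / union-closed version by taking complements via Claim~\ref{cla-flatsinter}(1),(2). The main leverage comes from Claim~\ref{cla-flatsinter}(3), which expresses the span function entirely in terms of the closed sets, so that everything about derivations can be read off from $\cC_\cS$.

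For the forward direction, I would first clarify that ``the spanoid is uniquely determined'' must be read as: the derivation relation $\models_\cS$ (equivalently the span function $\spn_\cS$) is determined. This reading is forced on us because distinct families of pairs $(A,i)$ can clearly yield identical derivations (e.g.\ adding a redundant rule, or one already implied by monotonicity and the implicit $i \to i$). Under this reading the claim is immediate: Claim~\ref{cla-flatsinter}(3) reconstructs $\spn_\cS(A) = \bigcap_{B \in \cC_\cS,\, B \supseteq A} B$ for every $A \subseteq [n]$, and the set of valid derivations is precisely the set of pairs $(A,i)$ with $i \in \spn_\cS(A)$, so it depends only on $\cC_\cS$.

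For the converse, given an intersection-closed family $\cC \subseteq 2^{[n]}$ (containing $[n]$, obtained as the intersection of the empty subfamily), I would build a candidate spanoid $\cS_\cC$ whose initial rules are all pairs $A \to i$ with $i \in \bigcap_{B \in \cC,\, B \supseteq A} B$. The heart of the proof is verifying $\cC_{\cS_\cC} = \cC$ by two containments. For $\cC \subseteq \cC_{\cS_\cC}$: starting a derivation from $B \in \cC$, any newly inferred element $i$ arises from a rule $A \to i$ with $A \subseteq B$, and the defining condition of the rule forces $i \in B$; hence derivations from $B$ never escape $B$, giving $\spn_{\cS_\cC}(B) = B$. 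For $\cC_{\cS_\cC} \subseteq \cC$: given $F$ closed in $\cS_\cC$, set $F^\star := \bigcap_{B \in \cC,\, B \supseteq F} B$, which lies in $\cC$ by intersection-closure; trivially $F \subseteq F^\star$, and every $i \in F^\star$ is inferred from $F$ in a single step by construction, so $i \in \spn_{\cS_\cC}(F) = F$, forcing $F = F^\star \in \cC$.

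The only real subtlety, rather than a genuine obstacle, is pinning down the equivalence notion on spanoids that makes the correspondence canonical (the derivation / span interpretation above); once this is fixed, the whole argument reduces to bookkeeping around Claim~\ref{cla-flatsinter} and the one-step inference structure built into $\cS_\cC$.
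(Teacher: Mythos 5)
Your proposal is correct and follows essentially the same route as the paper: reconstruct $\spn$ from the closed sets via Claim~\ref{cla-flatsinter}(3) for the forward direction, and for the converse define $\cS_\cC$ by $A \spans i$ iff $i \in \bigcap_{B \supseteq A,\, B \in \cC} B$. You merely spell out the verification that $\cC_{\cS_\cC} = \cC$ (and the empty-intersection convention giving $[n] \in \cC$), which the paper dismisses as ``easy to see.''
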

\begin{proof}
Given a spanoid $\cS$ on $[n]$, by Claim~\ref{cla-flatsinter}, we can define $\spn(A)$ in $\cS$ using just the closed sets as: $$\spn(A)=\bigcap_{B\supset A, B\in \cC_\cS}B.$$
And $A\spans i$ in $\cS$ iff $i\in \spn(A)$. Thus given the set of all closed sets, we can reconstruct all the derivations of the spanoid.

For the converse, suppose we are given an intersection-closed family of subsets of $[n]$, say $\cC$. We can define $\spn_\cC(A)=\bigcap_{B\supset A, B\in \cC}B$ and define a spanoid $\cS_\cC$ where $A\spans i$ iff $i\in \spn_\cC(A)$. It is easy to see that the closed sets of this spanoid $\cS_\cC$ is exactly $\cC$.
\end{proof}

Thus an equivalent way to define a spanoid is to define all its closed (open) sets which is some intersection (union) closed family. The following claim shows that the rank of a spanoid has a very natural interpretation in terms of the open sets. 

\begin{claim}
\label{cla-rank-hittingset}
The rank of a spanoid $\cS$ is the size of the smallest hitting set for the collection $\cO_\cS$ i.e. a set which intersects every open set in $\cO_\cS$ non-trivially.
\end{claim}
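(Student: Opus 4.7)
The plan is to translate the spanning condition into a hitting-set condition by using the closed-set characterization of span from Claim~\ref{cla-flatsinter}(3). First, I would unpack the definition of rank: $\rank(\cS)$ is the minimum size of a set $T \subseteq [n]$ with $\spn(T) = [n]$, so it suffices to show that for any $T \subseteq [n]$, the condition $\spn(T) = [n]$ is equivalent to $T$ intersecting every non-empty open set in $\cO_\cS$.

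For the forward direction, I would argue by contrapositive. Suppose $T$ fails to hit some non-empty open set $U \in \cO_\cS$, so $T \subseteq [n] \setminus U$. By definition of $\cO_\cS$, the complement $[n] \setminus U$ is a closed set, and it is proper because $U$ is non-empty. By Claim~\ref{cla-flatsinter}(3), $\spn(T)$ equals the intersection of all closed sets containing $T$, so $\spn(T) \subseteq [n] \setminus U \subsetneq [n]$, hence $T$ does not span $[n]$.

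For the reverse direction, suppose $\spn(T) \neq [n]$. Since $\spn(T)$ is itself closed (its span equals itself by idempotence, which follows from Claim~\ref{cla-flatsinter}(3)), the set $U := [n] \setminus \spn(T)$ is a non-empty open set in $\cO_\cS$. Because $T \subseteq \spn(T)$, we have $T \cap U = \emptyset$, so $T$ is not a hitting set for $\cO_\cS$. Combining the two directions yields the equivalence, and therefore
\[
\rank(\cS) \;=\; \min\{|T| : T \text{ hits every non-empty } U \in \cO_\cS\},
\]
which is exactly the minimum hitting set size for $\cO_\cS$ (the empty open set, if present, is trivially excluded since no set can meet it non-trivially).

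There is no real obstacle here; the only subtlety is the implicit convention that a hitting set must meet each \emph{non-empty} member of $\cO_\cS$, and checking that $\spn(T)$ is closed (so that its complement is a legitimate member of $\cO_\cS$). Both are immediate from Definition~\ref{def-closed-open-set} and Claim~\ref{cla-flatsinter}.
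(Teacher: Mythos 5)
Your proof is correct and follows essentially the same route as the paper's: both reduce the claim to the equivalence ``$T$ spans $[n]$ iff $T$ hits every (non-empty) open set,'' proved via the closed-set characterization of span from Claim~\ref{cla-flatsinter}. You are in fact slightly more complete than the paper, which only spells out the ``not hitting implies not spanning'' direction and leaves the converse (that $[n]\setminus\spn(T)$ is a non-empty open set missed by $T$) implicit, and you also correctly flag the minor convention about the empty open set.
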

\begin{proof}
 Observe that a subset $A\subset [n]$ spans $[n]$ iff it is a hitting set for all the open sets in $\cO_\cS$. This is because if $A$ doesn't hit some open set B, then $A$ lies in the complement of $B$ i.e. $A\subset \bar{B}$. Since $\bar{B}$ is closed, $\spn(A)\subset \bar{B} \ne [n]$. Therefore $\rank(\cS)$ is the size of the smallest hitting set for $\cO_\cS$.
\end{proof}
This interpretation of the rank is used in Section~\ref{sec-LPcover} to give a linear programming relaxation $\LPcover$ which lower bounds the rank. We can also upper bound the rank of a spanoid in terms of the number of closed or open sets as the following claim shows.


\begin{claim}\label{cla-rankspanoid}
Let $\cS$ be a spanoid, then $\rank(\cS)\le \log_2(|\cC_\cS|)=\log_2(|\cO_\cS|)$.
\end{claim}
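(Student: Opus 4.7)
The goal splits into two parts. First, $|\cC_\cS|=|\cO_\cS|$ is immediate from Definition~\ref{def-closed-open-set}, which gives the bijection $F\mapsto [n]\setminus F$ between closed and open sets of $\cS$. The real content is the inequality $\rank(\cS)\le \log_2|\cC_\cS|$, or equivalently $|\cC_\cS|\ge 2^{\rank(\cS)}$.

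The plan is to exhibit $2^{\rank(\cS)}$ distinct closed sets by taking $\spn$ of the subsets of a minimum spanning set. Let $r=\rank(\cS)$ and fix a set $T^*=\{x_1,\dots,x_r\}\subseteq [n]$ with $\spn(T^*)=[n]$ and $|T^*|=r$. Define $\phi:2^{T^*}\to \cC_\cS$ by $\phi(S)=\spn(S)$; this is well-defined since every span is a closed set. The main step is to show $\phi$ is \emph{injective}, which immediately yields $|\cC_\cS|\ge|\phi(2^{T^*})|= 2^r$.

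To prove injectivity, I would first record the ``swap identity'' $\spn(\spn(A)\cup B)=\spn(A\cup B)$ valid for all $A,B\subseteq[n]$. This is a routine consequence of the closure-operator properties of $\spn$ (monotonicity and $\spn(\spn(A))=\spn(A)$, both immediate from Claim~\ref{cla-flatsinter}): $A\cup B\subseteq \spn(A)\cup B$ gives one inclusion, and $\spn(A)\cup B\subseteq \spn(A\cup B)$ gives the other after applying $\spn$ and idempotence. Now suppose $S_1,S_2\subseteq T^*$ are distinct with $\spn(S_1)=\spn(S_2)$, and WLOG pick $y\in S_1\setminus S_2$. Setting $B=T^*\setminus S_1$, we obtain
\begin{equation*}
[n]=\spn(T^*)=\spn(S_1\cup B)=\spn(\spn(S_1)\cup B)=\spn(\spn(S_2)\cup B)=\spn(S_2\cup B),
\end{equation*}
yet $y\in T^*$ lies in neither $S_2$ nor $B=T^*\setminus S_1$, so $|S_2\cup B|\le r-1$. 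This contradicts the minimality of $T^*$ and proves injectivity.

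The main thing to get right is the choice of domain of $\phi$: it is precisely the \emph{minimality} of $T^*$ that forces $S\mapsto\spn(S)$ to be injective on $2^{T^*}$, with the swap identity being the mechanism that converts a coincidence $\spn(S_1)=\spn(S_2)$ into a strictly smaller spanning subset. Once this viewpoint is identified, everything else is forced by the closure-operator structure of $\spn$ established in Claim~\ref{cla-flatsinter}.
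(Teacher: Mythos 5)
Your proof is correct and follows essentially the same route as the paper: both map the $2^r$ subsets of a minimum spanning set to their spans and show this map is injective by converting a coincidence $\spn(S_1)=\spn(S_2)$ into a spanning set of size less than $r$. The paper's injectivity step is just a more compact version of yours (from $x\in T\setminus S$ and $x\in\spn(S)$ it concludes directly that $R\setminus\{x\}$ spans $[n]$), whereas you route the same contradiction through the closure-operator ``swap identity''; both are valid.
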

\begin{proof}
Let $r=\rank(\cS)$ and $R \subset [n]$ be a  set of size $|R|=r$ spanning $[n]$. Since the rank of $\cS$ is $r$ we know that $R$ is independent (not spanned by any proper subset). For each of the $2^r$ subsets  $S \in 2^R$ we consider the closed set $F_S = \spn(S)$. We claim that all of these are distinct. Suppose in contradiction that there were two distinct sets $S \neq T \in 2^R$ with $\spn(S) = \spn(T)$. W.l.o.g suppose there is an element $x \in T \setminus S$. Then $x \in \spn(S)$ and so we get that $R \setminus \{x\}$ spans $R$ (by monotonicity) and so spans the entire spanoid in contradiction. Thus $|\cC_\cS|\ge 2^r$.
\end{proof}

\subsection{Spanoids as set systems}\label{sec-sets}
In this subsection, we will show yet another way of representing spanoids by families of sets. This representation (which is equivalent to spanoids) will be easier to work with and, in fact, we will later work almost exclusively with it instead of with the definition given in the introduction.
Recall the notation introduced at the end of the introduction that, for sets $S_1,\ldots,S_n$ and for a subset $A \subset [n]$ we let $\cap A = \cap_{i \in A} S_i$.

\begin{define}[Intersection Dimension of a set system]
The {\em intersection-dimension} of a family of sets $S_1,\ldots,S_n$, denoted $\idim(S_1,\ldots,S_n)$ is the smallest integer $d$ such that there exist a set $A \subset [n]$ of size $d$ such that  $\cap A = \cap [n]$. 
\end{define}

\begin{lem}[Set-Representation of spanoids]\label{lem-represent}
Let $\cS$ be a spanoid on $[n] $ with   $\rank(\cS) = r$. Then there exists a family of sets $S_1,\ldots, S_n$    such that  $A \spans i$  in $\cS$ iff    $\cap A \subset S_i$. In this case we say that the set family $(S_1,S_2,\dots,S_n)$ is a {\em set-representation} of $\cS$ and this implies in particular that $\idim(S_1,\ldots,S_n) = \rank(\cS)$.
\end{lem}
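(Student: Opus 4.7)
The plan is to exhibit an explicit set representation using the family of closed sets of $\cS$ as the universe. Specifically, I would take $U = \cC_\cS$ (the intersection-closed family of closed sets developed in Section~\ref{sec-closed-open}), and for each $i \in [n]$ define
\[
S_i \;=\; \{\, F \in \cC_\cS \;:\; i \in F \,\}.
\]
So $S_i$ is the collection of closed sets containing the element $i$.

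The first key step is to verify the equivalence $A \spans i \iff \cap A \subseteq S_i$. For any $A \subseteq [n]$, unfolding the definition gives
\[
\bigcap_{j \in A} S_j \;=\; \{\, F \in \cC_\cS \;:\; A \subseteq F \,\},
\]
i.e.\ exactly the closed sets containing $A$. Then $\cap A \subseteq S_i$ says that every closed set containing $A$ also contains $i$. By Claim~\ref{cla-flatsinter}(3), $\spn(A)$ equals the intersection of all closed sets containing $A$, so this condition is equivalent to $i \in \spn(A)$, which is exactly $A \spans i$. This handles the main equivalence.

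The second key step is the identity $\idim(S_1, \ldots, S_n) = \rank(\cS)$. Since $[n]$ is itself a closed set (as $\spn([n]) = [n]$ always), we have $\cap [n] = \{[n]\}$, a singleton collection. Now $\cap A = \cap [n]$ is equivalent to saying that the only closed set containing $A$ is $[n]$ itself, which is exactly the statement $\spn(A) = [n]$, i.e.\ $A$ spans all of $[n]$. Therefore the smallest $|A|$ for which $\cap A = \cap [n]$ equals the smallest $|A|$ for which $\spn(A) = [n]$, which is $\rank(\cS)$ by definition.

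There is no real obstacle here; the only point that requires attention is checking that the construction behaves correctly on the extreme cases (e.g.\ $A = \emptyset$, which should span exactly $\spn(\emptyset)$, and this matches $\cap \emptyset = \cC_\cS$ by the usual convention on empty intersections inside the universe $U$, so that $\cap \emptyset \subseteq S_i$ iff every closed set contains $i$ iff $i \in \spn(\emptyset)$). Once the equivalence is established, the $\idim$ statement is essentially a restatement of the rank definition, so the whole lemma reduces to the clean observation that closed sets play the role of the ``coordinates'' of the ambient universe.
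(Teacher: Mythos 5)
Your construction is exactly the one in the paper: take $S_i$ to be the subfamily of closed sets containing $i$, and derive the equivalence $A \spans i \iff \cap A \subseteq S_i$ from Claim~\ref{cla-flatsinter}(3). Your argument is correct and in fact slightly more complete than the paper's, since you also spell out the $\idim(S_1,\ldots,S_n)=\rank(\cS)$ identity and the $A=\emptyset$ edge case, which the paper leaves implicit.
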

\begin{proof}
For $i \in [n]$ we define $S_i \subset \cC_\cS$ to be the subfamily of closed sets of $\cS$ containing the element	$i \in [n]$. For the first direction of the proof suppose that $A$ spans $x$ in the spanoid $\cS$. Then, by Claim~\ref{cla-flatsinter}, $x$ belongs to any closed set containing $A$ and so  $\cap_{i\in A} S_i \subset S_x$. For the other direction, suppose $\cap_{i\in A} S_i \subset S_x$ or that any closed set containing $A$ must also contain $x$. Hence, $x$ is in the intersection of all closed sets containing $A$ and, by Claim~\ref{cla-flatsinter} we have that $x \in \spn(A)$.  
\end{proof}

An alternative way to represent spanoids is by unions. 
$(T_1,T_2,\dots,T_n)$ is called a \emph{union set-representation} of the spanoid $\cS$ when, $A\spans i$ in $\cS$ iff $T_i\subset \cup_{j\in A}T_j$. Note that if $(S_1,S_2,\dots,S_n)$ is an (intersection) set-representation for $\cS$ as in Lemma~\ref{lem-represent}, then by taking complements, $(\bar{S}_1,\bar{S}_2,\dots,\bar{S}_n)$ is a union set-representation for $\cS$ and vice versa. Thus these two notions of representing a spanoid by sets is equivalent.

\begin{claim}\label{cla-spanoid-rank-NPcomplete}
Given a spanoid $S$ and some positive integer $k$, deciding if the rank of the spanoid is at most $k$ is NP-complete.
\end{claim}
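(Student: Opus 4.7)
The plan is to establish NP-membership by a straightforward guess-and-verify argument and NP-hardness by reduction from Set Cover, exploiting the union-based viewpoint of spanoids already highlighted in Section~\ref{sec-sets}.

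For membership in NP, given a spanoid $\cS$ described by its list of inference rules together with an integer $k$, I would nondeterministically guess a subset $T\subseteq[n]$ with $|T|\le k$ and verify that $\spn(T)=[n]$ by iteratively enlarging $T$: start with $T_0=T$ and at each step add any element $i$ for which some listed rule $(A,i)$ satisfies $A\subseteq T_j$, stopping when nothing new can be added. Since each round adds at least one element, this halts within $n$ rounds, and each round scans the polynomial-size rule list in polynomial time, so the verification is polynomial; accept iff the resulting closure equals $[n]$.

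For NP-hardness, I would reduce from Set Cover. Given an instance with universe $U=[m]$, sets $T_1,\dots,T_n\subseteq U$, and integer $k$, build a spanoid $\cS$ on ground set $V=\{x_1,\dots,x_n\}\cup\{u_{j,\ell}:j\in[m],\,\ell\in[k+1]\}$ with the following polynomial family of rules: (i) $\{x_i\}\to u_{j,\ell}$ for every $i\in[n]$, $j\in T_i$, $\ell\in[k+1]$; and (ii) $\{u_{j,\ell}:j\in[m],\,\ell\in[k+1]\}\to x_i$ for every $i\in[n]$. The forward direction of the claim ``$\rank(\cS)\le k$ iff there is a set cover of size $\le k$'' is immediate: if $I\subseteq[n]$ is a cover, the set $\{x_i:i\in I\}$ first spans every $u_{j,\ell}$ via rules (i) and then spans the remaining $x_i$'s via rule (ii).

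For the converse, let $A$ be a spanning set with $|A|\le k$, write $I_x=\{i:x_i\in A\}$, and set $A_u=A\cap\{u_{j,\ell}\}$. Rule (ii) can only fire once every $u_{j,\ell}$ is already in the closure, so each $u_{j,\ell}$ must either lie in $A_u$ directly or be produced by some $x_i\in I_x$ with $j\in T_i$. If $\bigcup_{i\in I_x}T_i\ne U$, pick an uncovered $j^\ast$: then all $k+1$ copies $u_{j^\ast,1},\dots,u_{j^\ast,k+1}$ must sit in $A_u$, contradicting $|A|\le k$. Hence $I_x$ is a cover with $|I_x|\le|A|\le k$. The main subtlety is precisely this last step: with only one copy of each $u_j$, a small spanning set could mix a non-covering $I_x$ with a few ``fill-in'' $u_j$'s to undercut the true set-cover optimum, so the spanoid rank would only approximate it. The $(k+1)$-fold replication of the $u_j$'s makes the fill-in option cost strictly more than $k$, thereby forcing any spanning set of size $\le k$ to encode an honest set cover of the same size.
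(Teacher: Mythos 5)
Your proof is correct, and it follows the same high-level strategy as the paper (NP-membership by guess-and-verify with iterative closure computation, NP-hardness by reduction from Set Cover), but the hardness gadget is genuinely different. The paper's reduction is a one-liner: it builds a spanoid directly on the index set $[n]$ of the Set Cover instance with the rule $A \spans i$ iff $\cup_{j\in A}S_j \supseteq S_i$, so that spanning sets of the spanoid are exactly set covers and the rank equals the optimal cover size. Your construction instead introduces explicit ground elements for both the sets ($x_i$) and the universe elements ($u_{j,\ell}$), with a $(k+1)$-fold replication of the latter. What your version buys is an explicitly polynomial-size list of inference rules (the paper's spanoid, taken literally as a family of pairs $(A,i)$, has exponentially many rules and implicitly relies on the union set-representation being the input format), at the cost of a longer correctness argument. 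Two small remarks: (a) you should note that one may assume $k\le n$ so that the $(k+1)$-fold replication keeps the instance polynomial-size; and (b) the replication is actually not needed — with a single copy of each $u_j$, any spanning set mixing $I_x$ with fill-in elements $u_j$ converts into a cover of the same size by replacing each $u_j\in A$ with an arbitrary $T_i\ni j$ — though your replicated version is also valid and has the feature of forcing $I_x$ itself to be the cover. Your key step, that rule (ii) cannot fire until every $u_{j,\ell}$ is already in the closure and hence every $u_{j,\ell}$ must be obtained from $A$ alone, is sound.
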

\begin{proof}
Given the description of a spanoid and a subset of its elements, we can check in polynomial time whether the subset has size at most $k$ and spans all the elements. So the problem is in $NP$. To prove that it is $NP-complete$, we reduce Set Cover problem to this. 

Given a collection of sets $S_1,S_2,\dots,S_n\subset U$ such that $\cup_i S_i = U$ and some positive integer $k$, the Set Cover problem asks if there are at most $k$ sets in the collection whose union is $U$. To reduce it to the spanoid rank problem, we can create a spanoid over $[n]$ elements where the inference rules are given by $A\spans i$ iff $\cup_{j\in A} S_j \supset S_i$. The rank of this spanoid is at most $k$ iff there exists $k$ sets in the collection which cover all of $U$. 
\end{proof}

\section{Upper bounds on the rank of $q$-LCSs}\label{sec-upperbounds}

In this section we prove the upper bounds on the rank of $q$-LCSs stated in Theorems \ref{thm-2-spanoids} and \ref{thm-q-spanoids}. The proofs will rely on the set representation described in Section~\ref{sec-sets} and on random restriction and contraction arguments given below.

\subsection{Graph theoretic lemmas}
In this subsection, we will prove a key technical lemma about a random graph process that will be useful for proving upper bounds on the rank of $q$-LCSs. We denote by $\DG(n)$ the set of simple directed graphs on $n$ vertices. We always assume w.l.o.g that the set of vertices are the integers between $1$ and $n$.
 
 \begin{define}[$(\alpha,\beta)$-spread distribution]\label{def-spread}
 Let $\mu$ be a distribution on $\DG(n)$. We say that $\mu$ is {\em $(\alpha,\beta)$-spread} if the following conditions are true for a graph $G$ sampled from $\mu$:
 \begin{enumerate}
 \item Each vertex $i \in [n]$ has an incoming edge with probability at least $\alpha$ i.e. $$\forall i\ \prob_{G\sim \mu}\left[\exists j: (j,i)\in E(G)\right] \ge \alpha.$$
 \item For every $i,j\in [n]$, the probability that $(j,i)$ is an edge is at most $\beta/n$ i.e. $$\forall i,j\ \prob_{G\sim \mu}\left[(j,i)\in E(G)\right] \le \frac{\beta}{n}.$$
\end{enumerate}  
\end{define}
 
For example, one can generate an $(k/n,1)$-spread distribution $\mu$ on $\DG(n)$ in the following way: Fix arbitrary sets $S_1,\ldots,S_n \subset [n]$ of size $k$ each. To sample a graph $G$ from $\mu$, pick a uniformly random element $j \in [n]$ and let $G$ be the directed graph containing the edges $(j,i)$ for each $i$ such that $j \in S_i$. This satisfies the definition since for any fixed $i\in [n]$, $i$ has an incoming edge if $j\in S_i$ which happens with probability $|S_i|/n = k/n$. And for any fixed $i',j'\in [n]$, the probability that $(j',i')$ is an edge is at most $1/n$ since this happens only when $j'=j$ and $j$ is chosen uniformly at random from $[n]$. Note that the sampled edges overall are highly correlated (they all have $j$ as an endpoint).

\noindent We will need following simple observation about $(\alpha,\beta)$-spread distributions.
\begin{lem}\label{lem-alphabetaspread}
Let $\mu$ be an $(\alpha,\beta)$-spread distribution on $\DG(n)$. For every vertex $i$ and every subset $S\subset [n]$ of size at most $\frac{\alpha n}{2\beta}$,
$$\prob_{G\sim \mu}[\exists j\notin S: (j,i)\in E(G)]\ge \frac{\alpha}{2}.$$
\end{lem}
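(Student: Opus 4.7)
The plan is to show the lemma via a straightforward union-bound argument, decomposing the event "$i$ has an incoming edge" into two parts according to whether that edge originates in $S$ or outside $S$.

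First I would let $A$ denote the event $\{\exists j : (j,i) \in E(G)\}$ and let $D$ denote the event $\{\exists j \notin S : (j,i) \in E(G)\}$ (the event whose probability we want to lower bound). Property (1) of the $(\alpha,\beta)$-spread definition gives $\prob[A] \geq \alpha$ directly. Next, let $C$ denote the event $\{\exists j \in S : (j,i) \in E(G)\}$, and observe the trivial decomposition $A = C \cup D$, which immediately yields $\prob[D] \geq \prob[A] - \prob[C]$.

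It remains to upper bound $\prob[C]$. By a union bound over $j \in S$ together with property (2) of the spread definition,
\[
\prob[C] \leq \sum_{j \in S} \prob[(j,i) \in E(G)] \leq |S| \cdot \frac{\beta}{n} \leq \frac{\alpha n}{2\beta} \cdot \frac{\beta}{n} = \frac{\alpha}{2},
\]
where the last inequality uses the hypothesis $|S| \leq \alpha n/(2\beta)$. Combining with the previous bound gives $\prob[D] \geq \alpha - \alpha/2 = \alpha/2$, which is exactly the claim.

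There is no real obstacle here: the lemma is essentially a quantitative restatement of the spread property, saying that once the mass of "bad" incoming edges (those landing in a small forbidden set $S$) is controlled by property (2), most of the guaranteed incoming-edge mass from property (1) must come from outside $S$. The only minor care needed is the trivial observation that $A \subseteq C \cup D$ (in fact equality), so that the subtraction step is valid.
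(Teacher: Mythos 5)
Your proof is correct and is essentially identical to the paper's: both decompose the event of $i$ having an incoming edge according to whether the source lies in $S$, bound the contribution from $S$ by a union bound using property (2) to get $|S|\cdot\beta/n \le \alpha/2$, and subtract this from the $\alpha$ lower bound of property (1). No further comment is needed.
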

\begin{proof}
This follows from union bound and properties of $(\alpha,\beta)$-spread distributions.
\begin{align*}
\alpha &\le \prob_{G\sim \mu}[\exists j: (j,i)\in E(G)]\\
& \le \prob[\exists j\in S: (j,i)\in E(G)]+\prob[\exists j\notin S: (j,i)\in E(G)]\\
&\le \sum_{j\in S} \prob[(j,i)\in E(G)] + \prob[\exists j\notin S: (j,i)\in E(G)]\\
&\le \frac{\alpha n}{2\beta}\cdot \frac{\beta}{n} + \prob[\exists j\notin S: (j,i)\in E(G)]\\
& = \frac{\alpha}{2} + \prob[\exists j\notin S: (j,i)\in E(G)]
\end{align*}
\end{proof}

Given a distribution $\mu$ on graphs we would like to study the random process in which we, at each iteration, sample from $\mu$ and `add' the edges we got to the graph obtained so far. For two graphs $G$ and $H$ on the same set of vertices, we denote by $G \cup H$ their set theoretic union (as a union of edges).

\begin{define}[Graph process associated with $\mu$]
Let $\mu$ be a distribution on $\DG(n)$. We define a sequence of random variables $G^\mu_t$, $t = 0,1,2,\ldots$ as follows. $G^\mu_0$ is the empty graph on $[n]$ vertices. At each step $t\geq 1$ we sample a graph $G$ according to $\mu$ (independently from all previous samples) and set $G_t = G_{t-1} \cup G$.
\end{define}

For a graph $G \in \DG(n)$ and a vertex $i \in [n]$ we denote by $\reach(i)$ the set of vertices that are reachable from $i$ (via walking on directed edges). By convention, a vertex is always reachable from itself. Similarly, for a set of vertices $S \subset [n]$ we denote by $\reach(S) = \cup_{i \in S}\reach(i)$ the set of vertices reachable from some vertex in $S$. We denote the set of strongly connected components of $G$ by $\Gamma(G)$. We  denote by $C(i) \in \Gamma(G)$ the strongly connected component of $G$ containing $i$. We say that $C \in \Gamma(G)$ is a {\em source}  if $C$ has no incoming edges from any vertex not in $C$. 

\begin{lem}\label{lem-directed}
Let $\mu$ be an $(\alpha,\beta)$-spread distribution on $\DG(n)$ and let $G_t^\mu$ be its associated graph process. Then, for all $t \geq 0$, there is positive probability that the graph  $\Gamma(G_t^\mu)$ has at most $$ n \cdot (1 - \alpha/4)^t + \frac{2\beta}{\alpha}$$ sources. 
\end{lem}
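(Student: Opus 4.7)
The plan is to prove the (stronger) expectation bound $E[s_t] \leq n(1-\alpha/4)^t + 2\beta/\alpha$, where $s_t$ denotes the number of source SCCs of $G_t^{\mu}$; the positive-probability statement follows from the pigeonhole principle, since some realization must achieve the expectation. Argue by induction on $t$. The base case $t=0$ is immediate since $s_0 = n$ and the bound is $\geq n$.

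For the inductive step the goal is the per-step estimate
\[
E[s_t \mid G_{t-1}] \leq (1 - \alpha/4)\, s_{t-1} + \beta/2,
\]
which iterates precisely to the target. The strategy is to construct, from the fresh sample $G \sim \mu$, a random set $S_\ast \subseteq [n]$ that dominates $G_t^\mu$ in the reachability sense, i.e.\ $\reach(S_\ast) = [n]$ in $G_t^\mu$; since $s_t$ equals the minimum size of such a dominating set (take one vertex per source SCC), $s_t \leq |S_\ast|$.

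Let $S_0$ contain one representative $v_C$ per source SCC $C$ at time $t-1$, so $|S_0| = s_{t-1}$. Split the source SCCs into \emph{large} (size $> \alpha n/(2\beta)$) and \emph{small}; by disjointness the number of large ones is at most $L \leq 2\beta/\alpha$. Call $v_C$ \emph{stable} if $v_C$ receives no new incoming edge in $G$ from $[n] \setminus C$, and let $S_1 \subseteq S_0$ be the stable representatives. Applying Lemma~\ref{lem-alphabetaspread} with $S = C$ to each small source yields $\Pr[v_C \in S_1] \leq 1 - \alpha/2$, so
\[
E[|S_1|] \leq (1 - \alpha/2)(s_{t-1} - L) + L = (1 - \alpha/2) s_{t-1} + \alpha L/2 \leq (1 - \alpha/2) s_{t-1} + \beta.
\]
For each unstable $v_C$, pick a new incoming edge $(j_C, v_C) \in G$ with $j_C \notin C$ and let $v^{(1)}$ be any representative in $S_0$ (chosen different from $v_C$ whenever some alternative exists) from which $j_C$ is reachable in $G_{t-1}$. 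Iterating yields a chain $v_C = v^{(0)}, v^{(1)}, \ldots$ in $S_0$ with each $v^{(i)}$ reachable in $G_t^\mu$ from $v^{(i+1)}$. The chain must terminate, either at a stable representative (making $v_C$ reachable from $S_1$ in $G_t^\mu$) or by revisiting some earlier $v^{(\ell)}$, in which case $v^{(\ell)}, \ldots, v^{(k-1)}$ all lie in a common SCC of $G_t^\mu$. Define $S_\ast = S_1 \cup \mathcal{R}$, with $\mathcal{R}$ containing one vertex per distinct chain-cycle SCC; then $\reach(S_\ast) = [n]$ in $G_t^\mu$.

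When each chain-cycle absorbs at least two distinct representatives, a simple counting argument gives $|S_\ast| \leq (|S_0| + |S_1|)/2$, whence
\[
E[s_t \mid G_{t-1}] \leq \frac{s_{t-1} + (1-\alpha/2) s_{t-1} + \beta}{2} = (1 - \alpha/4)\, s_{t-1} + \beta/2,
\]
closing the induction. The principal obstacle is controlling degenerate length-one self-loops in the chain, which occur precisely when an unstable $v_C$ receives its new external incoming only from vertices whose unique source ancestor in $G_{t-1}$ is $C$ itself; these are handled by refining the chain-successor rule (preferring any alternative representative) together with a union-bound estimate using the $\beta/n$ edge-probability bound, noting that summed over all sources the ``exclusive descendants'' form a set of total size at most $n$, so the expected contribution of forced self-loops is $O(\beta)$ and can be absorbed after a slight adjustment of constants.
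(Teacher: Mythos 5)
Your overall architecture---bound $\E[s_t]$ by exhibiting a random dominating set, pair up merging sources to convert the $\alpha/2$ hitting probability into an $\alpha/4$ decay rate, and reserve an additive $O(\beta/\alpha)$ term for the few sources too big to control---is essentially the same skeleton as the paper's proof, and your expectation-recursion framing is if anything cleaner than the paper's ``positive probability at each step'' phrasing. But there is one concrete gap, which you have correctly located yourself without resolving: you classify sources by the size of the SCC $C$ and apply Lemma~\ref{lem-alphabetaspread} with $S=C$, so ``unstable'' only means that $v_C$ receives a new edge from some $j\notin C$. Such a $j$ may lie in $\reach'(C)\setminus C$, the set of vertices reachable from $C$ but from no other source; in that case the chain successor of $v_C$ is forced to be $v_C$ itself, the source neither dies nor merges, and the factor-two saving $|S_\ast|\le(|S_0|+|S_1|)/2$ fails. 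The paper sidesteps this entirely by making the \emph{weight} $|\reach'(C)|$ (rather than $|C|$) the light/heavy criterion and applying Lemma~\ref{lem-alphabetaspread} with $S=\reach'(C)$: the sets $\reach'(C)$ are still pairwise disjoint, so there are still at most $2\beta/\alpha$ heavy sources, and an unstable representative now receives an edge from a vertex reachable from a \emph{different} source, so it provably merges with another source or stops being one, and your chains can never self-loop.

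Your proposed patch---union-bounding the forced self-loops using the $\beta/n$ edge bound over the disjoint exclusive-descendant sets---is correct as an estimate ($\E[F]\le\beta$), but as you concede it changes the constants: the per-step recursion becomes $\E[s_t\mid G_{t-1}]\le(1-\alpha/4)s_{t-1}+\beta$ rather than $(1-\alpha/4)s_{t-1}+\beta/2$, which iterates to $n(1-\alpha/4)^t+4\beta/\alpha$, not the stated $2\beta/\alpha$. This is harmless for every application in the paper (the lemma is only ever invoked up to $O(\cdot)$), but it means the statement as written is not what you prove, and the patch itself is left as a sketch. The clean fix is simply to replace $C$ by $\reach'(C)$ both in your small/large split and in the application of Lemma~\ref{lem-alphabetaspread}; the degenerate case then disappears and your own arithmetic gives the exact bound.
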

\begin{proof}
If $C \in \Gamma(G)$ is a source, we define the {\em weight} of $C$ to be the number of vertices reachable from $C$ (including vertices of $C$) that are not reachable from any other source of $G$. More formally, let $$ \reach'(C) = \{ j \in \reach(C) \,|\, j \not\in \reach(C'), \textrm{for all sources } C' \in \Gamma(G), C' \neq C\}.$$ Then the weight of a source $C \in \Gamma(C)$ is denoted by $w(C) = |\reach'(C)|$ (we do not define weight for components that are not sources). Let us call a source $C \in \Gamma(G_t)$ `light' if its weight $w(C)$ is at most $k=\frac{\alpha n}{2\beta}$ and `heavy' otherwise. By the definition of weight, there could be at most $n/k=2\beta/\alpha$ heavy sources. 

We will argue that, in each step, as we move from $G_t$ to $G_{t+1}$, the number of light sources must decrease by a factor of $(1- \alpha/4)$ with positive probability. For that purpose, suppose there are $m_t$ sources in $G_t$ and among them $m_t'$ are light.  Fix some light source and pick a representative vertex $i$ from it. Since $i$ is contained in a light source, $|\reach'(C(i))|\le  \frac{\alpha n}{2\beta}$. When going to $G_{t+1} = G_t \cup G$, $i$ gets an incoming edge from outside the set $\reach'(C(i))$ with probability at least $\alpha/2$ by Lemma~\ref{lem-alphabetaspread}. If this happens then in $G_{t+1}$, this source will either stop being a source or merge with another source.

Picking a representative for each light source in $G_t$, we see that the expected number of representatives $i$ which get a new incoming edge from outside $\reach'(C(i))$ is at least $(\alpha/2)m_t'$. Hence, this quantity is obtained with positive probability. Now, if at least $(\alpha/2)m_t'$ light sources `merge' with another source or stop being a source in $G_{t+1}$ then the total number of light sources must decrease by at least $(\alpha/4)m_t'$ (the worst case being that $(\alpha/4)m_t'$ disjoint pairs of light sources merge with each other). Hence, with positive probability we get that $m_{t+1}' \leq m_t' \cdot (1 - \alpha/4)$. Therefore, since the samples in each step $t$ are independent, there is also a  positive probability that $m_t' \leq n \cdot (1 - \alpha/4)^t$ and $m_t\le m'_t+2\beta/\alpha$. This completes the proof.
\end{proof}

\subsection{Proof of upper bound from Theorem~\ref{thm-2-spanoids}}

\begin{thm}[Rank of  2-LCSs]\label{thm-2LCSs}
	Let $\cS$ be a $2$-LCS on $[n]$ with error-tolerance $\delta$. Then $\rank(\cS) \leq O(\frac{1}{\delta}\log_2 n)$.
\end{thm}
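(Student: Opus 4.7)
The plan is to leverage the set-representation along with the graph process machinery just developed. Working with the locally correctable structure directly, I will build an $(\alpha,\beta)$-spread distribution on $\DG(n)$ out of a single round of the matchings $\{M_i\}_{i \in [n]}$, then run the process for $O(\tfrac{1}{\delta}\log n)$ steps so Lemma~\ref{lem-directed} yields a graph whose source strongly connected components number $O(1/\delta)$, and finally read a small spanning set off of that graph.

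The concrete distribution $\mu$ I will use is the following: sample $j \in [n]$ uniformly at random, and output the directed graph whose edge set is $\{(k,i) : \{j,k\} \in M_i\}$. For each fixed vertex $i$, since the pairs of $M_i$ are disjoint and cover at least $2\delta n$ elements, vertex $i$ receives an incoming edge whenever $j$ lies in the support of $M_i$, which happens with probability at least $2\delta$. On the other hand, for each fixed ordered pair $(k,i)$, the matching property says at most one $j$ satisfies $\{j,k\}\in M_i$, so the probability that $(k,i)$ appears is at most $1/n$. Thus $\mu$ is $(2\delta,1)$-spread.

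Now I would apply Lemma~\ref{lem-directed} to $\mu$ with $t = \lceil (4/\delta) \ln n \rceil$. With positive probability, the graph $G_t^\mu$ obtained after $t$ independent samples $j_1,\ldots,j_t$ has at most $n(1-\delta/2)^t + 1/\delta \le 1 + 1/\delta = O(1/\delta)$ source strongly connected components. Fix such an outcome and let $T$ be the union of $\{j_1,\ldots,j_t\}$ together with one representative vertex chosen from each source SCC of $G_t^\mu$. Then $|T| \le t + O(1/\delta) = O(\tfrac{1}{\delta}\log n)$, so it only remains to show that $T$ spans $[n]$ in $\cS$.

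The verification is a walk along the directed graph. For an edge $(k,i)$ of $G_t^\mu$, there must exist some step $s \in [t]$ for which $\{j_s,k\}\in M_i$, and hence $\{j_s,k\} \spans i$ in $\cS$; in particular, once $T$ contains $j_s$, having $k$ in the span forces $i$ into the span. Given any vertex $v \in [n]$, pick a source SCC $C$ from which $v$ is reachable; the chosen representative $u \in C \cap T$ reaches every other vertex of $C$ by strong connectivity, and $C$ reaches $v$, so concatenating the two walks produces a directed path $u = u_0 \to u_1 \to \cdots \to u_r = v$ in $G_t^\mu$. Combining the above spanoid inferences with $\{j_1,\ldots,j_t\}\subset T$, an easy induction on $s$ shows each $u_s$ lies in $\spn(T)$. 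Hence $\spn(T)=[n]$ and $\rank(\cS)\le|T|=O(\tfrac{1}{\delta}\log n)$, as required. The only delicate point — and the place where the proof could break if one is not careful — is the concurrent bookkeeping of light versus heavy sources in Lemma~\ref{lem-directed}, but that lemma has already been established, so no additional probabilistic work is needed here.
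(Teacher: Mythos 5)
Your proposal is correct and follows essentially the same route as the paper: the same $(2\delta,1)$-spread distribution built from one uniformly random index and the matchings, the same invocation of Lemma~\ref{lem-directed} to get $O(1/\delta)$ sources after $O(\frac{1}{\delta}\log n)$ rounds, and the same final spanning set consisting of the sampled indices plus one representative per source component. The only (cosmetic) difference is that you verify spanning directly by inducting along directed paths with the spanoid derivation rules, whereas the paper phrases the same verification through the set-representation of Lemma~\ref{lem-represent} as a containment of intersections; both are valid.
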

\begin{proof}
 We will work with the (equivalent) set formulation: let $\cF = \{S_1,\ldots,S_n\}$ be a set system representing the spanoid $\cS$ as in Lemma~\ref{lem-represent}. 
	
	We start by defining an $(\alpha,\beta)$-spread distribution $\mu$ on $\DG(n)$ as follows: To sample a graph $G$ from $\mu$ we first pick $\ell \in [n]$ uniformly at random. Then we add a directed edge from $j$ to $i$ for every $i,j$ such that $\{j,\ell\} \in M_i$.  In this case we have $S_j \cap S_\ell \subseteq S_i$ and so, after restricting to $S_\ell$ we have $S_j\cap S_\ell \subseteq S_i \cap S_\ell$.  
	
\begin{claim}
$\mu$ is a $(2\delta,1)$-spread distribution.
\end{claim}
\begin{proof}
For any fixed $i\in [n]$, $i$ will get an incoming edge if $\ell$, which is randomly chosen from $[n]$, belongs to $M_i$. Since $M_i$ has at least $\delta n$ edges, this will happen with probability at least $2\delta$. Now fix any $i,j \in [n]$, $(j,i)$ will be an edge iff $\ell$ is equal to the the vertex that matches $j$ in the matching $M_i$, this happens with probability at most $1/n$. If $j$ is not matched in $M_i$, the probability is zero.
\end{proof}
  Consider the graph process $G_t^\mu$  and let $S_{\ell_1},\ldots,S_{\ell_t}$ be the sets chosen in the $t$ iterations of sampling from $\mu$. If $i \in \reach(j)$ in the graph $G_t^\mu$, this means that, after restricting to the intersection   $S = S_{\ell_1}\cap \ldots \cap S_{\ell_t}$, the set $S_j$ is contained in $S_i$ (i.e., $S_j \cap S \subseteq S_i \cap S$).  By Lemma~\ref{lem-directed}, after $t=O(\frac{1}{\delta}\log_2 n)$ steps, the graph process $G_t^\mu$ will contain  $r = O(1/\delta)$ sources. Pick a representative $S_{a_1},\ldots,S_{a_r}$  from each of these sources. Then,  the intersection of the $t+r = O(\frac{1}{\delta}\log_2 n)$ sets $S_{\ell_1},\ldots,S_{\ell_t}$ and $S_{a_1},\ldots,S_{a_r}$   is contained in all $n$ sets $S_1,\ldots,S_n$.  That is because, when restricted to the intersection of $S_{\ell_1},\ldots,S_{\ell_t}$, each set $S_i$ contains one of the sets $S_{a_j}, j \in [r]$.
\end{proof}

\subsection{Proof of upper bound from Theorem~\ref{thm-q-spanoids}}

\begin{thm}[Rank of $q$-query LCSs]\label{thm-qLCSs}
Let $\cS$ be a $q$-LCS with error-tolerance $\delta$ and $q \geq 3$. Then $$\rank(\cS) \leq O\left(\delta^{-\frac{1}{q-1}} \cdot n^{\frac{q-2}{q-1}}\log_2 n \right) .$$
\end{thm}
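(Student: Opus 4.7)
The plan is to generalize the proof of Theorem~\ref{thm-2LCSs} by constructing a suitable $(\alpha,\beta)$-spread distribution that captures ``almost-complete'' matchings of size $q-1$ after a random restriction. First I would invoke Lemma~\ref{lem-represent} to pass to the set representation $(S_1,\ldots,S_n)$ of $\cS$, so that it suffices to find a small subset $B \subseteq [n]$ for which $\bigcap_{b \in B} S_b = \bigcap_{i \in [n]} S_i$.

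Next I would define a distribution $\mu$ on $\DG(n)$ as follows. Sample $L \subseteq [n]$ by including each element independently with probability $p := c \cdot (\delta n q)^{-1/(q-1)}$ for a suitable constant $c$. For every $i \in [n]$ and every matching $S \in M_i$ with $|S \cap L| = q - 1$, include the directed edge $(j,i)$ in the sampled graph, where $\{j\} = S \setminus L$. Under the set representation, an edge $j \to i$ witnesses $S_j \cap \bigl(\bigcap_{\ell \in L} S_\ell\bigr) \subseteq S_i \cap \bigl(\bigcap_{\ell \in L} S_\ell\bigr)$, so directed reachability in the accumulated graph will translate to spanning in the restricted spanoid, exactly as in the $q=2$ case.

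Then I would verify that $\mu$ is $\bigl(\Omega(1),\, O(1/\delta)\bigr)$-spread. For the $\alpha$ bound: since the matchings in $M_i$ are pairwise disjoint and the coordinates of $L$ are sampled independently, the indicators $\mathbb{1}[|S \cap L| = q-1]$ for $S \in M_i$ are mutually independent Bernoullis, each of success probability $q p^{q-1}(1-p) = \Theta(p^{q-1})$. Hence $\Pr[\text{no incoming edge to }i] \le \exp\bigl(-\Omega(\delta n p^{q-1})\bigr) = \exp(-\Omega(1))$ by the choice of $p$, giving $\alpha = \Omega(1)$. For the $\beta$ bound: since the matchings are disjoint, at most one $S \in M_i$ contains any fixed $j$, so $\Pr[(j,i) \in E] \le p^{q-1} = O(1/(\delta n))$, which gives $\beta = O(1/\delta)$.

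Finally, I would run the graph process $G^\mu_t$ for $t = \Theta(\log n / \alpha) = O(\log n)$ rounds. Lemma~\ref{lem-directed} then guarantees that with positive probability the number of sources in $G^\mu_t$ is at most $n(1-\alpha/4)^t + 2\beta/\alpha = O(1/\delta)$; simultaneously, a Chernoff bound plus a union bound let me assume $|L_s| \le 2np$ in every round $s$. Setting $B$ to be $L_1 \cup \cdots \cup L_t$ together with one representative vertex from each source of $G^\mu_t$, every $i \in [n]$ is reachable from some source, hence spanned by $B$ in the restricted spanoid, so $\bigcap_{b \in B} S_b \subseteq S_i$ for all $i$. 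This yields $|B| = O(t \cdot np + 1/\delta) = O\bigl(\log n \cdot n^{(q-2)/(q-1)} \delta^{-1/(q-1)}\bigr)$, matching the claim (the $1/\delta$ term is absorbed whenever $\delta n \gtrsim 1$, which always holds since $\delta n \ge q$ is needed for a single matching to exist). The main technical step to get right is the parameter choice: the value of $p$ is tuned so that the ``cost'' of adding all of $L$ to the basis ($\approx np$ per round) is as small as possible while still driving the no-incoming-edge probability to a constant, and the independent element-wise sampling of $L$ is precisely what lets us exploit the disjointness of $M_i$ to establish this concentration.
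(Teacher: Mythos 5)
Your proposal is correct and follows essentially the same route as the paper: pass to the set representation, define a one-round distribution on directed graphs by a random restriction $L$ with inclusion probability $\approx (\delta n)^{-1/(q-1)}$, verify $(\Omega(1),O(1/\delta))$-spreadness using disjointness of the matchings, and finish with Lemma~\ref{lem-directed}. The cosmetic differences (requiring $|S\cap L|=q-1$ exactly instead of fixing a distinguished $(q-1)$-subset, and adding all qualifying edges rather than one chosen uniformly at random) do not affect the argument; your $\beta$ bound is if anything slightly cleaner than the paper's.

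The one step where you should be more careful is the final ``simultaneously'': Lemma~\ref{lem-directed} only asserts that the few-sources event has \emph{positive} probability, without quantifying it, so you cannot directly intersect it with the high-probability event $\{|L_s|\le 2np \text{ for all } s\}$ via a union bound unless you reopen the proof of the lemma to check that its positive probability (roughly $(\alpha/4)^{t}=n^{-O(1)}$) dominates the Chernoff failure probability $t\cdot e^{-\Omega(np)}$. It does here, so your argument can be completed, but the paper sidesteps the issue more cleanly by building the truncation into $\mu$ itself: if $|J|$ exceeds $4\delta^{-1/(q-1)}n^{(q-2)/(q-1)}$ the sampler outputs the empty graph, which by Markov costs only a constant factor in $\alpha$ and guarantees deterministically that every retained restriction set is small.
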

\begin{proof}
	
Like the $2$-query case, we work with the set representation $\cF = \{S_1,\ldots,S_n\}$ of $\cS$ as in Lemma~\ref{lem-represent}. We follow the same strategy as in the proof of the 2-query case. The difference is that, in this case, we will need to pick many sets to restrict to in each step instead of just one. The first observation is that, if $\{j_1,\ldots,j_q\} \in M_i$ then, restricted to the intersection $S = S_{j_1} \cap \ldots \cap S_{j_{q-1}}$ we have $S_{j_q} \subset S_i$. The second observation is that, if we choose a subset $J \subset [n]$ of size roughly $n^{\frac{q-2}{q-1}}$ then, in expectation, $J$ will contain $q-1$ elements in one of the $q$-subsets of $M_i$ for a constant fraction of the $i$'s. Repeating this a logarithmic number of times and using Lemma~\ref{lem-directed}, as in the proof of Theorem~\ref{thm-2LCSs} will then complete the proof.

We start by defining an $(\alpha,\beta)$-spread distribution $\mu$ on $\DG(n)$. To sample a graph $G$ from $\mu$ we first pick a random set $J \subset [n]$ such that each $j \in [n]$ is chosen to be in $J$ independently with probability $(\delta n)^{-1/(q-1)}$. By Markov's inequality we have that
\begin{equation}\label{eq-probjlarge}
	\prob\left[ |J| \geq 4 \cdot \delta^{-\frac{1}{q-1}}n^{\frac{q-2}{q-1}}\right] \leq 1/4.
\end{equation}

For each $i\in [n]$ and  each $q$-subset $T \in M_i$ we select $q-1$ elements of $T$ arbitrarily and refer to them as the {\em distinguished $(q-1)$-subset of $T$}. We now argue that, for each $i \in [n]$, there is relatively high probability that $J$ will contain the distinguished $(q-1)$-subset of at least one $q$-subset in $M_i$. 
\begin{claim}\label{cla-Jhitsi}
Let $E_i$ denote the event that $J$ contains the distinguished $(q-1)$-subset from at least one $q$-subset in $M_i$. Then, for each $i \in [n]$ we have that $\prob[ E_i ] \geq 1/2. $ 
\end{claim}
\begin{proof}
$J$ will contain the distinguished $q-1$ elements in a specific $q$-subset with probability $(\delta n)^{-1}$. Since the $\delta n$ $q$-subsets in $M_i$ are disjoint, the probability that $J$ will not contain any of the distinguished $(q-1)$-subsets is at most $(1 - (1/\delta n))^{\delta n} \leq 1/2$. 
\end{proof}

We are now ready to  define the edges in the graph $G$ sampled by $\mu$. First we check if $|J| \geq 4 \cdot \delta^{-\frac{1}{q-1}}n^{\frac{q-2}{q-1}}$. If this is the case then  $\mu$ outputs the empty graph (by Eq.\ref{eq-probjlarge} this happens with probability at most $1/4$). Otherwise for each $i \in [n]$ we check to see if $J$ contains the distinguished $(q-1)$-subset from one of the $q$-subsets of $M_i$. If there is at least one such $q$-subset, we pick one of them uniformly at random. Suppose the $q$-subset we chose is $\{j_1,\ldots,j_q\}$ and that the distinguished elements are the first $q-1$. Then we add the directed edge  $j_q \rightarrow i$ to the graph $G$. By the above discussion, we know that, restricted to the intersection of all sets indexed by $J$ the set $S_{j_q}$ is contained in $S_i$ (hence the directed edge representing set inclusion).

\begin{claim}
$\mu$ is $(1/4,1/\delta)$-spread.
\end{claim}
\begin{proof}
By Claim~\ref{cla-Jhitsi}, and since the probability that $|J|$ is too large is at most $1/4$ we see that any fixed $i\in [n]$ will get an incoming edge in $G$ with probability at least $\alpha=1/4$. For any fixed $i,j\in [n]$, since the distribution of the special $q$-subset which is contributing an edge to $i$ is uniform in $M_i$ (conditioned on $J$ containing a $q$-subset from $M_i$), we can conclude that $(j,i)\in E(G)$ with probability at most $1/(\delta n)=\beta/n$. This proves the claim.
\end{proof}

Now, applying Lemma~\ref{lem-directed}, we get that, after $t = O(\log_2 n)$ steps, the graph process $G_t^\mu$ will contain at most $O(1/\delta)$ sources with positive probability. Let $J_1,\ldots,J_t$ be the sets chosen in the different steps of the process and, w.l.o.g, remove any of them that were too big (i.e., when the graph sampled by $\mu$ was empty). Hence, all of the sets satisfy $|J_i| \leq 4 \cdot \delta^{-\frac{1}{q-1}}n^{\frac{q-2}{q-1}}$. Now, let $S$ be the intersection of all sets $S_j$ such that $j$ belongs to at least one of the sets $J_i$. Then, restricted to $S$, each of the sets $S_i$ contains one of the sources in the graph $G_t^\mu$. Hence, if we add to our intersection a representative form each of the sources, we will get a set that is contained in all the sets $S_j$. The total number of sets we end up intersecting is bounded by $$ O(1/\delta) + \sum_{i=1}^t |J_i| = O\left(\delta^{-\frac{1}{q-1}} \cdot n^{\frac{q-2}{q-1}}\log_2 n \right).$$ This completes the proof of the theorem.
\end{proof}

\section{Constructing $q$-LCSs with high rank} \label{sec-construction-qLCS}

  In this section we prove the lower bound part of Theorem~\ref{thm-q-spanoids} (the lower bound for the $2$-query case follows from the Hadamard code construction). We will in fact generate this spanoid at random by picking, for each $i \in [n]$, a random $q$-matching $M_i$ on $[n]$ and, for each $q$-subset $T \in M_i$ add the rule $T \spans i$. The resulting spanoid will thus have, by design, the structure of a $q$-LCS. The  reason why this spanoid should have high rank (with high probability) relies on the following observation. Suppose $A \subset [n]$ is a set that spans $[n]$. This means that there is a sequence of derivations $T_i \spans i$ with each $q$-subset $T_i$ in the matching $M_i$ that eventually generates all of $[n]$. We can limit ourselves to the first $C\cdot |A|$ such derivations for some large $C$. These derivations generate a set $A'$ of size $(C+1)|A|$ (including the original $A$ and the $C|A|$ newly derived elements). Now, the set $A'$ must contain all of the $q$-subsets $T_i$ for $C|A|$ values of $i$. However, the union of randomly chosen $C|A|$ $q$-subsets will generally have size much larger than $(C+1)|A|$ (closer to $q\cdot C|A|$).

\begin{thm}[Existence of  high rank $q$-LCSs]\label{thm-randomspanoid}
For any integer $q \geq 3$ and all sufficiently large $n$ the following holds.
Consider the following distribution generating a spanoid $\cS$ on base set $[n]$. For each $i \in [n]$ pick a $q$-matching $M_i$ of size $\lfloor n/2q \rfloor$ uniformly at random and add the rule $T \spans i$ for all $T \in M_i$.  Then, with  probability approaching one,  $\rank(\cS)$ is larger than $r = c n^{\frac{q-1}{q-2}}/\log_2(n)$, where $0<c<1$ is  an absolute constant. 
\end{thm}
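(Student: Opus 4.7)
The plan is a first-moment union-bound argument showing that with probability tending to $1$, no set $A \subseteq [n]$ of size $r = cn^{(q-2)/(q-1)}/\log n$ has $\spn(A)=[n]$. (We interpret the theorem's exponent $(q-1)/(q-2)$ as a typo for $(q-2)/(q-1)$, since the former exceeds $1$ and would make the bound vacuous.) If $A$ of size $r$ did span $[n]$, there would be an ordered derivation $i_1,i_2,\dots$ with each $i_j$ accompanied by a witness $q$-subset $T_j\in M_{i_j}$ lying in $A\cup\{i_1,\dots,i_{j-1}\}$. Truncating after $m=Cr$ derivations and writing $U=\{i_1,\dots,i_m\}$ and $B=A\cup U$ of size $(C+1)r$, the bad event implies the existence of disjoint sets $A$ (size $r$) and $U$ (size $Cr$) such that for every $i\in U$, the matching $M_i$ contains some $q$-subset inside $B$.

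By symmetry, for a uniformly random $q$-matching $M_i$ of size $\lfloor n/(2q)\rfloor$ on $[n]\setminus\{i\}$, any specific $q$-subset lies in $M_i$ with probability $\Theta(1/n^{q-1})$; union-bounding over the $\binom{|B|}{q}$ subsets of $B$ gives $\Pr[M_i\text{ contains some }T\subseteq B]\le ((C+1)r)^q/(2q\,n^{q-1})$. Using independence of $\{M_i\}$, union-bounding over the $\binom{n}{r}\binom{n-r}{Cr}$ choices of disjoint $(A,U)$, and applying $\binom{n}{k}\le(en/k)^k$ yields, after collecting terms,
\[
\log\Pr[\text{bad}] \;\le\; r\Big\{[1-(q-2)C]\log n + [(q-1)C-1]\log r + (q-1)C\log C + O(C)\Big\}.
\]
Substituting $\log r=\tfrac{q-2}{q-1}\log n-\log\log n+O(1)$ and using $(q-1)\cdot\tfrac{q-2}{q-1}=q-2$, the $\log n$ contributions in the first two terms cancel exactly; after simplification the exponent per unit $r$ becomes $\tfrac{1}{q-1}\log n + (q-1)C\log(C/\log n) + O(\log\log n + C)$.

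The crucial step is choosing $C=\log n/K$ for a well-tuned constant $K$: then $\log(C/\log n)=-\log K$, reducing the exponent per unit $r$ to $\log n\{\tfrac{1}{q-1}-(q-1)(\log K)/K+O(1/K)\}$. Since $\max_{K>0}(\log K)/K=1/e$, attained at $K=e$, the condition $(q-1)(\log K)/K>1/(q-1)$ (which makes the coefficient strictly negative) is equivalent at the optimum to $(q-1)^2>e$, which holds exactly for $q\ge 3$. Taking $K$ near $e$ gives $\log\Pr[\text{bad}]\le-\Omega(r\log n)=-\Omega(n^{(q-2)/(q-1)})\to-\infty$, so bad probability vanishes. (No further union bound over sizes $r'\le r$ is needed, since any spanning set of size $r'<r$ can be padded to a spanning set of size exactly $r$.) The main obstacle is precisely this parameter choice: a constant $C$ only yields rank $>n^{(q-2)/(q-1)-\Omega_C(1)}$, too weak since $n^{-\varepsilon}\ll 1/\log n$ for large $n$; the right choice $C\sim\log n$ introduces a $(q-1)C\log C$ penalty, but under $C=\log n/K$ this penalty flips sign to the $-(q-1)(\log K)/K\cdot\log n$ gain that barely beats the residual $\tfrac{1}{q-1}\log n$ exactly when $q\ge 3$. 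Minor bookkeeping items — the $O(C)$ error terms (dominated by the main $\log n$ gap), the $\log(C+1)\approx\log C$ approximation, and the $n-1$ versus $n$ correction in the single-subset probability — are all subleading.
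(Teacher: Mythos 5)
Your proposal is the same first-moment argument as the paper's: truncate the derivation at $m=\Theta(r\log n)$ steps, union-bound over the seed set, the derived indices and the witness $q$-subsets, and use that a random $q$-subset of a matching lands inside a set of size $O(m)$ with probability $O(m^q/n^{q-1})$ per matching element. (You are also right that the exponent in the theorem statement is a typo for $(q-2)/(q-1)$; the end of the paper's proof confirms this.) The one place where you diverge is the final numerical step, and there your route is needlessly fragile. The paper collapses the whole bound to $\bigl(6\cdot 4^q\, m^{q-1}/n^{q-2}\bigr)^m$ and wins simply because $m=c\,n^{(q-2)/(q-1)}$ gives $m^{q-1}/n^{q-2}=c^{q-1}$, which is made $<1$ by taking the absolute constant $c$ small --- no optimization over the truncation length and no condition on $q$ beyond what is needed for the exponent to make sense. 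In your write-up the constant $c$ gets absorbed into the $O(1)$ of $\log r$ and then into $O(C)$, so you lose the term $(q-1)C\log c$ that would have closed the bound for any fixed $K$; this is what forces you into the delicate optimization $\max_K(\log K)/K=1/e$ and the condition $(q-1)^2>e$. That condition is an artifact, not a real threshold. Moreover, with $C=\log n/K$ your ``$O(C)$'' error is of the \emph{same order} as the main term $\log n$, so calling it subleading is not justified as written: for $q=3$, $K=e$ your margin is $\tfrac{2}{e}-\tfrac12\approx 0.236$ per $\log n$, and you must actually verify that the constant hidden in $O(1/K)$ is smaller than that (it is --- the dominant contribution is $C(1-\log q)<0$ --- but this needs to be checked, since an unfavorable constant of size $0.24$ would sink the proof). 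Restoring the $\log c$ term and taking $c$ small makes all of this moot and recovers the paper's calculation.
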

\begin{proof}
Let $m = r \cdot \log_2(n) = c n^{\frac{q-1}{q-2}}$. If the rank of $\cS$ is at most $r$ then there exists a set $A \subset [n]$ of size $r$ that spans (using the rules obtained from the $n$ random matchings $M_1,\ldots,M_n$) the entire base set $[n]$. We will upper bound the probability that such a set exists by bounding the smaller event given by the existence of a set of $m$  rules  that can be applied one after another starting with the original set $A$. That is, let $\cE$ denote the event that there exists a set $A$ of size $r$ on which one can sequentially apply $m$ rules of the form $T_{j_i} \spans j_i$ with each $T_{j_i}$ belonging to the matching $M_{j_i}$ and for $m$ different values $j_1,\ldots,j_m \in [n]$ arriving at the final set $\hat A = A \cup \{j_1,\ldots,j_m\}$. If $A$ spans $[n]$ then clearly the event $\cE$ must hold and so, it is enough to show that $\cE$ happens with probability approaching zero.

  We will present the  event $\cE$ as the union of (possibly overlapping) smaller events and then use the union bound, bounding  the probability that each one occurs and multiplying by the number of bad events. Given a set $A \subset [n]$ of size $r$,  a tuple of $m$ indices $\hat J = \{j_1,j_2,\ldots,j_m \}$ and a family of $q$-subsets $\hat T = \{T_{j_1},\ldots,T_{j_m}\}$ with $T_{j_i} \in M_{j_i}$ denote by $\cE(A,\hat J, \hat T)$ the event in which the set $A$ spans the set $\hat A = A \cup \hat J$ using the rules $T_{j_i} \spans j_i$ applied in  order with $i$ going from $1$ to $m$.  For every fixing of $A,\hat J, \hat T$ we can bound $$\prob[ \cE(A,\hat J,\hat T)] \leq  \prod_{i=1}^m \prob[T_{j_i} \subset \hat A].$$ W.l.o.g suppose we sample the random matchings iteratively, picking a new $q$-subset at random among the available elements not covered by any previously chosen $q$-subsets in the current matching.   Since the number of $q$-subsets in each matching is $\lfloor n/2q \rfloor$ we have, at each step, at least $n/2$ available elements to chose from and so $$ \prob[ T_{j_i} \subset \hat A] \leq \frac{{m+r \choose q}}{{n/2 \choose q}} \leq \left(\frac{4m}{n}\right)^q. $$ Taking the product over all $m$ $q$-subsets in $\hat T$ we get
  \begin{equation*}\label{eq-probEAJT}
  \prob[ \cE(A,\hat J, \hat T) ] \leq \left( \frac{4m}{n} \right)^{qm}.
  \end{equation*}
  To complete the proof we bound the number of tuples $(A,\hat J,\hat T)$ as above by $${n \choose r} \cdot {n \choose m} \cdot  \lfloor n/2q \rfloor^m \leq n^r \cdot (en/m)^m \cdot n^m \leq \left( \frac{6n^2}{m} \right)^m, $$ where the last inequality used the fact that $r/m \leq 1/\log_2(n)$. Putting these bounds together we get that $$ \prob[\cE] \leq \left( \frac{4m}{n} \right)^{qm}\left( \frac{6n^2}{m} \right)^m = \left( \frac{6\cdot 4^q \cdot m^{q-1}}{n^{q-2}} \right)^m$$ which is exponentially decreasing in $m$ for the given choice of $m = c\cdot n^{(q-2)/(q-1)}$ and for $c$ a sufficiently small constant. 
\end{proof}

One could ask for a more explicit construction of an LCS with rank equal to (or even close to) that stated above. We are not able to give such a construction but can relate this problem to a longstanding open problem in explicit construction of expander graphs. A bipartite (balanced) expander of degree $q$, is a bipartite graph with $n$ left vertices $L$ and $n$ right vertices $R$ such that the degree of each vertex is $q$ and such that sets $A \subset L$ of size `not too large' have many neighbors in $R$. More specifically, one typically asks that sets with $|A| \leq n/2$ have at least $(1+\eps)|A|$ right neighbors for some constant $\eps>0$. It is quite easy to see that a random graph  of this form will be a good expander with high probability and, by now, there are also many explicit constructions \cite{HooryLW06}. One can also consider {\em unbalanced} bipartite expanders in which $|L| \gg |R|$. Take, for example, the setting in which $|L| = n^2, |R|=n$ and when the degree of every vertex in $L$ is some constant $q$. A simple probabilistic argument shows that sufficiently small sets in $L$, namely sets of size $|A| \leq n^{\alpha_q}$ with $\alpha_q < 1$ a constant depending on $q$ and approaching $1$ as $q$ grows, have many neighbors in $R$ (say, at least $2|A|$). However, no explicit constructions of such graphs are known (for any constant $q$ and any $\alpha_q > 0$). The property we needed in our random construction of  LCSs  can be thought of as an `easier' variant of the expander construction problem. Given $q$-matchings $M_1,\ldots,M_n$ each of size $\delta n$ consider the bipartite graph with $L = [n]\times [\delta n]$ and $R = [n]$. We identify each vertex $(i,j) \in L$ with the $j$'th $q$-subset $T_{ij}$ of $M_i$ and connect it to the $q$ neighbors in $R$ given by that $q$-subset. For our proof to work we need the property that there is no small set containing many $q$-subsets from different matchings. This corresponds to asking for the above graph to be an expander for a restricted family of sets, namely to sets that have at most one vertex $(i,j)$ for a given $i$ (with each subgraph $(i,*)$ defining a matching).

\section{Functional-rank vs. spanoid rank}
\label{sec-frank-vs-rank}

In this section we analyze the five element spanoid $\Pi_5$ described in the introduction (Figure~\ref{fig-pentagon}) and show that its rank is strictly larger than its functional rank. Along the way we formulate the LP relaxation $\LPcover(\cS)$ which lower bounds the functional rank in general and another linear program $\LPentropy(\cS)$ which upper bounds the functional rank.


First we give the lower bound by constructing a consistent code over an alphabet of size $4$ with $32 = 4^{2.5}$ codewords. 
\begin{claim}\label{cla-pentagon-lower}
Let $\Pi_5$ be the pentagon spanoid defined in Figure~\ref{fig-pentagon}. Then, $\frank(\cM_5) \geq 2.5$.	
\end{claim}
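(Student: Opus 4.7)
My plan is to exhibit an explicit code $C\subseteq \Sigma^5$ with $|\Sigma|=4$ and $|C|=32$ that is consistent with $\Pi_5$; this immediately gives $\frank(\Pi_5)\ge \log_4 32 = 5/2$. The key observation is that consistent codes are easy to build from a union set-representation of the spanoid (in the sense introduced at the end of Section~\ref{sec-sets}): if $(T_1,\ldots,T_5)$ is a family of sets over some universe $U$ with $T_i\subseteq T_j\cup T_k$ whenever $\{x_j,x_k\}\spans x_i$, then letting $\{Z_u\}_{u\in U}$ be independent uniform symbols from an alphabet $\Sigma_0$ and defining $c_i:=(Z_u)_{u\in T_i}$ produces a code in which the values at $c_j,c_k$ determine all $Z_u$ for $u\in T_j\cup T_k\supseteq T_i$, hence determine $c_i$. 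This is essentially the construction that will be formalized via $\LPcover$ in Section~\ref{sec-LPcover}.

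To get dimension exactly $5/2$ I need all $|T_i|$ equal to a common value $t$ and $|U|/t=5/2$; the cleanest choice is $|U|=5$ and $t=2$, with $\Sigma_0=\{0,1\}$, giving alphabet size $|\Sigma|=|\Sigma_0|^{t}=4$ and $|C|=|\Sigma_0|^{|U|}=32$. So I would look for a cyclically symmetric assignment of $2$-subsets of $[5]$ to the five vertices of the pentagon satisfying the five containments $T_{i+3}\subseteq T_i\cup T_{i+1}$ forced by the pentagon rules (with indices mod $5$).

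The choice $T_i=\{2i\bmod 5,\,(2i{+}1)\bmod 5\}$ will work: $T_i\cup T_{i+1}$ is a window of four consecutive residues $\{2i,2i{+}1,2i{+}2,2i{+}3\}$, while $T_{i+3}$ simplifies to $\{2i{+}1,2i{+}2\}$ and sits inside that window. Once this is checked for each $i$, consistency of the resulting code with all five generating rules follows immediately; consistency with every derived inference then propagates along any derivation because "determined by" composes. The only step with any real content is finding the right cyclic pattern, and aside from that small combinatorial guess the verification is a routine finite check, so I do not expect any serious obstacle.
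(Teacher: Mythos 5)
Your proposal is correct and is essentially the paper's own proof: the paper also builds a consistent code from a union set-representation with universe $U=\{x_1,\dots,x_5\}$ and $T_i$ the pair of cyclic neighbors of $i$ (so $w_i=(x_{i-1},x_{i+1})$ over $\Sigma=\{0,1\}^2$, giving $32=4^{2.5}$ codewords), and your pattern $T_i=\{2i,2i{+}1\}\bmod 5$ is the same set system up to relabeling the universe. The containment check $T_{i+3}\subseteq T_i\cup T_{i+1}$ you perform is exactly the verification the paper does for the rule $\{i,i{+}1\}\spans i{+}3$.
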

\begin{proof}
We will construct a consistent code over the alphabet $\Sigma = \{0,1\}^2$. Each codeword will be indexed by an element of $\{0,1\}^5$. The codeword $w(x) \in \Sigma^5$ corresponding to $(x_1,x_2,x_3,x_4,x_5) \in \{0,1\}^5$ will be $( (x_5,x_2),(x_1,x_3),(x_2,x_4),(x_3,x_5),(x_4,x_1))$ as shown in Figure~\ref{fig-pentagon-code}. In other words, we place the bits $x_1,\ldots,x_5$ on the vertices of the cycle and then assign to each vertex the symbol of $\Sigma = \{0,1\}^2$ comprised of the bits of its two neighbors on the cycle. It is now straight forward to verify that one can compute the coordinate $w_i, i \in [5]$ from the two coordinates spanning it in $\Pi_5$. For example, the span rule $\{1,2\} \spans 4$ requires us to compute $w_4 = (x_3,x_5)$ from $w_1 = (x_5,x_2)$ and $w_2 = (x_1,x_3)$, which can be  easily done (by symmetry, this is the situation in all the other rules). 
\end{proof}

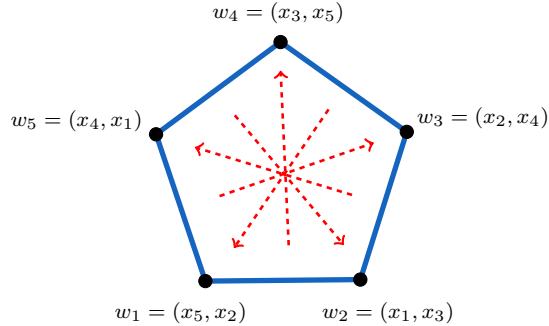
\begin{figure}[!h]
\caption{A consistent code for $\Pi_5$; each coordinate can be recovered from the coordinates of the opposite edge.}
\centering
\definecolor{ffqqqq}{rgb}{1,0,0}
\definecolor{rvwvcq}{rgb}{0.08235294117647059,0.396078431372549,0.7529411764705882}
\begin{tikzpicture}
\label{fig-pentagon-code}
\draw [line width=2pt,color=rvwvcq] (4.16,-2.8)-- (6.22,-2.78);
\draw [line width=2pt,color=rvwvcq] (6.22,-2.78)-- (6.837553878086487,-0.8146432365444852);
\draw [line width=2pt,color=rvwvcq] (6.837553878086487,-0.8146432365444852)-- (5.159223164628246,0.38001404329050925);
\draw [line width=2pt,color=rvwvcq] (5.159223164628246,0.38001404329050925)-- (3.5044038612617054,-0.8470039163194834);
\draw [line width=2pt,color=rvwvcq] (3.5044038612617054,-0.8470039163194834)-- (4.16,-2.8);
\draw [->,line width=1pt,dash pattern=on 2pt off 2pt,color=ffqqqq] (5.26783713739075,-2.3236488686072088) -- (5.16,0);
\draw [->,line width=1pt,dash pattern=on 2pt off 2pt,color=ffqqqq] (4.348459793149962,-1.6691768608425783) -- (6.4,-0.96);
\draw [->,line width=1pt,dash pattern=on 2pt off 2pt,color=ffqqqq] (4.551034462219967,-0.5939728480863998) -- (6,-2.32);
\draw [->,line width=1pt,dash pattern=on 2pt off 2pt,color=ffqqqq] (5.797647810343071,-0.5160595138287057) -- (4.54,-2.36);
\draw [->,line width=1pt,dash pattern=on 2pt off 2pt,color=ffqqqq] (6.109301147373848,-1.6535941939910395) -- (4.02,-1.02);
\begin{scriptsize}
\draw [fill=black] (4.16,-2.8) circle (2.5pt);
\draw[color=black] (3.832877126298422,-3.224936203142211) node {$w_1=(x_5,x_2)$};
\draw [fill=black] (6.22,-2.78) circle (2.5pt);
\draw[color=black] (6.589789150701544,-3.224936203142211) node {$w_2=(x_1,x_3)$};
\draw [fill=black] (6.837553878086487,-0.8146432365444852) circle (2.5pt);
\draw[color=black] (7.853095824763097,-0.6303548461452422) node {$w_3=(x_2,x_4)$};
\draw [fill=black] (5.159223164628246,0.38001404329050925) circle (2.5pt);
\draw[color=black] (5.130167804796906,0.7695105014232468) node {$w_4=(x_3,x_5)$};
\draw [fill=black] (3.5044038612617054,-0.8470039163194834) circle (2.5pt);
\draw[color=black] (2.4484051185337923,-0.66152017984831994) node {$w_5=(x_4,x_1)$};
\end{scriptsize}
\end{tikzpicture}
\end{figure}

%

\subsection{An upper bound on functional rank via $\LPentropy$}
\label{sec-LPentropy}
In this section, we will give a linear programming upper bound for $\frank(\cS)$ using properties of Shannon entropy. We will show that this upper bound matches the lower bound of $2.5$ for $\frank(\Pi_5)$ shown in Claim~\ref{cla-pentagon-lower}, thus proving that $\frank(\Pi_5)=2.5$. We will begin by recollecting some properties of Shannon entropy. 

Given a random variable $X$ supported on some domain $A$, its (Shannon) entropy is defined as $$H(X)=-\sum_{a\in A} \prob[X=a]\log(\prob[X=a]).$$ The Shannon entropy of a random variable measures its information content. The conditional entropy of $X$ given an other random variable $Y$ is defined as: $H(X|Y)=H(X,Y)-H(Y)$. And the conditional mutual information between $X$ and $Y$ given a third random variable $Z$ is defined as: $I(X:Y|Z)=H(X|Z)+H(Y|Z)-H(XY|Z)$. Equivalently, $I(X:Y|Z)=H(X,Z)+H(Y,Z)-H(X,Y,Z)-H(Z)$. Shannon proved that conditional entropy and conditional mutual information are always non-negative~\cite{CoverT91}. These are called basic information inequalities. 

Let $X=(X_1,X_2,\dots,X_n)$ be a random variable made up of $n$ coordinates. Define a function $f:2^{[n]}\to \R_{\ge 0}$ as $f(S)=H(X_S)$ where $X_S=(X_i)_{i\in S}$. Note that $f$ is a \emph{monotone increasing} function because $$f(A\cup B)-f(A)=H(X_{A\cup B})-H(X_A)=H(X_A,X_B)-H(X_B)=H(X_A|X_B)\ge 0.$$ Moreover, $f$ is a \emph{submodular function} i.e. for every $A,B\subset [n]$, $f(A\cup B)+f(A\cap B) \le f(A)+f(B)$. This is because,
\begin{align*}
0&\le I(X_{A\setminus B}:X_{B\setminus A}|X_{A\cap B})\\
&=H(X_{A\setminus B},X_{A\cap B})+H(X_{B\setminus A},X_{A\cap B})-H(X_{A\setminus B},X_{B\setminus A},X_{A\cap B})-H(X_{A\cap B})\\
&=H(X_A)+H(X_B)-H(X_{A\cup B})-H(X_{A\cap B})\\
&=f(A)+f(B)-f(A\cup B)-f(A\cap B).
\end{align*}
In fact, the monotone increasing submodular property of $f$ captures all the inequalities that can be obtained by using the basic information inequalities. But when $n\ge 4$, the entropies $H(X_S)$ satisfy some extra linear inequalities that are not captured by the basic information inequalities. These mysterious inequalities are called non-Shannon type inequalities and a few such inequalities are known~\cite{ZhangY98}, but they are not well understood. The set $$\Gamma_n^*=\{(H(X_S))_{S\subset [n],S\ne \phi}:X=(X_1,X_2,\dots,X_n)\text{ r.v.}\}$$ where $X$ ranges over all $n$ jointly distributed random variables is called the \emph{entropic region} for $n$ random variables. $\Gamma_n^*$ is a convex cone, but neither $\Gamma_n^*$ nor its closure $\overline{\Gamma_n^*}$ are polyhedral for $n\ge 4$~\cite{Matus07} i.e. they are not defined by a finite number of linear inequalities. See~\cite{Yeung08} for more information about non-Shannon type information inequalities and the entropic region.

We are now ready to set up the linear program for upper bounding the functional rank of a spanoid. Let $\cS$ be a spanoid on $[n]$ and let $C\subset \Sigma^n$ be a code over some alphabet $\Sigma$ which is consistent with the spanoid $\cS$ i.e. whenever $A\rightarrow i$ in the spanoid, for every codeword $c\in C$, $c_i$ is determined by $c|_S$. Let $X=(X_1,X_2,\dots,X_n)$ be a random variable with uniform distribution over $C$. Then $A\rightarrow i$ in $\cS$ implies that $H(X_{A\cup\{i\}})=H(X_A)$. The dimension of the code $C$ is $$k=\frac{\log |C|}{\log |\Sigma|}=\frac{H(X)}{\log |\Sigma|}.$$ So upper bounding the functional rank of $\cS$ is equivalent to upper bounding $H(X_1,X_2,\dots,X_n)$ where $X$ is a random variable distributed over $\Sigma^n$ such that $H(X_{A\cup\{i\}})=H(X_A)$ whenever $A\rightarrow i$ in $\cS$. Define $f:2^{[n]}\to \R$ by $$f(S)=\frac{H(X_S)}{\log|\Sigma|}.$$ Clearly $f(\phi)=0$ and $f(\{i\})\le 1$ for all $i\in [n]$. The basic information inequalities are equivalent to saying that $f$ is a \emph{monotone increasing submodular function} i.e. for every subsets $A,B\subset [n]$, $$f(A\cup B)+f(A\cap B)\le f(A)+f(B)$$ and if $A\subset B$, then $f(A)\le f(B)$. Thus the best upper bound we can derive on $H(X)$ using the basic information inequalities is captured by the following linear program. 

\begin{equation}\label{eq-LPentropy}
\begin{aligned}
\LPentropy(\cS)=&\max\ f([n])\\
&f(\phi)=0\\
&f(\{i\})\le 1\ \forall i\in [n]\\
&f(A\cup B)+ f(A\cap B)\le f(A)+f(B)\ \forall A,B\subset [n]\\
&f(A)\le f(B)\ \forall A\subset B\subset [n]\\
&f(A\cup \{i\})=f(A) \text{ whenever } A\spans i \text{ in } \cS
\end{aligned}
\end{equation}

Note that $\LPentropy(\cS)$ is always at most $\rank(\cS)$. This is because, if $A\subset [n]$ is such that $\spn(A)=[n]$, then any feasible $f$ in the LP~(\ref{eq-LPentropy}), should satisfy $f([n])\le |A|$.
The following claim formally states that $\LPentropy(\cS)$ upper bounds the functional rank of the spanoid $\cS$ and lower bounds $\rank(\cS)$, the proof of which follows immediately from the above discussion.
\begin{claim}\label{cla-frank_le_lpentropy}
For any spanoid $\cS$, $\frank(\cS)\le \LPentropy(\cS)\le \rank(\cS).$
\end{claim}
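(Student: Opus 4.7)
The claim comprises two inequalities, and I would treat them independently.

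For $\frank(\cS) \le \LPentropy(\cS)$: given any finite alphabet $\Sigma$ and any code $C \subseteq \Sigma^n$ consistent with $\cS$, I would let $X = (X_1, \dots, X_n)$ be a uniform sample from $C$ and define $f(S) = H(X_S)/\log|\Sigma|$ for $S \subseteq [n]$. The key step is to check that this $f$ is a feasible point of the LP~(\ref{eq-LPentropy}). The four constraint families are verified as follows: $f(\emptyset) = 0$ is immediate; $f(\{i\}) \le 1$ since $X_i$ is supported on $\Sigma$; monotonicity and submodularity are exactly the basic information inequalities spelled out in the paragraphs preceding~(\ref{eq-LPentropy}); and the span constraint $f(A \cup \{i\}) = f(A)$ whenever $A \spans i$ follows from consistency of $C$ with $\cS$, because any derivation of $i$ from $A$ is a finite composition of the fixed functions $f_{S,j}$, so $X_i$ is a deterministic function of $X_A$ and $H(X_i \mid X_A) = 0$. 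The objective at this $f$ equals $\log|C|/\log|\Sigma| = \dim(C)$, and taking the supremum over all consistent codes gives the inequality.

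For $\LPentropy(\cS) \le \rank(\cS)$: fix a minimum-size spanning set $A \subseteq [n]$, so $r := |A| = \rank(\cS)$ and $\spn(A) = [n]$. I would bound $f([n]) \le r$ for any feasible $f$ in two steps. First, $f([n]) = f(A)$: enumerate $[n] \setminus A = \{i_1, \dots, i_{n-r}\}$ in any order; monotonicity of the spanoid's closure operator gives $i_j \in \spn(A) \subseteq \spn(A \cup \{i_1, \dots, i_{j-1}\})$, so the LP's equality constraint yields $f(A \cup \{i_1, \dots, i_j\}) = f(A \cup \{i_1, \dots, i_{j-1}\})$, and chaining these equalities gives $f([n]) = f(A)$. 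Second, $f(A) \le |A|$: combining the submodular inequality with $f(\emptyset) = 0$ gives subadditivity $f(B \cup C) \le f(B) + f(C)$ for disjoint $B,C$; iterating yields $f(A) \le \sum_{i \in A} f(\{i\}) \le |A| = r$. Maximizing over feasible $f$ completes the bound.

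The whole argument is essentially routine; entropy's submodularity and monotonicity do all the work. The only mildly subtle point is upgrading the consistency hypothesis from the initial spanoid rules to arbitrary derivations, which is immediate because compositions of deterministic maps are deterministic. I do not anticipate a genuine obstacle.
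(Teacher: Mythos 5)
Your proof is correct and follows essentially the same route as the paper, which derives both inequalities from the discussion preceding the LP: the entropy function of a uniform codeword (normalized by $\log|\Sigma|$) is a feasible point with objective $\dim(C)$, and any feasible $f$ satisfies $f([n])=f(A)\le|A|$ for a spanning set $A$. Your explicit check that consistency with the initial rules propagates to all derivations $A\spans i$ (so the LP's equality constraints really do hold) is a detail the paper leaves implicit, and it is handled correctly.
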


Note that the functional rank could be smaller than $\LPentropy(\cS)$ (though we do not know of an explicit example). This is because the basic information inequalities do not characterize the entropic region of more than 3 random variables. It might be possible to obtain better upper bounds on the functional rank by using non-Shannon type information inequalities. But in the case of the pentagon spanoid $\Pi_5$ defined in Figure~\ref{fig-pentagon}, we will show that $\LPentropy(\cS)$ gives the tight upper bound.

\begin{claim}\label{cla-pentagon-upper}
	$\frank(\Pi_5)= \LPentropy(\Pi_5) = 2.5 < 3 = \rank(\Pi_5)$.
\end{claim}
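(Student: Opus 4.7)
The plan is to invoke Claim~\ref{cla-pentagon-lower} (already giving $\frank(\Pi_5)\ge 2.5$) together with the sandwich $\frank(\cS)\le \LPentropy(\cS)\le \rank(\cS)$ from Claim~\ref{cla-frank_le_lpentropy}, so that it suffices to verify $\rank(\Pi_5)=3$ and $\LPentropy(\Pi_5)\le 2.5$. The rank bound is by direct inspection of the five rules: any adjacent pair $\{x_i,x_{i+1}\}$ produces only $x_{i+3}$, and no remaining rule has its premise inside the resulting $3$-set, so $\spn(\{x_i,x_{i+1}\})=\{x_i,x_{i+1},x_{i+3}\}\neq[5]$; diagonal pairs span even less; but $\{x_1,x_2,x_3\}$ spans $x_4$ via $\{x_1,x_2\}\to x_4$ and $x_5$ via $\{x_2,x_3\}\to x_5$, hence spans $[5]$.

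For the LP upper bound I would first exploit the cyclic $\Z/5\Z$-symmetry: the spanoid, the LP constraints, and the objective are all rotation-invariant, so averaging any feasible $f$ over the five rotations gives a cyclically symmetric feasible solution with the same value $f([5])$. Such an $f$ is determined by four parameters $a=f(\{i\})$, $b_e=f(\{i,i+1\})$, $b_d=f(\{i,i+2\})$, $t_c=f(\{i,i+1,i+2\})$. The rule $\{i,i+1\}\to i+3$ forces $f(\{i,i+1,i+3\})=b_e$ on every non-consecutive triple, while two successive rule applications show that each consecutive triple already spans $[5]$, so $f([5])=t_c$ and every $4$-subset also equals $t_c$.

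The core of the argument is to combine the right two submodular inequalities. The standard submodularity $f(\{1,2\})+f(\{1,3\})\ge f(\{1,2,3\})+f(\{1\})$ becomes (A): $b_e+b_d\ge t_c+a$. The submodularity $f(\{1,2,4\})+f(\{2,4,5\})\ge f(\{1,2,4,5\})+f(\{2,4\})$, once the two LHS terms are collapsed to $b_e$ via the spanoid rules and one observes that $\{1,2,4,5\}$ spans $[5]$ (apply $\{5,1\}\to 3$), becomes (B): $2b_e\ge t_c+b_d$. Adding A and B yields $3b_e\ge 2t_c+a$, and the elementary inequality $b_e\le f(\{1\})+f(\{2\})=2a$ from submodularity then gives $2t_c+a\le 3b_e\le 6a$, so $t_c\le 5a/2\le 5/2$ using $a\le 1$, as required.

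I expect the main obstacle to be purely conceptual, namely identifying this specific pair of inequalities: basic tools such as Shearer's lemma on the five adjacent pairs only yield $R\le 5b_e/2$, and single submodular splits (e.g.\ combining only $2b_e\ge t_c+b_d$ with $2b_d\ge t_c$) get stuck around $R\le 8/3$. It is essential to couple the ``same-vertex'' inequality (A) with the ``opposite-edges'' inequality (B) whose strength comes from the spanoid collapsing two different non-consecutive triples $\{1,2,4\}$ and $\{2,4,5\}$ to the common value $b_e$. One might worry that non-Shannon inequalities are needed for $n=5$ variables, but the matching code from Claim~\ref{cla-pentagon-lower} rules this out: the Shannon-type LP already achieves the tight value $2.5$.
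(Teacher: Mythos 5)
Your proof is correct and follows essentially the same route as the paper: the lower bound comes from Claim~\ref{cla-pentagon-lower} and the sandwich of Claim~\ref{cla-frank_le_lpentropy}, and the upper bound $\LPentropy(\Pi_5)\le 2.5$ is obtained by combining two Shannon-type submodularity inequalities with the spanoid constraints. Your symmetrization-first reduction and your particular choice of submodular splits differ cosmetically from the paper's (which sums its combined inequality over the five rotations at the end), but both arrive at the identical aggregate inequality $3f(\{i,i+1\})\ge f(\{i\})+2f([5])$ in symmetric form and finish the same way.
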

\begin{proof}
We will show that value of the LP in Equation~(\ref{eq-LPentropy}) is at most $2.5$.
By the spanoid rules, we know that $f(\{i,i+2,i+3\})=f(\{i+2,i+3\})$ for every $i\in [5]$ where the addition is modulo $5$. We want to upper bound $f([5])$. By submodularity of $f$,
$$f(\{1,3,4\})+f(\{1,2,4\} \ge f(\{1,4\})+f(\{1,2,3,4\})$$
$$f(\{1,4\})+f(\{1,5\} \ge f(\{1\})+f(\{1,4,5\}).$$
By the inference rules, $f(\{1,3,4\})=f(\{3,4\})$, $f(\{1,2,4\})=f(\{1,2\})$ and $f(\{1,2,3,4\})=f(\{1,4,5\})=f(\{1,2,3,4,5\})$. Therefore the above two inequalities imply,
$$f(\{1,2\})+f(\{1,5\})+f(\{3,4\})\ge f(\{1\}) + 2f(\{1,2,3,4,5\}).$$
By rotational symmetry, we can obtain five inequalities of this form. Summing them, and observing that every adjacent pair of vertices is counted three times on the l.h.s, we get
$$3\sum_{i} f(\{i,i+1\}) \ge \sum_{i} f(\{i\}) + 10 f(\{1,2,3,4,5\}).$$
Upper bounding $f(\{i,j\})$ by $f(\{i\})+f(\{j\})$, we get:
\begin{align*}
& 6\sum_{i} f(\{i\}) \ge \sum_{i} f(\{i\}) + 10 f(\{1,2,3,4,5\})\\
\Rightarrow & f(\{1,2,3,4,5\}) \le \frac{1}{2} \sum_{i} f(\{i\}) \le \frac{5}{2}
\end{align*}
Therefore by Claim~\ref{cla-frank_le_lpentropy}, $\frank(\Pi_5)\le \LPentropy(\Pi_5)\le 2.5$. By Claim~\ref{cla-pentagon-lower}, $\frank(\Pi_5)\ge 2.5$. This implies the required claim.
\end{proof}

%

\subsection{A lower bound on functional rank via $\LPcover$}
\label{sec-LPcover}
In this section, we will prove lower bounds on functional rank by constructing consistent codes. And the best code one can construct in this way is captured by a very natural linear programming relaxation of the spanoid rank called $\LPcover$. The code constructed in Claim~\ref{cla-pentagon-lower} for the pentagon spanoid $\Pi_5$ can be viewed as an instance of a more general scheme based on a union set-representation of a spanoid. 
\begin{construction}\label{const-lprank}
Let $(S_1,S_2,\dots,S_n)$ be a union set-representation for the spanoid $\cS$ where $S_1,\dots,S_n$ are subsets of a universe $U=\cup_i S_i$ and each $S_i$ is of size at most $\ell$. So whenever $T\spans i$ in $\cS$, $S_i \subset \cup_{t\in T} S_t$. Such a representation can be used to define a consistent code $C$ of dimension $|U|/\ell$ as follows. The codewords are images of the map $C: \{0,1\}^U \to \Sigma^n$ where $\Sigma=\{0,1\}^\ell$ given by $C(x)_i=(x_u)_{u\in S_i}$. It is easy to check that this indeed gives a code consistent with $\cS$ and the dimension of the code is $|U|/\ell$. 
\end{construction}

We will now show that the best consistent code (i.e. of highest dimension) based on this approach can be characterized by an LP.
Note that by Claim~\ref{cla-rank-hittingset}, the rank of a spanoid $\cS$, is the size of the smallest hitting set for $\cO$, the set of all open sets of $\cS$. We can write an LP relaxation for the smallest hitting set for $\cO$. Let $\cO^*\subset \cO$ be the set of minimal open sets, it is enough to hit every set in $\cO^*$.
\begin{equation}\label{eq-LPrank}
\begin{aligned}
\LPcover(\cS)=\min &\sum_{i=1}^n x_i\\
&x_i\ge 0\\
&\sum_{i\in S} x_i \ge 1\ \forall S\in \cO^*.
\end{aligned}
\end{equation}
Since the LP is a relaxation, $$\rank(\cS)\ge \LPcover(\cS).$$ We can round an LP solution to get an integral hitting set by losing a factor of $O(\vc(\cO^*)\cdot \log(\LPcover(\cS)))$ where $\vc(\cO^*)$ is the VC-dimension of $\cO^*$~\cite{EvenRS05,BronnimannG95}. Therefore, $$1\le \frac{\rank(\cS)}{\LPcover(\cS)}\lesssim \vc(\cO^*)\cdot \log(\LPcover(\cS)).$$ Note that $\vc(\cO^*)\le \log |\cO^*|$ always.
By LP duality, we can write a dual LP for $\LPcover(\cS)$ with a dual variable $\lambda_S$ for every $S\in \cO^*$,
\begin{equation} \label{eq-dual-LPrank}
\begin{aligned}
\LPcover(\cS)=\max &\sum_{S\in \cO^*} \lambda_S\\
&\lambda_S\ge 0\\
&\sum_{S\ni i}\lambda_S \le 1\ \forall i \in [n].
\end{aligned}
\end{equation}

The following lemma shows that $\LPcover(\cS)$ is the largest dimension of a consistent code one can obtain using union set-representation of a spanoid as shown in Construction~\ref{const-lprank}. Since the construction of a consistent code for the pentagon spanoid $\Pi_5$ in Claim~\ref{cla-pentagon-lower} is obtained by Construction~\ref{const-lprank} and since we know that $\frank(\Pi_5)=2.5$, it turns out that $\LPcover(\Pi_5)= 2.5.$

\begin{lem}\label{lem-LPrankisSetrank}
Let $C$ be a code consistent with $\cS$ obtained by using Construction~\ref{const-lprank}, then  $\dim(C) \leq \LPcover(\cS)$. Moreover, there exists a code $C$ obtained by using Construction~\ref{const-lprank} giving equality and so $$\frank(\cS)\ge \LPcover(\cS).$$
\end{lem}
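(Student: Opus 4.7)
The plan is to establish both directions via LP duality applied to the $\LPcover$ program. For the upper bound, I will turn any union set-representation of size $\ell$ into a feasible solution of the dual LP~(\ref{eq-dual-LPrank}) with value $|U|/\ell$. For the matching construction, I will start from an optimal dual solution, clear denominators, and assemble the universe as a disjoint union of scaled copies of the minimal open sets.

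For the upper bound, fix a union set-representation $(S_1,\ldots,S_n)$ with $U=\cup_i S_i$ and $|S_i|\le \ell$, yielding by Construction~\ref{const-lprank} a consistent code $C$ of dimension $|U|/\ell$. The key observation is that for every $u\in U$ the set $T_u:=\{i\in [n]:u\in S_i\}$ is open in $\cS$: if $A\subseteq \bar T_u$ and $A\spans i$, then the union representation gives $S_i\subseteq \cup_{j\in A}S_j$; since $u\notin S_j$ for every $j\in A$, we get $u\notin S_i$, i.e. $i\in \bar T_u$, so $\bar T_u$ is closed. For each $u$, pick a minimal open set $T'_u\subseteq T_u$ (such a $T'_u$ exists by finiteness and is globally minimal, since any open set strictly inside it would also be inside $T_u$) and define $\lambda_S = |\{u\in U:T'_u = S\}|/\ell$ for $S\in \cO^*$. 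Then $\sum_S \lambda_S = |U|/\ell$, and for every $i\in [n]$,
\[
\sum_{S\ni i}\lambda_S \;=\; \tfrac{1}{\ell}\,|\{u:i\in T'_u\}| \;\le\; \tfrac{1}{\ell}\,|\{u:i\in T_u\}| \;=\; \tfrac{|S_i|}{\ell}\;\le\; 1,
\]
so $\lambda$ is dual-feasible and $\dim(C)=|U|/\ell \le \LPcover(\cS)$.

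For the existence direction, take an optimal rational solution $\{\lambda_S\}_{S\in \cO^*}$ to the dual LP~(\ref{eq-dual-LPrank}), pick a common denominator $N$ with every $N\lambda_S\in \Z_{\ge 0}$, set $\ell=N$, and define $U=\bigsqcup_{S\in \cO^*}\{S\}\times [N\lambda_S]$ and $T_i=\{(S,k)\in U:i\in S\}$. Dual feasibility gives $|T_i|=\sum_{S\ni i}N\lambda_S\le N=\ell$, and $|U|/\ell=\sum_S\lambda_S=\LPcover(\cS)$. It remains to verify the forward implication $A\spans i\Rightarrow T_i\subseteq \cup_{j\in A}T_j$ required by Construction~\ref{const-lprank} for consistency: for any $(S,k)\in T_i$ we have $i\in S$ with $S$ open; were $A\cap S=\emptyset$ then $A\subseteq \bar S$ would force $\spn(A)\subseteq \bar S$, contradicting $i\in \spn(A)\cap S$, so some $j\in A\cap S$ witnesses $(S,k)\in T_j$. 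Construction~\ref{const-lprank} then yields a consistent code of dimension $|U|/\ell=\LPcover(\cS)$, giving $\frank(\cS)\ge \LPcover(\cS)$. The main subtlety is identifying the right correspondence between $U$ and minimal open sets of $\cS$; once that is pinned down, the rest is routine LP duality combined with the elementary closure property of $\spn$.
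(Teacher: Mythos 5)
Your proposal is correct and follows essentially the same route as the paper: the upper bound is obtained by mapping each universe element $u$ to the open set $\{i: u\in S_i\}$, shrinking to a minimal open set, and reading off a feasible solution of the dual LP~(\ref{eq-dual-LPrank}) with value $|U|/\ell$; the matching construction clears denominators in an optimal dual solution and builds the universe from $N\lambda_S$ copies of each minimal open set, with the same closure argument ($A\cap S\neq\emptyset$ whenever $A\spans i$ and $i\in S$ open) establishing consistency. No gaps.
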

\begin{proof}
We will first show that one can get a code whose rank is at least $\LPcover(\cS)$.
Let $\cS$ be a spanoid on $n$ elements with minimal open sets $\cO^*$.
Let $\lambda_S$ be the optimal solution to the dual LP~(\ref{eq-dual-LPrank}) for $\LPcover(\cS)$. Since $\lambda_S$ are rational numbers, let $N$ be least common multiple of all the $\lambda_S$. We will form a multiset $H$ of minimal open sets in  $\cO^*$ where each open set $S\in \cO^*$ appears in $H$ for $\lambda_S N$ number of times. Let $m=|H|=\sum_S{\lambda_S N} = N\LPcover(\cS)$. Let $\Sigma=\{0,1\}^N$.

If we write $H = \{F_1,\ldots,F_m\}$ then a codeword will be indexed by a tuple of bits $x = (x_1,\ldots,x_m) \in \{0,1\}^m$. To construct the corresponding codeword $w(x) \in \Sigma^n$ we need to specify the value of $w(x)_i \in \Sigma = \{0,1\}^N$ for each $i =1,2,\ldots,n$. We set that value to be $w(x)_i = (x_j \,|\, i \in F_j)$ (ordered in increasing order of $j$). That is, we associate a bit with each open set in $H$, and assign the value of a coordinate $i$ to be the list of bits for all sets in $H$ containing $i$. Note that the alphabet size is bounded by $N$ because $|\{j: i\in F_j\}|=\sum_{S\ni i}\lambda_S N\le N$.

We now need to show that, if $T \spans i$ is a rule of $\cS$ then one can recover $w(x)_i$ from $(w(x)_t)_{t \in T}$. Suppose $F_j$ is an open set containing $i$ so that $x_j$ appears in the symbol $w(x)_i$. Then, there must be an element $t \in T$ so that $F_j$ contains $t$ (otherwise, $F_j$ cannot contain $i$). Thus, the value $x_j$ can be computed from $w(x)_t$. Since this holds for any $x_j$ appearing in $w(x)_i$ we are done.

We will now show that the dimension of a code obtained using Construction~\ref{const-lprank} cannot be better than 
$\LPcover(\cS)$. Let $S_1,\dots,S_n$ be the subsets  of size at most $\ell$ from some universe $U$ obtained from Construction~\ref{const-lprank} such that whenever $T\spans i$ in $\cS$, $S_i \subset \cup_{j\in T}S_j$. Such a set system gives a code with dimension $|U|/\ell$. We will show that such a code also gives a feasible solution to the dual LP~(Eqn. \ref{eq-dual-LPrank}). For $u\in U$, let $F_u=\{i: u\in S_i\}$. The complement of $F_u$ is closed since the union of sets which don't have $u$ cannot contain any set which has $u$, therefore $F_u$ is an open set in $\cS$. Let $F_u^*$ be a minimal open set contained in $F_u$. For a minimal open set $F\in \cO^*$, set $$\lambda_F=\frac{1}{\ell}|\{u\in U: F_u^*=F\}|.$$
Let $i\in [n]$, then 
\begin{align*}
\sum_{F\in \cO^*, F\ni i} \lambda_F &= \sum_{F\in \cO^*, F\ni i} \frac{1}{\ell} \sum_{u\in U} \mathbf{1}(F_u^*=F)\\
&= \sum_{u\in U} \sum_{F\in \cO^*,F\ni i} \frac{1}{\ell} \mathbf{1}(F_u^*=F)\\
&= \sum_{u\in U} \frac{1}{\ell} \mathbf{1}(i\in F_u^*)\\
&\le \sum_{u\in U} \frac{1}{\ell} \mathbf{1}(i\in F_u)=\frac{|S_i|}{\ell}=1.
\end{align*}
Thus this assignment is a feasible solution to the dual LP~(Eqn. \ref{eq-dual-LPrank}) with objective value $$\sum_{F\in \cO^*} \lambda_F= \sum_{F\in \cO^*}\sum_{u\in U}\frac{1}{\ell}\mathbf{1}(F_u^*=F)=\sum_{u\in U}\sum_{F\in \cO^*}\frac{1}{\ell}\mathbf{1}(F_u^*=F)=\sum_{u\in U}\frac{1}{\ell}=\frac{|U|}{\ell}.$$
This implies that $\LPcover(\cS)\ge |U|/\ell$.
\end{proof}

The code construction based on sets achieving dimension equal to $\LPcover(\cS)$ constructed in the above above lemma needed very large alphabet. The following lemma shows that by using random sampling, one can get very small alphabet and still achieve dimension close to $\LPcover(\cS)$.

\begin{lem}
Let $\cS$ be a spanoid on $[n]$, then there exists a code $C \subset \Sigma^n$ consistent with $\cS$, obtained as in Construction~\ref{const-lprank} of dimension $r$ over an alphabet $\Sigma$ such that $\log |\Sigma| \lesssim \log n/\log\log n$ and $$r \gtrsim \frac{\log\log n}{\log n}\cdot \LPcover(\cS).$$
\end{lem}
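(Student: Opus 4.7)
The plan is to randomize the deterministic choice of open sets made in Lemma~\ref{lem-LPrankisSetrank}. Starting from an optimal dual solution $\{\lambda_F\}_{F \in \cO^*}$ of the LP~(\ref{eq-dual-LPrank}), we have $\sum_F \lambda_F = \LPcover(\cS)$ and $\sum_{F \ni i} \lambda_F \le 1$ for every $i \in [n]$. Form the probability distribution $p_F = \lambda_F / \LPcover(\cS)$ on $\cO^*$, sample $F_1,\dots,F_m$ independently from $p$ with $m = \lceil \LPcover(\cS) \rceil$, and feed these into Construction~\ref{const-lprank}: set $U = [m]$ and $S_i = \{j \in [m] : i \in F_j\}$, producing a code over $\Sigma = \{0,1\}^\ell$ with $\ell = \max_i |S_i|$ and dimension $m/\ell$. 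Consistency holds verbatim as in Lemma~\ref{lem-LPrankisSetrank}: whenever $T \spans i$ in $\cS$, every open set $F_j$ containing $i$ must intersect $T$ (otherwise $T \subseteq \overline{F_j}$, a closed set, would force $i \in \spn(T) \subseteq \overline{F_j}$), so every bit appearing in coordinate $i$ already appears in some coordinate $t \in T$.

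The heart of the argument is showing that $\ell$ is small with positive probability. For each fixed $i$, $|S_i|$ is a sum of $m$ independent Bernoullis with success probability $\sum_{F \ni i} p_F \le 1/\LPcover(\cS)$, so
$$\mu_i := \E[|S_i|] \le m/\LPcover(\cS) \le 2.$$
A standard Chernoff bound yields $\prob[|S_i| \ge t] \le (e\mu_i/t)^t$ for $t \ge \mu_i$. Setting $t = \ell := C\log n/\log\log n$ for a sufficiently large absolute constant $C$, the right-hand side is at most $n^{-2}$, and a union bound over $i \in [n]$ shows that with probability at least $1 - 1/n$ we have $|S_i| \le \ell$ for every $i$ simultaneously.

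Fixing any sample in this good event, we obtain a code consistent with $\cS$ whose alphabet satisfies $\log|\Sigma| = \ell = O(\log n/\log\log n)$ and whose dimension is
$$r = m/\ell \gtrsim \frac{\log\log n}{\log n} \cdot \LPcover(\cS),$$
exactly as required. I do not anticipate any real obstacle: the only minor care required is the rounding (using $m = \lceil \LPcover(\cS) \rceil$ rather than a real number) and the trivial degenerate case $\LPcover(\cS) < 1$, which can be handled separately. The substance of the lemma is simply that random sampling from the LP dual weights spreads the load on each coordinate so that Chernoff concentration kicks in, converting the unbounded alphabet of Lemma~\ref{lem-LPrankisSetrank} into one of near-logarithmic size at the cost of only a $(\log\log n)/\log n$ factor in dimension.
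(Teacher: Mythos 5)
Your proposal is correct and follows essentially the same route as the paper: randomized rounding of the optimal dual solution of LP~(\ref{eq-dual-LPrank}), a $(e\mu/t)^t$ tail bound plus a union bound to control the maximum multiplicity $\ell$ at $O(\log n/\log\log n)$, and then Construction~\ref{const-lprank}. The only (immaterial) difference is that you draw a fixed number $m=\lceil \LPcover(\cS)\rceil$ of sets i.i.d.\ from the normalized weights, whereas the paper includes each minimal open set independently with probability $\lambda_S$ and must separately lower-bound the resulting universe size.
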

\begin{proof}
Let us assume that $\LPcover(\cS)\ge \log n/\log \log n$, since otherwise the bound is trivial. Form a random subset $H\subset \cO^*$ by sampling each set $S\in \cO^*$ with probability $\lambda_S$ where $\lambda_S$ is the optimal solution to the dual LP~(\ref{eq-dual-LPrank}). Let $Z_S$ be the random variable that $S$ is included in $H$.
$$\E[|H|]=\E[\sum_{S\in \cO^*} Z_S]=\sum_{S\in \cO^*}\lambda_S=\LPcover(\cS).$$
$$\prob\left[|H|\le \frac{1}{2}\E|H|\right]\le \exp\left(-\Omega(\E|H|)\right)=o(1).$$

Let $\Delta(H)$ be the maximum number of sets in $H$ that an element of $\cS$ belongs to. We want to show that $\Delta(H)$ is small with good probability. Fix some $i\in [n]$. The number of subsets in $H$ which contain $i$ is $\sum_{S\ni i} Z_S$.
\begin{align*}
\prob\left[\sum_{S\ni i} Z_S \ge t\right]\le \left(\frac{e\E[\sum_{S\ni i} Z_S]}{t}\right)^t = 
\left(\frac{e\sum_{S\ni i} \lambda_S}{t}\right)^t \le \left(\frac{e}{t}\right)^t
\end{align*}
By union bound, $$\prob[\Delta(H)\ge t]=\prob\left[\exists i\in [n]\ :\ \sum_{S\ni i} Z_S \ge t\right]\le  n\left(\frac{e}{t}\right)^t =o(1)$$
if $t=e\log n/\log\log n$. Therefore there exists an $H\subset \cO^*$ such that $|H|\ge \LPcover(\cS)/2$ and $\Delta(H)\le e\log n/\log \log n$. By imitating the proof of Lemma~\ref{lem-LPrankisSetrank} using this $H$, we get the required result.
\end{proof}

\textbf{Gaps between $\LPcover$ and functional rank:}
Any code that is obtained by using Construction~\ref{const-lprank} from a $q$-LCS with error-tolerance $\delta$ will have dimension at most $O(q/\delta)$. Indeed, suppose $S_1,\dots,S_n\subset U$ are the sets used in the construction. Then each set $S_i$ is contained in $(\delta/q) n$ disjoint $q$-subsets of sets in $\{S_j: j\ne i\}$ such that $S_i \subset \cup_{j \in T}S_j$ for each $q$-subset $T$. Therefore each element of $U$ occurs in $(\delta/q)$ fraction of sets in $S_1,\dots,S_n$. Therefore a typical set in $S_1,\dots,S_n$ contains $(\delta/q)$ fraction of elements from $U$. Therefore the dimension of such a code can be at most $q/\delta$. This also implies that the $\LPcover$ of a $q$-LCS is $O(q/\delta)$. Since the rank of a $q$-LCS (with some constant $\delta$) can be $\tilde{\Omega}(n^{1-\frac{1}{q-1}})$ (Theorem~\ref{thm-q-spanoids}), this shows that $\LPcover$ can be much smaller than the rank of a spanoid. In fact $\LPcover$ can be much smaller than the functional rank ($\frank$) of a spanoid. $\exp(\tilde{O}(\sqrt{\log n}))$-query LCCs (for some constant $\delta$) of length $n$ with linear dimension are known to exist~\cite{KoppartyMRS17}. This implies that functional rank of the corresponding spanoid is $\Omega(n)$ whereas $\LPcover$ is at most $\exp(\tilde{O}(\sqrt{\log n}))$.

\section{Products of Spanoids}
\label{sec-spanoidproduct}
We have shown that functional rank can be strictly smaller than the rank of a spanoid (Claim~\ref{cla-pentagon-upper}), but it was only a constant factor gap. Can we construct spanoids whose $\rank(\cS)$ and $\frank(\cS)$ have a polynomial gap? One way to achieve this is by constructing product operations on spanoids which amplify the gap between $\rank$ and $\frank$. Suppose that given two spanoids $\cS_1$ and $\cS_2$ on $[n_1]$ and $[n_2]$ respectively, we can construct a product spanoid $\cS_1\times\cS_2$ on $[n_1]\times [n_2]$ such that, $\rank(\cS_1\times\cS_2)\ge \rank(\cS_1)\cdot \rank(\cS_2)$ and $\frank(\cS_1\times\cS_2) \le \frank(\cS_1)\cdot \frank(\cS_2)$. Then by starting with the pentagon spanoid and taking the above product several times, one can get a spanoid on $n$ elements with $n^{\Omega(1)}$ gap between $\rank$ and $\frank$. With this motivation, we study a few ways natural ways to define products of spanoids and study how various notions of rank we defined behave under these products.
The following lemma will be useful to compare the rank measures of two spanoids. It shows that a spanoid with more open sets has higher $\rank$, $\frank$, $\LPcover$ and $\LPentropy$. Intuitively, this is because in the spanoid with more open (closed) sets, you can make fewer inferences.
\begin{lem}\label{lem-compare-spanoids}
If $\cS,\cS'$ are two spanoids on $X$ with open sets $\cO_{\cS},\cO_{\cS'}$ respectively. If $\cO_{\cS}\subset \cO_{\cS'}$ then
\begin{align*}
\rank(\cS)&\le \rank(\cS')\\
\LPcover(\cS)&\le \LPcover(\cS')\\
\frank(\cS)&\le \frank(\cS')\\
\LPentropy(\cS)&\le \LPentropy(\cS').
\end{align*}
\end{lem}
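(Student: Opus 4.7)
The plan is to reduce all four inequalities to a single structural observation. Complementing each open set, the hypothesis $\cO_\cS\subset\cO_{\cS'}$ is equivalent to $\cC_\cS\subset\cC_{\cS'}$; that is, every closed set of $\cS$ is also closed in $\cS'$. Using the characterization in Claim~\ref{cla-flatsinter}(3) that $\spn(A)$ equals the intersection of all closed sets containing $A$, the fact that $\cS'$ has a larger family of closed sets means the defining intersection in $\cS'$ is taken over a larger collection, yielding a smaller set. Thus for every $A \subset [n]$,
\[
\spn_{\cS'}(A)\subset \spn_{\cS}(A),
\]
so every inference valid in $\cS'$ is also valid in $\cS$. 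Informally, $\cS'$ has weaker inference power than $\cS$.

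The bounds on $\rank$ and $\frank$ then fall out immediately. For rank, if $A$ spans $[n]$ in $\cS'$ then $\spn_{\cS}(A)\supset \spn_{\cS'}(A)=[n]$, so $A$ also spans $[n]$ in $\cS$; minimizing over a superset of spanning sets gives $\rank(\cS)\le \rank(\cS')$. For functional rank, every code $C$ consistent with $\cS$ is automatically consistent with $\cS'$: an inference $(A,i)$ of $\cS'$ is also an inference of $\cS$, so the determination of $c_i$ from $c|_A$ is already guaranteed. Hence the supremum defining $\frank(\cS')$ is taken over a larger class of codes, giving $\frank(\cS)\le \frank(\cS')$.

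For $\LPentropy$, the only constraints in the LP~(\ref{eq-LPentropy}) that distinguish the two spanoids are the equalities $f(A\cup\{i\})=f(A)$ whenever $A\spans i$. Since $\spn_{\cS'}\subset \spn_{\cS}$, the set of such constraints for $\cS$ contains the set of such constraints for $\cS'$, so the feasible region for $\cS$ is contained in that for $\cS'$, and the maximum of $f([n])$ can only grow. For $\LPcover$, I would first note that the LP~(\ref{eq-LPrank}) is unchanged if we replace $\cO^*$ by $\cO$, since the covering constraint for any open set is implied by that of any minimal open set it contains (together with $x_i\ge 0$). Written this way, the LP for $\cS'$ has constraints indexed by $\cO_{\cS'}\supset\cO_\cS$, i.e.\ at least as many constraints, so every feasible solution for $\LPcover(\cS')$ is feasible for $\LPcover(\cS)$, yielding $\LPcover(\cS)\le \LPcover(\cS')$.

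No step is really difficult; the only thing requiring care is keeping the directions consistent, since the implications ``more open sets $\Rightarrow$ more closed sets $\Rightarrow$ smaller spans $\Rightarrow$ fewer inferences'' invert twice along the way: a weaker spanoid $\cS'$ has \emph{larger} rank (harder to span everything) but \emph{larger} functional and entropy ranks (more codes are consistent, more entropy profiles are allowed). Once this is tracked carefully, all four bounds become one-line consequences of the span inclusion above.
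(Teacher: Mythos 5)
Your proposal is correct and follows essentially the same route as the paper: both establish $\spn_{\cS'}(A)\subset\spn_{\cS}(A)$ from the closed-set inclusion and derive the $\frank$ and $\LPentropy$ bounds from the resulting inclusion of inference rules, while the $\rank$ and $\LPcover$ bounds follow from monotonicity of (fractional) hitting sets over the larger family $\cO_{\cS'}$. Your extra remark that the covering LP is unchanged when $\cO^*$ is replaced by $\cO$ is a correct and welcome clarification of a point the paper leaves implicit.
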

\begin{proof}
Since $\rank(\cS)$ is the smallest hitting set for $\cO_{\cS}$ and $\LPcover(\cS)$ is the smallest fractional hitting set for $\cO_\cS$ and $\cO_{\cS}\subset \cO'_{\cS'}$, the first two inequalities easily follow. To show the inequality for functional ranks, we will show that if $A \spans_{\cS'} i$ then $A \spans_{\cS} i$ i.e. $\cS$ has more inference rules than $\cS'$. Thus any code consistent with $\cS$ is also consistent with $\cS'$. This also implies the inequality for $\LPentropy$ because, the corresponding maximization LP for $\cS$ has more constraints than for $\cS'$ and so the maximum is smaller.

The closed sets $\cC_{\cS}$ and $\cC_{\cS'}$ of $\cS$ and $\cS'$ also satisfy $\cC_{\cS}\subset \cC_{\cS'}$. $A \spans_{\cS} i$ iff $i\in \spn_{\cS}(A)$. By Claim~\ref{cla-spanoid-closedsets}, $\spn_\cS(A)=\bigcap_{\{B: B\in \cC_\cS, B \supset A\}} B$ and so $\spn_\cS(A)\supset \spn_{\cS'}(A)$. So if $A \spans_{\cS'} i$ then $A \spans_{\cS} i$.
\end{proof}

\subsection{Product $\cS_1\odot \cS_2$}
\label{sec-product-otimes}
In this subsection, we will define a natural product operation where the open sets in the product are product of open sets and their unions. 
\begin{define}
Given two spanoids $\cS_1,\cS_2$ on sets $X_1,X_2$ with collection of open sets $\cO_1,\cO_2$ respectively, the product spanoid $\cS_1\odot \cS_2$ is an spanoid on $X_1\times X_2$ with open sets given by unions of sets $A\times B$ where $A\in \cO_1, B\in\cO_2$. 
\end{define}
We will now show that under this product, $\LPcover$ is multiplicative.
\begin{lem}
$\LPcover(\cS_1\odot \cS_2)=\LPcover(\cS_1)\cdot \LPcover(\cS_2)$
\end{lem}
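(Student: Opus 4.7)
The plan is to use LP duality together with a clean characterization of the minimal open sets of the product. Let $\cO_i^*$ denote the minimal (non-empty) open sets of $\cS_i$ for $i=1,2$, and let $\cO^*$ denote those of $\cS_1 \odot \cS_2$.

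The main structural step I would carry out first is to show that
$$ \cO^* \;=\; \{\, A \times B \,:\, A \in \cO_1^*,\ B \in \cO_2^* \,\}.$$
One direction: if $U \in \cO^*$, then by definition of $\odot$ write $U = \bigcup_\ell (A_\ell \times B_\ell)$ with $A_\ell \in \cO_1$, $B_\ell \in \cO_2$; minimality forces $U = A_\ell \times B_\ell$ for some $\ell$, and if $A_\ell$ were not minimal we could replace it by a strictly smaller open $A' \subsetneq A_\ell$ in $\cO_1$, yielding a strictly smaller non-empty open subset $A'\times B_\ell$ of $U$, contradiction; similarly $B_\ell \in \cO_2^*$. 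Conversely, for $A \in \cO_1^*$, $B \in \cO_2^*$, any open subset of $A\times B$ has the form $\bigcup_\ell (A_\ell \times B_\ell)$ with $A_\ell \subset A$ and $B_\ell \subset B$; minimality of $A,B$ forces each non-empty $A_\ell$ to equal $A$ and each non-empty $B_\ell$ to equal $B$, so the union is either empty or all of $A\times B$.

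Given this, the upper bound $\LPcover(\cS_1\odot\cS_2) \le \LPcover(\cS_1)\cdot\LPcover(\cS_2)$ follows by tensoring optimal primal solutions. Let $x^{(1)},x^{(2)}$ be optimal for the primal LP~(\ref{eq-LPrank}) of $\cS_1$ and $\cS_2$, and set $x_{(i,j)} := x^{(1)}_i x^{(2)}_j$. For any minimal open set $A\times B \in \cO^*$,
$$ \sum_{(i,j)\in A\times B} x_{(i,j)} \;=\; \Bigl(\sum_{i\in A} x^{(1)}_i\Bigr)\Bigl(\sum_{j\in B} x^{(2)}_j\Bigr) \;\ge\; 1\cdot 1 \;=\; 1, $$
so $x$ is feasible, and its objective value is exactly $\LPcover(\cS_1)\cdot\LPcover(\cS_2)$.

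The matching lower bound $\LPcover(\cS_1\odot\cS_2) \ge \LPcover(\cS_1)\cdot\LPcover(\cS_2)$ is proved symmetrically using the dual LP~(\ref{eq-dual-LPrank}): take optimal dual solutions $\lambda^{(1)},\lambda^{(2)}$ and set $\lambda_{A\times B} := \lambda^{(1)}_A\lambda^{(2)}_B$ for each $A\times B\in \cO^*$. Feasibility follows from
$$ \sum_{A\times B\ni (i,j)} \lambda_{A\times B} \;=\; \Bigl(\sum_{A\ni i} \lambda^{(1)}_A\Bigr)\Bigl(\sum_{B\ni j} \lambda^{(2)}_B\Bigr) \;\le\; 1,$$
and the objective is again $\LPcover(\cS_1)\cdot\LPcover(\cS_2)$. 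I expect the only non-routine step to be the characterization of $\cO^*$; the remaining ``tensoring'' of primal and dual solutions is standard once that is in hand.
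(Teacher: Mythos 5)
Your proof is correct and follows essentially the same route as the paper: identify the minimal open sets of $\cS_1\odot\cS_2$ as products of minimal open sets, then tensor optimal primal solutions for the upper bound and optimal dual solutions for the lower bound. The only difference is that you supply an explicit justification of the characterization of $\cO^*$, which the paper asserts without proof.
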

\begin{proof}
Let $\cS=\cS_1\odot \cS_2$. The minimal open sets in $\cS$ are cartesian products of minimal open sets in $\cS_1$ and $\cS_2$ i.e. $$\cO^*_{\cS}=\left\{A\times B: A\in \cO^*_{\cS_1}, B\in \cO^*_{\cS_2}\right\}.$$

\begin{equation}
\begin{aligned}
\LPcover(\cS)=\min &\sum_{i,j=1}^n z_{ij}\\
&z_{ij}\ge 0\\
&\sum_{(i,j)\in A\times B} z_{ij} \ge 1\ \forall A\in \cO^*_{\cS_1}, B\in \cO^*_{\cS_2}.
\end{aligned}
\end{equation}
Let $x,y$ be the optimal solutions to the LPs (of the form~(\ref{eq-LPrank})) corresponding to $\LPcover(\cS_1)$ and $\LPcover(\cS_2)$ respectively. Then $z=x\otimes y$ (i.e. $z_{ij}=x_iy_j$) is a feasible solution to the above LP for $\LPcover(\cS)$. Therefore $$\LPcover(\cS)\le \LPcover(\cS_1)\cdot \LPcover(\cS_2).$$
We can also write the dual LP for $\LPcover(\cS)$.
\begin{equation}
\begin{aligned}
\LPcover(\cS)=\max &\sum_{A\in \cO^*_{\cS_1}, B\in \cO^*_{\cS_2}} \lambda_{ST}\\
&\lambda_{AB}\ge 0\\
&\sum_{A\ni i, B\ni j}\lambda_{AB} \le 1\ \forall i,j \in [n].
\end{aligned}
\end{equation}
Let $\alpha$, $\beta$ be the optimal solutions to the dual LPs (of the form~(\ref{eq-dual-LPrank})) corresponding to $\LPcover(\cS_1)$ and $\LPcover(\cS_2)$ respectively. Then $\lambda=\alpha\otimes \beta$ (i.e. $\lambda_{AB}=\alpha_A\beta_B$) is a feasible solution to the above dual LP for $\LPcover(\cS)$. Therefore $$\LPcover(\cS)\ge \LPcover(\cS_1)\cdot \LPcover(\cS_2).$$
\end{proof}

\subsection{Product $\cS_1\otimes\cS_2$}
\label{sec-product-odot}
We will now define a different product operation inspired by the following tensor product operation on codes.

\begin{define}
Given two codes $C_1\subset \Sigma^{n_1}$ and $C_2 \subset \Sigma^{n_2}$, the tensor code $C_1\otimes C_2 \subset \Sigma^{n_1\times n_2}$ is defined as the set of all $n_1 \times n_2$ matrices over $\Sigma$ where each column is a codeword in $C_1$ and each row is a codeword in $C_2$. 
\end{define}

If $C_1$ and $C_2$ are linear codes over some field $\F$ i.e. they are linear subspaces of $\F^{n_1}$ and $\F^{n_2}$ respectively, then the tensor code $C_1\otimes C_2 \subset \F^{n_1\times n_2}$ is exactly the tensor product of the subspaces $C_1\otimes C_2$.


We will now define a product operation on spanoids which mimics the above operation on codes.
\begin{define}
Let $\cS_1$ and $\cS_2$ be spanoids on $X_1,X_2$ respectively. The product $\cS_1\otimes\cS_2$ is a spanoid on $X_1\times X_2$ generated by the following inference rules:
\begin{enumerate}
\item For $A\subset X_1$, $i\in X_1$, if $A\spans i$ in $\cS_1$ then for every $j\in X_2$, $A\times \{j\} \spans (i,j)$.
\item For $B\subset X_2$, $j\in X_2$, if $B\spans j$ in $\cS_2$ then for every $i\in X_1$, $\{i\}\times B\spans (i,j)$.
\end{enumerate} 
\end{define}

The following claim follows easily from the above definitions.
\begin{claim}
Let $C_1$, $C_2$ be codes consistent with $\cS_1$, $\cS_2$ respectively. Then $C_1\otimes C_2$ is consistent with $\cS_1\otimes\cS_2$.
\end{claim}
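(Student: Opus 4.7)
The plan is to reduce consistency of $C_1 \otimes C_2$ with the full spanoid $\cS_1 \otimes \cS_2$ to checking only the two families of generating rules listed in the definition, and then invoke the general (essentially formal) principle that consistency propagates through derivations. The propagation step works as follows: if, for each generating rule $(S,k)$, there is a fixed function $f_{S,k}$ with $c_k = f_{S,k}(c|_S)$ for every $c \in C_1 \otimes C_2$, then for any derivation $T_0, T_1, \dots, T_r$ certifying $T_0 \spans k$ in $\cS_1 \otimes \cS_2$, we obtain a determining function for the derived rule by composing the $f$'s used in the derivation in order. The composition depends only on the derivation, not on the codeword, which is exactly what the definition of consistency asks for.

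For the first generating family, assume $A \spans i$ in $\cS_1$. Consistency of $C_1$ yields a fixed $f_{A,i}: \Sigma^A \to \Sigma$ with $c_i = f_{A,i}(c|_A)$ for every $c \in C_1$. For any matrix $M \in C_1 \otimes C_2$ and any fixed $j \in X_2$, the $j$-th column of $M$ lies in $C_1$ by the definition of the tensor code, so
\[
M_{i,j} \;=\; f_{A,i}\bigl(M|_{A \times \{j\}}\bigr),
\]
which is exactly the statement that $A \times \{j\} \spans (i,j)$ is satisfied by $C_1 \otimes C_2$ with the determining function $f_{A,i}$. The second family $\{i\} \times B \spans (i,j)$ is handled symmetrically, using that every row of $M$ lies in $C_2$ and applying the corresponding determining function from $C_2$'s consistency with $\cS_2$.

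I do not anticipate a real obstacle: the tensor code is defined precisely so that rows belong to $C_2$ and columns belong to $C_1$, which is exactly the structural content needed for the two generating rule types. The only subtlety worth flagging is making sure that the composed functions produced in the propagation step are independent of the codeword, but this is immediate since each $f_{S,k}$ is fixed and composition of fixed functions is a fixed function. Combining the direct verification of the two generating families with the propagation principle yields consistency of $C_1 \otimes C_2$ with all of $\cS_1 \otimes \cS_2$, completing the proof.
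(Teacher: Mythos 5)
Your proposal is correct and matches the paper's intent exactly: the paper states only that the claim ``follows easily from the above definitions,'' and the content of that remark is precisely your direct verification that columns of a tensor codeword lie in $C_1$ (handling rules $A\times\{j\}\spans (i,j)$) and rows lie in $C_2$ (handling rules $\{i\}\times B\spans (i,j)$), together with the routine propagation of consistency along derivations. Nothing further is needed.
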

How does $\cS_1\otimes\cS_2$ compare with $\cS_1 \odot \cS_2$? The following claim shows that $\cS_1\otimes\cS_2$ has more open sets (or closed sets) than $\cS_1 \odot \cS_2$. This shows that all the rank measures are smaller for $\cS_1\odot \cS_2$ than for $\cS_1\otimes\cS_2$.

\begin{claim}
$\cO_{\cS_1\odot \cS_2} \subset \cO_{\cS_1\otimes\cS_2}$. 
\end{claim}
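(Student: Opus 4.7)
The plan is a two-stage reduction. First, observe that by definition every $U \in \cO_{\cS_1 \odot \cS_2}$ is a union of ``basic'' products $A \times B$ with $A \in \cO_{\cS_1}$ and $B \in \cO_{\cS_2}$. Since Claim~\ref{cla-flatsinter} guarantees that $\cO_{\cS_1 \otimes \cS_2}$ is closed under arbitrary unions, it suffices to prove the single-product version: each $A \times B$ with $A \in \cO_{\cS_1}$ and $B \in \cO_{\cS_2}$ lies in $\cO_{\cS_1 \otimes \cS_2}$.

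Passing to complements, writing $F_1 = X_1 \setminus A$ and $F_2 = X_2 \setminus B$ (both closed in their respective spanoids), the single-product statement becomes: the set
\[ F \;:=\; (F_1 \times X_2) \cup (X_1 \times F_2) \]
is closed in $\cS_1 \otimes \cS_2$. To check closure it is enough to verify that $F$ is stable under each of the two types of generating inference rules of $\cS_1 \otimes \cS_2$, since closure under the base rules propagates to closure under all derivations. For a ``row'' rule $A' \times \{j\} \to (i,j)$ arising from $A' \spans_{\cS_1} i$ with $A' \times \{j\} \subseteq F$, I would split on whether $j \in F_2$: if $j \in F_2$ then $(i,j) \in X_1 \times F_2 \subseteq F$ immediately, while if $j \notin F_2$ the hypothesis forces every $(a,j)$ with $a \in A'$ to lie in $F_1 \times X_2$, hence $A' \subseteq F_1$; closure of $F_1$ in $\cS_1$ then yields $i \in F_1$, so $(i,j) \in F_1 \times X_2 \subseteq F$. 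The ``column'' rule $\{i\} \times B' \to (i,j)$ is handled identically by swapping the roles of the two coordinates.

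I do not expect a real obstacle here; once the reduction to a single product is in place the verification is an essentially forced case analysis. The one subtle point worth flagging is the implicit appeal to the fact that a subset of $X_1 \times X_2$ which is stable under the generating rules of $\cS_1 \otimes \cS_2$ is stable under all derivations built from them, and is therefore closed in the sense of Definition~\ref{def-closed-open-set}; this is immediate from the inductive definition of derivation.
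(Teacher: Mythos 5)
Your proof is correct and takes essentially the same route as the paper: reduce to a single product $A\times B$ via union-closedness of the open sets of $\cS_1\otimes\cS_2$, pass to complements, and verify that $(A^c\times X_2)\cup(X_1\times B^c)$ is stable under the two types of generating rules. (In fact your complement computation is the accurate one --- the paper's displayed set $(A\times B^c)\cup(A^c\times B)$ omits the piece $A^c\times B^c$, an apparent typo --- and your case analysis just fills in what the paper dismisses as ``easy to see given the inference rules''.)
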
 
\begin{proof}
It is enough to show that if $A\in \cO_{\cS_1}$ and $B\in \cO_{\cS_2}$ then $A\times B \in \cO_{\cS_1\otimes\cS_2}$. This is equivalent to showing $(A\times B^c) \cup (A^c \times B)$ being closed in $\cS_1\otimes\cS_2$, which is easy to see given the inference rules.

\end{proof}

\subsection{Product $\cS_1\ltimes \cS_2$}
\label{sec-semidirect-product}
  In this subsection, we define a product operation called the semi-direct product, denoted by $\cS_1\ltimes \cS_2$. Under this product we will show that $\rank$ is multiplicative and $\frank$ is sub-multiplicative. Thus by taking repeated semi-direct product of the pentagon spanoid $\Pi_5$ with itself, we create a spanoid with polynomial gap between its $\rank$ and $\frank$ which proves Theorem~\ref{thm-FrkvsRk}. Additionally, we can also show that $\LPentropy$ is sub-multiplicative under this product which gives a spanoid with polynomial gap between $\LPentropy$ and $\rank$. We will begin with formal definition of semi-direct product.
  \begin{define}[Semi-direct product of spanoids] \label{def-SemiProd}
  Let $\cS_1$, $\cS_2$ be two spanoids on $X_1$, $X_2$ resp.
  Define $\cS_1 \lt \cS_2$ to be the spanoid on $X_1 \times X_2$ generated by the following rules.
  \begin{enumerate}
  \item For $A\subset X_1$, $i\in X_1$, if $A\models i$ in $\cS_1$ then for every $j\in X_2$, $A\times X_2 \models (i,j)$.
  \item For $B\subset X_2$, $j\in X_2$, if $B\models j$ in $\cS_2$ then for every $i\in X_1$, $\{i\}\times B \models (i,j)$.
  \end{enumerate}
  \end{define}
  Note how $\cS_1\ltimes \cS_2$ differs from $\cS_1\otimes\cS_2$ in (1). The semi-direct product is not a symmetric product i.e. $\cS_1\ltimes \cS_2$ may not be isomorphic to $\cS_2 \ltimes \cS_1$. We will first show that $\rank$ is multiplicative under this product.
  \begin{lem} \label{lem-RkLbd}
  $\rank(\cS_1 \lt \cS_2) = \rank(\cS_1) \rank(\cS_2)$.
  \end{lem}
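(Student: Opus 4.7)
The plan is to establish both inequalities $\rank(\cS_1 \ltimes \cS_2) \le \rank(\cS_1)\rank(\cS_2)$ and $\rank(\cS_1 \ltimes \cS_2) \ge \rank(\cS_1)\rank(\cS_2)$ separately. The upper bound is the routine direction: take a minimum spanning set $R_1 \subseteq X_1$ in $\cS_1$ and $R_2 \subseteq X_2$ in $\cS_2$, and verify that $R_1 \times R_2$ spans $X_1 \times X_2$ in $\cS_1 \ltimes \cS_2$. First, for each fixed $i \in R_1$, the type-2 rules applied within the row $\{i\} \times X_2$ derive $\{i\} \times X_2$ from $\{i\} \times R_2$, since $R_2 \models_{\cS_2} X_2$. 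This yields $R_1 \times X_2$ as a derived set. Then, for any target $(i',j')$, the relation $R_1 \models_{\cS_1} i'$ together with the type-1 rule gives $R_1 \times X_2 \models (i',j')$. Hence $R_1 \times R_2$ spans everything and $\rank(\cS_1 \ltimes \cS_2) \le |R_1|\,|R_2|$.

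For the lower bound, given any spanning set $T \subseteq X_1 \times X_2$, write $T_i = \{j : (i,j) \in T\}$ and define the set of ``fully recoverable rows''
\[
I := \{i \in X_1 : T_i \models_{\cS_2} X_2\}.
\]
The plan is to prove the key claim $I \models_{\cS_1} X_1$. Once this is established, we immediately conclude $|I| \ge \rank(\cS_1)$ and, since each $i \in I$ has $|T_i| \ge \rank(\cS_2)$, the estimate $|T| \ge \sum_{i \in I}|T_i| \ge \rank(\cS_1)\rank(\cS_2)$ follows. This reduction is the conceptual content: the ``$\cS_1$-structure'' of $T$ is captured by which rows are internally $\cS_2$-spanning, and that set must itself span $\cS_1$.

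To prove the key claim, I would analyze a derivation $T = T_0, T_1, \dots, T_r = X_1 \times X_2$ of $T$ spanning the whole ground set, and establish by induction on $k$ the invariant
\[
\{i \in X_1 : (T_k)_i \models_{\cS_2} X_2\} \subseteq \spn_{\cS_1}(I),
\]
where $(T_k)_i := \{j : (i,j) \in T_k\}$. The base case $k = 0$ is immediate. For the inductive step, consider the rule used to pass from $T_{k-1}$ to $T_k = T_{k-1} \cup \{(i^*,j^*)\}$. If the rule is of type 2 with witness $\{i^*\} \times B \subseteq T_{k-1}$ and $B \models_{\cS_2} j^*$, then $(T_{k-1})_{i^*} \models_{\cS_2} (T_k)_{i^*}$, so by transitivity of span the set of rows with $(\cdot)_i \models_{\cS_2} X_2$ does not change. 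If the rule is of type 1 with witness $A \times X_2 \subseteq T_{k-1}$ and $A \models_{\cS_1} i^*$, then for each $a \in A$ we have $(T_{k-1})_a = X_2$, hence $a \in \spn_{\cS_1}(I)$ by induction; then $A \subseteq \spn_{\cS_1}(I)$ and $A \models_{\cS_1} i^*$ force $i^* \in \spn_{\cS_1}(I)$. Applying the invariant at $k = r$, since $(T_r)_i = X_2$ for every $i$, we obtain $X_1 \subseteq \spn_{\cS_1}(I)$, as required.

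The main (modest) obstacle is the case-2 step of the induction; one must be careful that a type-2 rule can enlarge a row $(T_{k-1})_{i^*}$ to $(T_k)_{i^*} = (T_{k-1})_{i^*} \cup \{j^*\}$ in a way that might \emph{appear} to turn a non-$\cS_2$-spanning row into a spanning one. The resolution is to use the transitivity of $\models_{\cS_2}$: since $(T_{k-1})_{i^*}$ already spans $j^*$, it spans everything that $(T_k)_{i^*}$ spans, so the indicator ``$(\cdot)_i \models_{\cS_2} X_2$'' is unchanged by type-2 moves. This is the one place where one might be tempted to think $I$ could ``grow'' nontrivially under type-2 applications, and getting this point right is the entire content of the lower bound.
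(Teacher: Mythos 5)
Your proof is correct. The upper bound and the overall decomposition of the lower bound are the same as the paper's: both arguments boil down to showing that the set $I$ of rows $i$ with $T_i \models_{\cS_2} X_2$ must satisfy $I \models_{\cS_1} X_1$, and that each such row costs at least $\rank(\cS_2)$ elements of $T$. Where you differ is in how the key claim $I \models_{\cS_1} X_1$ is justified. The paper reorders the derivation: it first groups the type-1 steps into whole-row additions, then argues (via a swap claim stated without detailed proof) that all $\cS_2$-derivations can be moved before all $\cS_1$-derivations, and reads off the conclusion from the two phases. You instead run an induction along the original, unmodified derivation, maintaining the invariant $\{i : (T_k)_i \models_{\cS_2} X_2\} \subseteq \spn_{\cS_1}(I)$; the observation that a type-2 step cannot change whether a row is $\cS_2$-spanning (because the witness $B$ already spans the new element $j^*$, so $\spn_{\cS_2}((T_k)_{i^*}) = \spn_{\cS_2}((T_{k-1})_{i^*})$) plays exactly the role of the paper's swap lemma. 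Your version is somewhat more self-contained, since it avoids the "it is not hard to see that we can swap" step and never needs to modify the derivation; the paper's version is shorter and arguably more intuitive as a picture ("do all the $\cS_2$ work first"). Both are valid proofs of the same statement.
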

  \begin{proof}
  If $U_1$ is a generating set for $\cS_1$, and $U_2$ is a generating set for $\cS_2$, then $U_1\times U_2$ is a generating set for $\cS_1 \times \cS_2$.
  So $\rank(\cS_1 \lt \cS_2) \le \rank(\cS_1) \rank(\cS_2)$. 
  We only need to prove that $\rank(\cS_1 \lt \cS_2) \ge \rank(\cS_1) \rank(\cS_2)$. The intuition behind this is that, in $\cS_1\ltimes\cS_2$, we can wlog assume that all the $\cS_2$-derivations (i.e.~rules (2) in Definition~\ref{def-SemiProd}) are done before all the $\cS_1$-derivations (i.e.~rules (1) in Definition~\ref{def-SemiProd}). We will write this more formally below.
  
  Let $T$ be a generating set for $\cS_1 \times \cS_2$. We would like to prove that $|T| \ge \rank(\cS_1) \rank(\cS_2)$.
  Let $T_0 = T, T_1, \ldots, T_r = T_1 \times T_2$, and $A_1,\ldots, A_r$ be such that for all $k\in [r]$, we have $T_{k-1} \subsetneq T_k$, $A_k \subset T_{k-1}$, and one of the following is true:
  \begin{enumerate}
  \item $T_k = T_{k-1} \cup (\{i\} \times X_2)$ for some $i\in X_1$, $A_k = A\times X_2$ for some $A\subset X_1$, and $A\models i$ in $\cS_1$.
  \item We are not in Case (1), and $T_k = T_{k-1} \cup \{(i,j)\}$ for some $(i,j)\in X_1 \times X_2$, $A_k = \{i\} \times B$ for some $B\subset X_2$, and $B\models j$ in $\cS_2$.
  \end{enumerate}
If Case (1) is true, we say step $k$ is a $\cS_1$-derivation. If Case (2) is true, we say step $k$ is a $\cS_2$-derivation.
\begin{claim}
 We can choose $T_0,\ldots,T_r$, $A_1,\ldots,A_r$ such that there exists an integer $l$ for which
  \begin{enumerate}
  \item for all $k\le l$, step $k$ is a $\cS_2$-derivation;
  \item for all $k\ge l+1$, step $k$ is a $\cS_1$-derivation.
  \end{enumerate}
\end{claim}  
\begin{proof}
 Suppose there exists some $k$ such that step $k$ is a $\cS_1$-derivation and step $k+1$ is a $\cS_2$-derivation. It is not hard to see that we can swap step $k$ and step $k+1$. Repeatedly applying this until no such $k$ exists, and we get the desired sequences.
  \end{proof}
  Now we return to the proof of $|T| \ge \rank(\cS_1) \rank(\cS_2)$.
  Let $l$ be the integer in the above claim.
  Because steps $k \ge l+1$ are all of type (1), we have
  $$|\{i : \{i\} \times X_2 \subset T_l\}| \ge \rank(\cS_1).$$
  Because steps $k\le l$ are all of type (2), for each $i$ such that $\{i\} \times X_2 \subset T_l$, we have
  $$|T \cap (\{i\} \times X_2)| \ge \rank(\cS_2).$$
  Combining the two inequalities we get $|T| \ge \rank(\cS_1) \rank(\cS_2)$.
  \end{proof}

We will now show that $\frank$ is sub-multiplicative under semi-direct product.
  \begin{lem}\label{lem-FrkUbd}
  $\frank(\cS_1 \lt \cS_2) \le \frank(\cS_1) \frank(\cS_2)$.
  \end{lem}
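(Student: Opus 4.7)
The plan is to bound the dimension of any consistent code by factoring it into a $\cS_2$-contribution along ``rows'' and a $\cS_1$-contribution along ``columns,'' mirroring the two families of inference rules in Definition~\ref{def-SemiProd}.

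Let $C\subset \Sigma^{X_1\times X_2}$ be any code consistent with $\cS_1\ltimes \cS_2$; the goal is to show $\dim(C)\le \frank(\cS_1)\frank(\cS_2)$. For each $i\in X_1$, I would first consider the ``row projection'' $C_i = \{c|_{\{i\}\times X_2}:c\in C\}\subset \Sigma^{X_2}$. Rule~(2) of the semi-direct product says that whenever $B\spans j$ in $\cS_2$ we have $\{i\}\times B\spans (i,j)$ in $\cS_1\ltimes \cS_2$, so in every codeword of $C$ the entry at $(i,j)$ is determined by $c|_{\{i\}\times B}$. Thus each $C_i$ is, on its own, a code over the alphabet $\Sigma$ consistent with $\cS_2$, and hence $\log|C_i|\le \frank(\cS_2)\cdot \log|\Sigma|$.

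Next, I would re-encode $C$ over a single new finite alphabet $\Gamma$ of size $\max_i|C_i|$. Fix injections $\varphi_i:C_i\hookrightarrow \Gamma$ and define $D = \{(\varphi_i(c|_{\{i\}\times X_2}))_{i\in X_1}:c\in C\}\subset \Gamma^{X_1}$. Clearly $|D|=|C|$ (different codewords give different row tuples, hence different images under the $\varphi_i$'s). I claim $D$ is consistent with $\cS_1$: if $A\spans i$ in $\cS_1$, then rule~(1) of the semi-direct product ensures $A\times X_2\spans (i,j)$ for every $j\in X_2$, which means the entire row $c|_{\{i\}\times X_2}$ is determined by $c|_{A\times X_2}$. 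Given $d|_A\in \Gamma^{A}$ we can invert the $\varphi_a$ to recover $c|_{A\times X_2}$, apply the $\cS_1$-derivation to obtain $c|_{\{i\}\times X_2}$, and then apply $\varphi_i$ to extract $d_i$. By the definition of $\frank$ applied to $D$ we therefore have $\log|D|\le \frank(\cS_1)\cdot \log|\Gamma|$.

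Chaining the two bounds yields
\[
\log|C| \;=\; \log|D| \;\le\; \frank(\cS_1)\cdot \log|\Gamma| \;\le\; \frank(\cS_1)\frank(\cS_2)\cdot \log|\Sigma|,
\]
so $\dim(C)\le \frank(\cS_1)\frank(\cS_2)$; taking the supremum over all consistent $C$ and all alphabets $\Sigma$ gives the lemma. The only genuinely subtle point is that the definition of $\frank$ insists on a single finite alphabet rather than a tuple of per-coordinate alphabets, which is exactly why I must pass to a common $\Gamma$ and inject each $C_i$ into it; once this bookkeeping is done, the argument is essentially a two-step dimension-counting and poses no real obstacle.
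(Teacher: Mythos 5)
Your proof is correct and follows essentially the same route as the paper's: project onto rows to get codes consistent with $\cS_2$, inject them into a common alphabet of size $\max_i|C_i|$, and observe that the resulting code over $X_1$ is consistent with $\cS_1$, then chain the two dimension bounds. Your spelled-out verification that the re-encoded code $D$ is consistent with $\cS_1$ (inverting the injections to recover the rows over $A$) is a detail the paper leaves implicit, but the argument is the same.
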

  \begin{proof}
  Let $\cC \in \Sigma^{X_1\times X_2}$ be a code consistent with $\cS_1 \lt \cS_2$.
  For $i\in X_1$, define code $\cC_2^i \subset \Sigma^{X_2}$ as $\cC_2^i = \{c_{\{i\}\times X_2}: c\in \cC\}.$
  Because $\cC_2^i$ is consistent with $\cS_2$, we have
  ${\log |\cC_2^i|}/{\log |\Sigma|} \le \frank(\cS_2).$
  Let $N=\max_{i\in X_1} |\cC_2^i|$ and for each $i\in X_1$, choose an injection $\phi_i : \cC_2^i \inj [N]$.
  Clearly, we have $$\frac{\log N}{\log |\Sigma|} \le \frank(\cS_2).$$
  Define $\cC_1 \subset [N]^{X_1}$ as $\cC_1 = \{(\phi_i(c_{\{i\}\times X_2}))_{i\in X_1} : c\in \cC\}.$
  Then $|\cC_1| = |\cC|$ and $\cC_1$ is a code consistent with $\cS_1$.
  So we have
  $$\frac{\log |\cC_1|}{\log N} \le \frank(\cS_1).$$
  Combining the inequalities, we get
  $$\frac{\log |\cC|}{\log |\Sigma|} \le \frank(\cS_1) \frank(\cS_2).$$
  \end{proof}


We can now prove Theorem~\ref{thm-FrkvsRk}.
  \begin{proof}[Proof of Theorem~\ref{thm-FrkvsRk}]
  Define $\cS_1 = \Pi_5$, and $\cS_i = \Pi_5 \lt \cS_{i-1}$ for $i\ge 2$.
  Then $\cS_n$ is a spanoid on $5^n$ elements.
  By Lemma \ref{lem-FrkUbd}, $$\frank(\cS_n) \le \frank(\Pi_5)^n = 2.5^n.$$
  By Lemma \ref{lem-RkLbd}, $$\rank(\cS_n) = \rank(\Pi_5)^n = 3^n.$$
  So $$\rank(\cS_n) \ge (5^n)^{\log_5 3 - \log_5 2.5} \frank(\cS_n).$$
  \end{proof}
  
We also show that $\LPentropy$ is sub-multiplicative under semi-direct product.
  \begin{lem} \label{lem-LPeUbd}
  $\LPentropy(\cS_1 \lt \cS_2) \le \LPentropy(\cS_1) \LPentropy(\cS_2)$.
  \end{lem}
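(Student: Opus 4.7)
The plan is to mimic the structure of the proof of Lemma~\ref{lem-FrkUbd}, replacing codes/entropies with feasible solutions of the $\LPentropy$ LP. Given any feasible $f : 2^{X_1 \times X_2} \to \R_{\ge 0}$ for $\LPentropy(\cS_1 \lt \cS_2)$ attaining value $f(X_1 \times X_2)$, I will extract (i) for each $i \in X_1$, a ``fiber'' function $f_2^i$ feasible for $\LPentropy(\cS_2)$, and (ii) an ``aggregate'' function $f_1$ on $X_1$, properly normalized, that is feasible for $\LPentropy(\cS_1)$.

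For the fibers, define $f_2^i(B) := f(\{i\} \times B)$ for $B \subseteq X_2$. The normalization conditions $f_2^i(\emptyset) = 0$ and $f_2^i(\{j\}) \le 1$ as well as monotonicity and submodularity are inherited from $f$ applied to sets of the form $\{i\} \times B$, since $\{i\} \times (A \cup B) = (\{i\} \times A) \cup (\{i\} \times B)$ and similarly for intersections. The key inference constraint uses rule~(2) of Definition~\ref{def-SemiProd}: whenever $B \spans_{\cS_2} j$, we have $\{i\} \times B \spans_{\cS_1 \lt \cS_2} (i,j)$, so $f(\{i\} \times B \cup \{(i,j)\}) = f(\{i\} \times B)$, i.e.\ $f_2^i(B \cup \{j\}) = f_2^i(B)$. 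Hence $f_2^i(X_2) \le \LPentropy(\cS_2) =: L_2$ for every $i$.

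For the aggregate, set $f_1(A) := f(A \times X_2)/L_2$ for $A \subseteq X_1$. Monotonicity and submodularity follow from the identities $(A \cup B) \times X_2 = (A \times X_2) \cup (B \times X_2)$ and $(A \cap B) \times X_2 = (A \times X_2) \cap (B \times X_2)$. The normalization $f_1(\{i\}) \le 1$ is exactly the bound $f(\{i\} \times X_2) = f_2^i(X_2) \le L_2$ proved in the previous step, which is the point of dividing by $L_2$. The subtle step, which I expect to be the main (though minor) technical point, is the inference constraint: if $A \spans_{\cS_1} i$, then rule~(1) of Definition~\ref{def-SemiProd} gives $A \times X_2 \spans_{\cS_1 \lt \cS_2} (i,j)$ for every $j \in X_2$. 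I will use the $\LPentropy$ constraint for $f$ to add these spanned elements one at a time, noting that after each addition the enlarged set still spans the remaining $(i,j')$ by the implicit monotonicity of spanoids. Iterating through all $j \in X_2$ yields $f((A \cup \{i\}) \times X_2) = f(A \times X_2)$, i.e.\ $f_1(A \cup \{i\}) = f_1(A)$.

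Having verified that $f_1$ is feasible for $\LPentropy(\cS_1)$, I conclude
\[
f(X_1 \times X_2) \;=\; L_2 \cdot f_1(X_1) \;\le\; L_2 \cdot \LPentropy(\cS_1) \;=\; \LPentropy(\cS_1)\,\LPentropy(\cS_2),
\]
and taking the supremum over feasible $f$ gives $\LPentropy(\cS_1 \lt \cS_2) \le \LPentropy(\cS_1)\,\LPentropy(\cS_2)$, as desired.
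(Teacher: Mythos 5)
Your proof is correct and follows essentially the same route as the paper: normalize $f_1(A) = f(A\times X_2)/\LPentropy(\cS_2)$, verify feasibility for $\LPentropy(\cS_1)$ using the fiber functions $f_2^i$ and the two product rules, and conclude. Your explicit one-element-at-a-time handling of the constraint $f((A\cup\{i\})\times X_2)=f(A\times X_2)$ is a slightly more careful rendering of a step the paper's proof states directly, but the argument is the same.
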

  \begin{proof}
Recall the linear program in Equation~(\ref{eq-LPentropy}) that defines $\LPentropy$. 
  Let $f: 2^{X_1 \times X_2} \to \R_{\ge 0}$ be the optimal solution of the linear program which computes $\LPentropy(\cS_1 \lt \cS_2)$, and so $f(X_1\times X_2)=\LPentropy(\cS_1\lt \cS_2)$. Define $f_1 : 2^{X_1} \to \R_{\ge 0}$ as $$f_1(A) = \frac{f(A\times X_2)}{\LPentropy(\cS_2)}.$$
  We claim that $f_1$ satisfies the linear program for $\cS_1$. We will check all the feasibility conditions of the LP given in Equation~(\ref{eq-LPentropy}).
  \begin{enumerate}
  \item $$f_1(\es) = \frac {f(\es)}{\LPentropy(\cS_2)} = 0.$$
  \item For all $i\in X_1$, the function $f_2^i : 2^{X_2} \to \R_{\ge 0}$, defined as $f_2^i(B) = f(\{i\}\times B)$, is a feasible solution for the LP which computes $\LPentropy(\cS_2)$, which is a maximization LP. So
$$f_1(\{i\}) = \frac{f_2^i(X_2)}{\LPentropy(\cS_2)}\le 1.$$
  \item For all $A, B\subset X_1$, we have
  \begin{align*}
  & f_1(A\cup B) + f_1(A\cap B) \\
  &= \frac 1{\LPentropy(\cS_2)} (f((A\cup B) \times X_2) + f((A\cap B) \times X_2)) \\
  &= \frac 1{\LPentropy(\cS_2)} (f((A\times X_2) \cup (B\times X_2)) + f((A\times X_2) \cap (B\times X_2))) \\
  &\le \frac 1{\LPentropy(\cS_2)} (f(A\times X_2) + f(B\times X_2)) \tag{submodularity of $f$}\\
  & = f_1(A) + f_1(B).
  \end{align*}
  \item For $A\subset B\subset X_1$, we have
  \begin{align*}
  f_1(A) = \frac{f(A\times X_2)}{\LPentropy(\cS_2)} \le \frac{f(B\times X_2)}{\LPentropy(\cS_2)} = f_1(B).
  \end{align*}
  \item Let $A\models i$ in $\cS_1$. Then $A\times X_2 \models (i,j)$ for all $j\in X_2$.
  So \begin{align*}
  f_1(A\cup \{i\}) &= \frac 1{\LPentropy(\cS_2)} f((A\cup \{i\}) \times X_2) \\
  &= \frac 1{\LPentropy(\cS_2)}f((A\times X_2) \cup (\{i\} \times X_2)) \\
  & = \frac 1{\LPentropy(\cS_2)}f(A\times X_2) \\
  & = f_1(A).
  \end{align*}
  \end{enumerate}
  So $f_1$ is a feasible solution for the linear program which computes $\LPentropy(\cS_1)$, which is a maximization LP.
Therefore $f_1(X_1) \le \LPentropy(\cS_1).$
So $$\LPentropy(\cS_1\lt \cS_2)=f(X_1\times X_2) = f_1(X_1) \LPentropy(\cS_2) \le \LPentropy(\cS_1) \LPentropy(\cS_2).$$
  \end{proof}

    Thus by repeatedly taking semi-direct product of the pentagon spanoid $\Pi_5$ with itself, we can obtain a polynomial gap between $\LPentropy$ and $\rank$, which is a stronger statement than Theorem~\ref{thm-FrkvsRk} because $\frank(\cS) \le \LPentropy(\cS)$.
  \begin{cor}\label{cor-LPevsRk}
  There exists a spanoid $\cS$ on $n$ elements with $\rank(\cS) \ge n^c \LPentropy(\cS)$
  where $c = \log_5 3 - \log_5 2.5 \ge 0.113$.
  \end{cor}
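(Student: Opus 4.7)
The plan is to essentially replay the proof of Theorem~\ref{thm-FrkvsRk}, but track $\LPentropy$ in place of $\frank$ throughout. All the required ingredients have been assembled in the preceding lemmas, so this should be a short and clean argument.

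First I would define the sequence of spanoids $\cS_1 = \Pi_5$ and $\cS_i = \Pi_5 \ltimes \cS_{i-1}$ for $i \ge 2$, exactly as in the proof of Theorem~\ref{thm-FrkvsRk}. Then $\cS_n$ lives on a ground set of size $5^n$, which plays the role of the parameter $n$ in the statement (so the final spanoid has $n = 5^m$ elements after $m$ iterations).

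Next, I would apply the two multiplicativity results from the semi-direct product section. By Lemma~\ref{lem-RkLbd}, $\rank$ is multiplicative under $\ltimes$, so by induction $\rank(\cS_n) = \rank(\Pi_5)^n = 3^n$. By Lemma~\ref{lem-LPeUbd}, $\LPentropy$ is sub-multiplicative under $\ltimes$, so by induction $\LPentropy(\cS_n) \le \LPentropy(\Pi_5)^n$. By Claim~\ref{cla-pentagon-upper} we have $\LPentropy(\Pi_5) = 2.5$, hence $\LPentropy(\cS_n) \le 2.5^n$.

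Combining the two bounds yields
\[
\frac{\rank(\cS_n)}{\LPentropy(\cS_n)} \ge \frac{3^n}{2.5^n} = \left(\frac{3}{2.5}\right)^n = (5^n)^{\log_5 3 - \log_5 2.5},
\]
and since the ground set has size $N = 5^n$, this gives $\rank(\cS_n) \ge N^c \LPentropy(\cS_n)$ with $c = \log_5 3 - \log_5 2.5 \ge 0.113$, as required.

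There is no real obstacle here: the only substantive inputs are the sub-multiplicativity of $\LPentropy$ under $\ltimes$ (Lemma~\ref{lem-LPeUbd}), the exact multiplicativity of $\rank$ under $\ltimes$ (Lemma~\ref{lem-RkLbd}), and the computation $\LPentropy(\Pi_5) = 2.5 < 3 = \rank(\Pi_5)$ (Claim~\ref{cla-pentagon-upper}). The corollary is strictly stronger than Theorem~\ref{thm-FrkvsRk} because $\frank(\cS) \le \LPentropy(\cS)$, and indeed the proof of Theorem~\ref{thm-FrkvsRk} is nothing but the same tensoring argument carried out one level lower in the hierarchy~\eqref{eqn-ranks-order}.
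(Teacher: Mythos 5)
Your proposal is correct and matches the paper's argument exactly: the paper derives this corollary by iterating the semi-direct product of $\Pi_5$ with itself and invoking Lemma~\ref{lem-RkLbd}, Lemma~\ref{lem-LPeUbd}, and the computation $\LPentropy(\Pi_5)=2.5$ from Claim~\ref{cla-pentagon-upper}, precisely as you do.
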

  

Because $\cS_1\lt \cS_2$ has fewer inferences than $\cS_1\otimes\cS_2$, every closed set in $\cS_1\otimes\cS_2$ is also closed in $\cS_1 \lt \cS_2$ i.e. $\cO_{\cS_1\otimes\cS_2} \subset \cO_{\cS_1 \lt \cS_2}$.
Therefore we have the following relationship between the different products we constructed, 
\begin{equation}
\cO_{\cS_1\odot \cS_2}\subset \cO_{\cS_1\otimes\cS_2} \subset \cO_{\cS_1 \lt \cS_2}.
\end{equation}
By Lemma~\ref{lem-compare-spanoids}, all the rank measures should follow the same order. Therefore, we have the following corollary.

\begin{cor}
For any two spanoids $\cS_1$ and $\cS_2$,
\begin{align*}
\rank(\cS_1\odot \cS_2)\le \rank(\cS_1\otimes\cS_2) &\le \rank(\cS_1\lt \cS_2)=\rank(\cS_1)\cdot\rank(\cS_2)\\
\LPentropy(\cS_1\odot \cS_2)\le \LPentropy(\cS_1\otimes\cS_2) &\le \LPentropy(\cS_1\lt \cS_2)\le\LPentropy(\cS_1)\cdot\LPentropy(\cS_2)\\
\frank(\cS_1\odot \cS_2)\le \frank(\cS_1\otimes\cS_2) &\le \frank(\cS_1\lt \cS_2)\le \frank(\cS_1)\cdot\frank(\cS_2)\\
\LPcover(\cS_1)\cdot\LPcover(\cS_2)=\LPcover(\cS_1\odot \cS_2)&\le \LPcover(\cS_1\otimes\cS_2) \le \LPcover(\cS_1\lt \cS_2).
\end{align*}
\end{cor}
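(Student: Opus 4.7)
The plan is to assemble the corollary by combining two ingredients that are already in place: (i) the chain of open-set inclusions $\cO_{\cS_1\odot \cS_2}\subset \cO_{\cS_1\otimes\cS_2} \subset \cO_{\cS_1 \lt \cS_2}$ stated immediately before the corollary, and (ii) the monotonicity of all four rank measures with respect to open sets proved in Lemma~\ref{lem-compare-spanoids}, together with the individual multiplicativity/submultiplicativity statements already established for the endpoints of each row. No new ideas are needed; the task is just to route the inequalities through the right pair of lemmas in the right order.

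First I would verify (or point to) the two open-set inclusions. The first, $\cO_{\cS_1\odot \cS_2}\subset \cO_{\cS_1\otimes\cS_2}$, follows because every generator $A\times B$ of $\cO_{\cS_1\odot \cS_2}$ with $A\in \cO_{\cS_1}$, $B\in \cO_{\cS_2}$ is also open in $\cS_1\otimes\cS_2$ (this is exactly the content of the claim preceding Section~\ref{sec-semidirect-product}), and $\cO_{\cS_1\otimes\cS_2}$ is union-closed. The second inclusion, $\cO_{\cS_1\otimes\cS_2}\subset \cO_{\cS_1 \lt \cS_2}$, follows because the inference rules of $\cS_1\lt \cS_2$ are a subset of those of $\cS_1\otimes\cS_2$ (rule (1) in the semi-direct product requires the larger precondition $A\times X_2$ instead of $A\times\{j\}$), so $\cS_1\lt \cS_2$ has at least as many closed (equivalently, open) sets. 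These are the facts flagged in the paragraph just before the corollary.

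Second, I would invoke Lemma~\ref{lem-compare-spanoids} four times, once for each of $\rank$, $\frank$, $\LPentropy$, $\LPcover$, applied to both inclusions in the chain. This gives the first two inequalities in each of the four rows of the corollary uniformly and simultaneously.

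Finally, I would close each row by quoting the already-proved factorization for its rightmost term: $\rank(\cS_1\lt \cS_2)=\rank(\cS_1)\rank(\cS_2)$ from Lemma~\ref{lem-RkLbd}; $\LPentropy(\cS_1\lt \cS_2)\le \LPentropy(\cS_1)\LPentropy(\cS_2)$ from Lemma~\ref{lem-LPeUbd}; $\frank(\cS_1\lt \cS_2)\le \frank(\cS_1)\frank(\cS_2)$ from Lemma~\ref{lem-FrkUbd}; and the exact equality $\LPcover(\cS_1\odot \cS_2)=\LPcover(\cS_1)\LPcover(\cS_2)$ from the lemma in Section~\ref{sec-product-otimes} (which sits at the \emph{left} end of the $\LPcover$ row rather than the right, matching the statement of the corollary). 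There is no real obstacle: the corollary is a bookkeeping consequence of the monotonicity lemma and the three semi-direct-product lemmas, and the proof amounts to little more than a one-line citation per inequality.
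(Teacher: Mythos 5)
Your proposal is correct and follows exactly the paper's own route: the paper derives the corollary from the chain $\cO_{\cS_1\odot \cS_2}\subset \cO_{\cS_1\otimes\cS_2} \subset \cO_{\cS_1 \lt \cS_2}$ combined with Lemma~\ref{lem-compare-spanoids}, then closes each row with Lemmas~\ref{lem-RkLbd}, \ref{lem-FrkUbd}, \ref{lem-LPeUbd} and the multiplicativity of $\LPcover$ under $\odot$. No discrepancies to report.
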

Thus $\LPcover$ is super-multiplicative under semi-direct product, while $\rank$ is multiplicative and the other two are sub-multiplicative.
  

\begin{example}
In contrast to the semi-direct product, $\rank$ is not multiplicative under $\cS_1\odot \cS_2$ and $\cS_1\otimes\cS_2$.
\begin{itemize}
\item This example is found by Yinzhan Xu \cite{Xu18}. Let $\cS$ be the spanoid on a four element set $\{1, 2, 3, 4\}$ with the following closed sets (using the equivalence with intersection closed families):
\begin{align*}
\es, \{1\}, \{4\}, \{2, 4\}, \{3, 4\}, \{1, 2, 3, 4\}.
\end{align*}
Clearly $\rank(\cS) = 2$.
On the other hand, the set $\{(1, 1), (2, 3), (3, 2)\}$ generates $\cS_1\otimes\cS_2$.
So $\rank(\cS_1\otimes\cS_2) \le 3 < 4 = \rank(\cS)^2$.
\item In fact, the spanoid $\Pi_5\otimes \Pi_5$ admits a generating set of size $8$, and therefore $\rank(\Pi_5\otimes \Pi_5) \le 8 < 9 = \rank(\Pi_5)^2$. One such generating set is $$\{(1,1), (1,2), (2,1), (2,2), (3,3), (3,4), (4,3), (5,5)\}.$$
\end{itemize}
\end{example}

\section{Conclusion and open problems}\label{sec-conclusion}
\setcounter{footnote}{0}
Our work introduces the abstract notion of a spanoid in the hope that further study of its properties will lead to progress on LCCs and perhaps in other areas. We list below some concrete directions for future work.
\begin{enumerate}

\item We showed that there exist spanoids, called $q$-LCSs, which ``look like" $q$-LCCs and whose rank matches the best known upper bounds. Can we bypass this `barrier' by using additional properties of LCCs? We have at least two examples where this was possible. One is the result of \cite{KerenidisW04} for LCCs over constant size alphabet and the other is the work in \cite{DvirSW14} for linear $3$-LCCs over the real numbers. The bounds of \cite{KerenidisW04} crucially depend on the alphabet having small size and the bounds in \cite{DvirSW14} exploit properties of real numbers.

\item Understanding the possible gap between functional rank and formal rank of a spanoid is a very interesting question. We proved that there can be a polynomial gap. The next challenge is to find a spanoid on $n$ elements whose $\frank$ is $n^{o(1)}$ and $\rank$ is $n^{\Omega(1)}$. Naturally, $q$-LCSs for constant $q\ge 3$ are plausible candidates for this. If there are no such spanoids, then it would imply the existence of $q$-LCCs of length $n$ and $n^{\Omega_q(1)}$ dimension!\footnote{Possibly over a large alphabet.}

\item Are there general methods (in the spirit of Construction~\ref{const-lprank}, which we show is limited) to achieve high functional rank? 
\item We have constructed spanoid products under which $\rank$ is multiplicative or $\LPcover$ is multiplicative. Can we construct a spanoid product under which $\frank$ is multiplicative?

\item Suppose we start with a functional representation with large alphabet, can we do alphabet reduction without losing too many codewords?

\item We have seen that one way to go past the $\rank$ barrier is to use $\LPentropy$. Can we improve the existing upper bounds on the dimension of $q$-LCCs by upper bounding $\LPentropy$ of $q$-LCSs? Can we use LP duality and construct good feasible solutions to the dual of $\LPentropy$ to prove good upper bounds on $\LPentropy$?
 
\item For a spanoid $\cS$ arising from a matroid, $\LPentropy(\cS)=\rank(\cS)$. This is because the rank function of a matroid is a feasible solution to the LP~(\ref{eq-LPentropy}). Can we separate $\rank$ and $\frank$ for spanoids arising from matroids? One possibility is to use non-Shannon type information inequalities.

\item What are other connections of spanoids to existing theory of set systems, matroids, algebraic equations and other problems described in the introduction?

\end{enumerate}

\subsection*{Acknowledgments}
The second author would like to thank Sumegha Garg for helpful discussions. The third author would like to thank Yury Polyanskiy for helpful discussions.
  



  



  



  
\bibliographystyle{alpha}
\bibliography{refs}

\end{document}